\numberwithin{equation}{section}
\newtheorem{theorem}{Theorem}[section]
\newtheorem{lemma}[theorem]{Lemma}
\newtheorem{proposition}[theorem]{Proposition}
\newtheorem{corollary}[theorem]{Corollary}
\theoremstyle{definition}
\newtheorem{definition}[theorem]{Definition} 
\newtheorem{procedure}[theorem]{Procedure} 
\newtheorem{remark}[theorem]{Remark}
\newtheorem{example}[theorem]{Example}
\begin{document}

%%%%%%%%%%%%%%%%%%%%%%%%%%%%%%%%%%%%%%%%%%%%%%%%%%%%%%%%%%%%%%%%%%%%%

%\title[]
\title{Linear codes over signed graphs}

\thanks{The first and third authors were supported by SNI, Mexico.
Second author was 
supported by a scholarship from CONACYT, Mexico}
%\thanks{*Corresponding author}

\author[J. Mart\'\i nez-Bernal]{Jos\'e Mart\'\i nez-Bernal}
\address{
Departamento de
Matem\'aticas\\
Centro de Investigaci\'on y de Estudios Avanzados del IPN\\
Apartado Postal
14--740 \\
07000 Mexico City, Mexico.
}
\email{jmb@math.cinvestav.mx}

\author[M. A. Valencia-Bucio]{Miguel A. Valencia-Bucio}
\address{
Departamento de
Matem\'aticas\\
Centro de Investigaci\'on y de Estudios
Avanzados del
IPN\\
Apartado Postal
14--740 \\
07000 Mexico City, Mexico
}
\email{mavalencia@math.cinvestav.mx}

\author[R. H. Villarreal]{Rafael H. Villarreal}
\address{
Departamento de
Matem\'aticas\\
Centro de Investigaci\'on y de Estudios
Avanzados del
IPN\\
Apartado Postal
14--740 \\
07000 Mexico City, Mexico
}
\email{vila@math.cinvestav.mx}

\keywords{Generalized Hamming
weight, incidence matrix, linear code, 
signed graph, vector matroid, edge connectivity, frustration index,  
circuit, cycle, regularity, multiplicity.}
\subjclass[2010]{Primary 94B05; Secondary 94C15, 05C40, 05C22; 13P25} 
\begin{abstract} 
We give formulas, in terms of graph theoretical invariants, for the
minimum distance and the generalized Hamming weights of
the linear code generated by the rows of the incidence matrix of 
a signed graph over a finite field, and for those of its dual code.
Then we determine the regularity of the ideals of circuits 
and cocircuits of a signed graph, and prove an algebraic
formula in terms of the multiplicity for the frustration index of an
unbalanced signed graph.    
\end{abstract}

\maketitle 

\section{Introduction}\label{intro-section}

The generalized Hamming weights (GHWs) of a linear code are parameters
of interest in many applications 
\cite{rth-footprint,Pellikaan,Johnsen,olaya,schaathun-willems,tsfasman,Wei,wei-yang,Yang}
and they have been nicely related to  
the graded Betti numbers of the ideal of cocircuits of the matroid
of a linear code \cite{JohVer,Johnsen}, to the nullity
function of the dual matroid of a linear code \cite{Wei}, and to the
enumerative combinatorics of linear codes 
\cite{Britz,Johnsen-Roksvold-Verdure,Klove-1992,MacWilliams-Sloane}.
Because of this, 
their study has attracted considerable attention, 
but determining them is in general a
difficult problem. The notion of generalized Hamming weight was
introduced by Helleseth, Kl{\o}ve and 
Mykkeltveit in \cite{HKM} and was first used systematically by Wei in
\cite{Wei}. 
For convenience we recall this notion. Let $K=\mathbb{F}_q$ be a finite
field and let $C$ be an  
$[m,k]$-\textit{linear
code} of {\it length} $m$ and {\it dimension} $k$, 
that is, $C$ is a linear subspace of $K^m$ with $k=\dim_K(C)$. Let
$1\leq r\leq k$ be an integer.   
Given a linear subspace $D$ of $C$, the \textit{support} of
$D$, denoted $\chi(D)$, is the set of nonzero positions of $D$, that is,
$$
\chi(D):=\{i\,\colon\, \exists\, (a_1,\ldots,a_m)\in D,\, a_i\neq 0\}.
$$
\quad The $r$-th \textit{generalized Hamming weight} of $C$, denoted
$\delta_r(C)$, is given by
$$
\delta_r(C):=\min \{|\chi(D)|\colon D \mbox{ is a subspace of }C,\, \dim_K (D)=r\}.
$$
\quad 
As usual we call the set $\{\delta_1(C), \ldots, \delta_k(C)\}$ the 
\textit{weight hierarchy} of the linear code $C$. 
The $1$st Hamming weight of $C$ is the \textit{minimum
distance} $\delta(C)$ of $C$, that is, one has
$$
\delta_1(C)=\delta(C)=\min\{\omega(\mathbf{x})\colon \mathbf{x} \in C \setminus
\{0\}\},
$$ 
where $\omega(\mathbf{x})$ is the Hamming weight of the vector $\mathbf{x}$,
i.e., the number of non-zero entries of $\mathbf{x}$. To determine the
 minimum distance
is essential to find good error-correcting codes
\cite{MacWilliams-Sloane}. 

The notion of generalized Hamming weights for linear codes was extended to
matroids by Britz, Johnsen, Mayhew and 
Shiromoto \cite[p.~332]{BJMS} as we now explain. 

Let $M$ be a matroid with ground set $E$, rank function $\rho$,
nullity function $\eta$, and let $M^*$ be its dual matroid. 
The $r$-th \textit{generalized Hamming weight} of $M$, denoted
$d_r(M)$, is given by
\begin{equation*}
d_r(M):=\min\{|X|\colon X \subseteq E\mbox{ and }\eta(X)=r\}\ \mbox{ for
}\ 1\leq r\leq \eta(E).
\end{equation*}
\quad A major result of Johnsen and Verdure \cite{JohVer} shows 
that the GHWs of a matroid can be read off the minimal graded free
resolution of the Stanley--Reisner ideal of the independence
complex of the matroid \cite[Theorem~2]{JohVer} (see Theorem~\ref{JV3}).

We can associate to an $[m,k]$-linear code $C$  
the vector matroid $M[A]$ on the ground set $E=\{1,\ldots,m\}$, where $A$
is a generator matrix of $C$. The rank function (resp. nullity) of
$M[A]$ 
is given by 
$\rho(X)=\text{rank}(A_X)$ (resp. $\eta(X)=|X|-\rho(X)$) for $X\subseteq
E$, where $A_X$ is the submatrix of $A$ obtained by picking the
columns indexed by $X$. It
can be verified that the matroid $M[A]$ does not depend on the
generator matrix we choose. We call $M[A]$ the \textit{matroid} of
$C$. If $H$ is a parity check matrix of $C$, then $M[A]^*=M[H]$ and
$M[A]=M[H]^*$.  
By Lemma~\ref{d-delta-relations}, one has
$$
\delta_r(C^\perp)=d_r(M[A])\ \mbox{ for }\ 1\leq r\leq m-k\mbox{ and }
\delta_r(C)=d_r(M[A]^*)\mbox{ for }1\leq r\leq k. 
$$
\quad Thus computing GHWs of vector matroids is equivalent to 
computing those of linear codes.  
This relationship between the GHWs of linear codes and those of
vector matroids is attributed to Wei \cite[Theorem~2]{Wei} (cf.
Theorem~\ref{Wei-theorem}). 
In this work we study GHWs of linear codes defined over
signed graphs, combining the theory of 
GHWs of matroids \cite{Britz-bond-cycle,BJMS,JohVer,Johnsen} and the
combinatorial 
structure of signed-graphic matroids
\cite{Zaslavsky-signed-graphs,biased-graphs-I,Zaslavsky-biased-graphs-II}
that we introduce next.  

A \textit{signed graph} $G_\sigma$ is a pair
$(G,\sigma)$ consisting of a multigraph $G$ with vertex set $V(G)$ and
edge set $E(G)$ (loops and multiple edges
are permitted), and a mapping $\sigma\colon E(G)\rightarrow\{\pm\}$,
that assigns a 
sign to each edge. If no loops or multiple edges are permitted, $G$ is
called a \textit{simple graph} and $G_\sigma$ is called a
\textit{signed simple graph}. In particular, the signed
graph with $\sigma(e)=+$ (resp. $\sigma(e)=-$) for all $e$, denoted 
$G_+$ (resp. $G_{-}$), is called the \textit{positive signed graph} (resp.
\textit{negative signed graph}) on $G$. There are 
more general definitions of signed graphs, where the edge set
includes empty loops and half 
edges, that are essential to represent root systems
\cite{Zaslavsky-signed-graphs}. 

Let $G_\sigma$ be a signed graph. A \textit{cycle} of $G_\sigma$ is a
simple closed 
path in $G$. A cycle with an even number of negative edges is called
\textit{balanced}.  
A signed graph is \textit{balanced} if every cycle is balanced.  
An isolated vertex is regarded as balanced. A \textit{bowtie} 
of $G_\sigma$ is the union of two unbalanced cycles which meet at a single
vertex or the union of two vertex-disjoint unbalanced cycles and a
simple path which meets one cycle at each end and is otherwise
disjoint from them.

A central result of Zaslavsky
\cite[Theorem~5.1]{Zaslavsky-signed-graphs} shows the existence of a matroid
$M(G_\sigma)$ with ground set $E(G)$, called the
\textit{signed-graphic matroid} of $G_\sigma$, 
whose rank function is
\[\rho(X)=|V(G)|-c_0(X)\mbox{ for }X\subseteq E(G),
\]
where $c_0(X)$ is the number of balanced connected components of the
signed subgraph with edge set $X$ and vertex set $V(G)$. The circuits
of $M(G_\sigma)$ are the balanced cycles and the bowties of
$G_\sigma$. The
circuits of $M(G_\sigma)$ are called the
\textit{circuits} of $G_\sigma$. 

If $G_\sigma=G_+$, the signed-graphic matroid $M(G_+)$ is the \textit{graphic
matroid} $M(G)$ of $G$ whose circuits are the cycles of $G$
\cite{oxley,welsh}. If $G_\sigma=G_-$, the signed-graphic matroid
$M(G_-)$ is the \textit{even cycle} matroid
\cite{Zaslavsky-signed-graphs} whose circuits are the even
cycles and the bowties of $G_-$. The
circuits of the matroids $M(G_+)$, $M(G_-)$ and 
those of their dual matroids---as well as the related notion of
an elementary integral vector---occur in coding theory 
\cite{Dankelmann-Key-Rodrigues,sole-zaslavsky}, convex analysis
\cite{Rock}, 
the theory of toric ideals of graphs 
\cite{bermejo-gimenez-simis,graphs,circuitos,bowtie,Vi3,handbook}, 
and in matroid theory \cite{oxley,Simoes-Pereira,Zaslavsky-signed-graphs,Zaslavsky}.

The content of this paper is as follows. In
Section~\ref{matroids-codes} we briefly introduce matroids and 
present some well known results about GHWs of matroids and linear
codes.

In what follows $G_\sigma$ denotes a signed graph with $s$ 
vertices, $m$ edges, $c$ connected components, 
and $c_0$ balanced components, and $K$ denotes a finite field
$\mathbb{F}_q$ of 
characteristic
$p$. The \textit{incidence matrix code} of
$G_\sigma$ over the field $K$, 
denoted by $C$, is the linear code generated by 
the row vectors of the incidence matrix of
$G_\sigma$ (Definition~\ref{incidence-matrix}). In
Section~\ref{GHW-section} we present our main results on the
generalized Hamming weights 
of incidence matrix codes of signed graphs and those of their dual
codes, and describe the GHWs of the  
signed-graphic matroid of a signed graph and those of its dual
matroid, in terms of the combinatorics of the signed graph.

The \textit{frustration index} of $G_\sigma$, denoted $\varphi(G_\sigma)$, is the
smallest number of edges whose deletion from $G_\sigma$ leaves a
balanced signed graph. The minimum distance of $C$ is
bounded from above by $\varphi(G_\sigma)$ if $p\neq 2$ (Remark~\ref{dec8-18}). 
We are interested in the following related
invariant.   
The $r$-th \textit{cogirth} of $G_\sigma$, denoted
$\upsilon_r(G_\sigma)$, is the
minimum number of edges whose removal results in a signed graph with $r$ 
balanced components. If $r=1$ and $G_\sigma$ is connected, $\upsilon_1(G_\sigma)$ is the 
\textit{cogirth} of $M(G_\sigma)$, that is, the minimum size of a
cocircuit of $M(G_\sigma)$ (Lemma~\ref{jan15-19}). We denote 
$\upsilon_1(G_\sigma)$ simply by $\upsilon(G_\sigma)$. 
The $r$-th \textit{edge connectivity} of $G_\sigma$, denoted
$\lambda_r(G_\sigma)$ or $\lambda_r(G)$, 
is the minimum number of edges whose removal results in a signed graph with
$r+1$ connected components. Note that the $r$-th edge connectivity is a property
of the underlying multigraph $G$, that is, it is independent of
$\sigma$. If $r=1$, $\lambda_1(G_\sigma)$ is the
edge connectivity of $G_\sigma$ and is denoted by
$\lambda(G_\sigma)$. 
We will relate these graph invariants to the generalized Hamming
weights and the minimum distance of incidence matrix codes.

Our main results on linear codes are the following. First, we give
graph theoretical formulas for the generalized Hamming weights of the incidence matrix code
 of a signed graph.

\noindent {\bf Theorem~\ref{pepe-vila-2019}}\textit{\ If $C$ is the
incidence matrix code of a connected signed graph $G_\sigma$, 
then 
$$ 
\delta_r(C)=\begin{cases}
\upsilon_r(G_\sigma)&\text{if } p\neq 2,\, G_\sigma\textit{ is
unbalanced} \textit{ and }1\leq r\leq s,\\
\lambda_r(G)&\text{if }p=2 \textit{ and } 1\leq r\leq s-1,\\
\lambda_r(G)&\text{if }G_\sigma\textit{ is balanced} \textit{ and }1\leq r\leq s-1.
\end{cases}
$$
}
\quad We show that the formulas of  \cite[Corollary~2.13]{linear-codes} for
the generalized Hamming weights of incidence matrix codes of simple
graphs can be extended to multigraphs
(Corollary~\ref{pepe-vila-2018-coro1}). 
Then we show combinatorial formulas for the minimum
distance of the incidence matrix code of a signed graph 
\cite[Proposition~9.2.4]{oxley} (Corollary~\ref{Dankelmann-etal}). 

A family of circuits $\{C_i\}_{i=1}^r$ of a matroid $M$ is called
\textit{non-redundant} if $\bigcup_{i\neq j}C_i\subsetneq\bigcup_{i=1}^rC_i$
for $j=1,\ldots,r$ \cite{Britz-bond-cycle,Brylawski}. Our next result gives graph 
theoretical formulas for the generalized Hamming weights of the
dual code of the incidence matrix code of a signed graph. Part (a)
extends the analogous result for graphs of Britz
\cite[Section~3]{Britz-bond-cycle}.

\noindent {\bf Theorem~\ref{pepe-vila-2019-dual}}\textit{\ 
Let $C$ be the
incidence matrix code of a connected signed graph $G_\sigma$.
\begin{itemize}
\item[(a)] If $p=2$ or $G_\sigma$ is balanced, 
and $1\leq r\leq m-s+1$ $($resp. $1\leq r\leq s-1$$)$, then
$\delta_r(C^\perp)$ $($resp. $\delta_r(C)$$)$ is the minimum number
of edges of $G$ forming a union of $r$
non-redundant cycles $($resp. cocycles$)$ of $G$. 
\item[(b)] If $p\neq 2$ and $1\leq r\leq m-s$ $($resp. $1\leq r\leq s$$)$, then
$\delta_r(C^\perp)$ $($resp. $\delta_r(C)$$)$ is the 
minimum number of edges of $G$ forming a union of $r$
non-redundant balanced cycles and bowties $($resp. cocircuits$)$ of $G_\sigma$. 
\end{itemize}
}
\quad If $C$ is
the incidence matrix code of a connected digraph $\mathcal{D}$ 
and $G$ is its underlying multigraph, we show that 
$\delta_r(C)=\lambda_r(G)$ and give graph
theoretical formulas for the generalized Hamming weights of the dual code $C^\perp$
(Corollary~\ref{pepe-vila-2018-digraph}). For a connected multigraph,
we give formulas for the GHWs of the dual of its incidence matrix 
code (Corollary~\ref{feb4-19}).

The main result of Section~\ref{regularity-section} gives explicit formulas 
for the regularity of the ideals of circuits and
cocircuits 
of the vector matroid of the incidence matrix of a
signed graph (Theorem~\ref{regularity-ideal}). 
This invariant is a measure for the complexity of the minimal graded free
resolution of these ideals and has been used to study polynomial
interpolation problems \cite{eisenbud-syzygies}.

Let $M$ be the matroid of $C$. By 
Theorems~\ref{pepe-vila-2019} and \ref{pepe-vila-2019-dual}, 
one has graph theoretical formulas for the weight hierarchies of $C$ and
$C^\perp$. On the other hand, 
using \textit{Macaulay}$2$ \cite{mac2}, the package \textit{Matroids}
\cite{matroids}, and the formulas of Johnsen and Verdure
(Theorem~\ref{JV3}, Corollary~\ref{dec30-18}), we can
compute the weight hierarchies of $C$ and $C^\perp$. 
Hence, our results can be used to compute the 
$r$-th \textit{cogirth} $\upsilon_r(G_\sigma)$ of
$G_\sigma$ and the $r$-th \textit{edge connectivity}
$\lambda_r(G_\sigma)$ of $G_\sigma$. The main result of
Section~\ref{frustration-index-section} is an algebraic formulation
for the frustration index of $G_\sigma$---in terms of the degree or multiplicity 
of graded ideals---that can be used
to compute or estimate this number using \textit{Macaulay}$2$
\cite{mac2} (Theorem~\ref{vila-zaslavsky}, Example~\ref{example-graph4}). If $G$ is a graph, 
the frustration index of $G_-$ is the \textit{edge biparticity} of $G$, that is, the
minimum number of edges whose removal makes the graph bipartite.
In Section~\ref{examples-section} we illustrate how to use our
results in practice 
with some examples. 

Our main results and their proofs show that the weight hierarchies of the incidence
matrix code $C$ and its dual code $C^\perp$ of a signed graph
$G_\sigma$ can be computed using
the field $\mathbb{Q}$ of rational numbers as the ground field.
To compute the GHWs of $C$ and $C^\perp$ over a finite field
$\mathbb{F}_q$ of characteristic $p$, we use the incidence matrix of
$G_\sigma$ (resp. $G_+$) over the field $\mathbb{Q}$ if $p\neq 2$
(resp. $p=2$). One
can also use the rational numbers to compute the cycles, circuits, and 
cocircuits of a signed graph, as well as its $r$-th cogirth, 
frustration index, and $r$-th edge connectivity. In
Appendix~\ref{procedures-section} we give 
procedures for \textit{Macaulay}$2$ \cite{mac2} that allow us to 
obtain this information for graphs with a small number
of vertices, see the examples of Section~\ref{examples-section}. 
The package
\textit{Matroids} \cite{matroids} plays an important role here 
because it computes the circuits and cocircuits of vector matroids
over the field of rational numbers; however 
the problem of computing all circuits of a
vector matroid is likely to be NP-hard \cite{complexity,Vardy} (cf.
\cite[p.~76]{JohVer}). 
The minimum distance of any linear
code can be computed using SageMath \cite{sage}. For signed simple
graphs one can also compute the minimum distance using
Proposition~\ref{mar10-19} and the algorithms of 
\cite{rth-footprint,hilbert-min-dis}. For methods to calculate higher
weight enumerators of linear codes see \cite{Britz-Britz} and the references therein.  

\section{Matroids and linear codes}\label{matroids-codes}
A \textit{matroid} is a pair $M=(E,\rho)$ where $E$ is a finite set, called the
\textit{ground set} of $M$, and $\rho \colon 2^E\rightarrow
\mathbb{N}_0:=\{0,1,\ldots\}$ is a function, called the \textit{rank
function} of $M$, satisfying:
 \begin{itemize}
 \item[($\mathrm{R}_0$)] $\rho(\emptyset)=0$;
 \item[($\mathrm{R}_1$)] If $X\subseteq E$ and $e\in E$, then
 $\rho(X)\leq\rho(X\cup\{e\})\leq\rho(X)+1$;
 \item[($\mathrm{R}_2$)] If $X\subseteq E$ and $Y\subseteq E$, then 
$\rho(X \cup  Y)+\rho(X \cap Y) \leq \rho(X)+\rho(Y)$.
 \end{itemize}
\quad An \textit{independent set} of a matroid $M$ is subset $X\subseteq E$ such that
$\rho(X)=|X|$. In particular the empty set is an independent set. 
A \textit{base} is a maximal
independent set. A subset of the ground set which is not independent
is called 
\textit{dependent} and a \textit{circuit} of $M$ is a minimal dependent set.
We denote by $\mathcal{C}_M$ the family of all circuits
of $M$. The \textit{rank} of the matroid $M$, denoted $\rho(M)$,  
is $\rho(E)$. The \textit{nullity} of
$X\subseteq E$, denoted $\eta(X)$, is defined by
\[\eta(X):=|X|-\rho(X),\]
and the \textit{nullity} of $M$, denoted $\eta(M)$, is $\eta(E)$.  
 Let $M=(E,\rho)$ be a matroid. Its \textit{dual} is the matroid $M^*=(E,\rho^*)$
with the same ground set $E$ and rank function given by
\[\rho^*(X):=|X|-\rho(E)+\rho(E\backslash X),\quad X\subseteq E,\]
see \cite[p.~72]{oxley}. The nullity function of $M^*$ is denoted by
$\eta^*$. One can verify that $(M^*)^*=M$.

A family of circuits $\{C_i\}_{i=1}^r$ of a matroid $M$ is called
\textit{non-redundant} if $\bigcup_{i\neq j}C_i\subsetneq\bigcup_{i=1}^rC_i$
for $j=1,\ldots,r$ \cite{Britz-bond-cycle,Brylawski}. 
Let $X$ be a subset of the ground set $E$. The
\textit{degree} or \textit{non-redundancy} of $X$ is the maximum number of non-redundant
circuits contained in $X$, and it is denoted by $\deg(X)$. 

\begin{lemma}{\cite[p.~306, Table~A.1(6)]{Brylawski}}\label{JV1}
Let $M=(E,\rho)$ be a matroid,  let $X$ be a subset of $E$, and
let $\eta$ be the nullity function of $M$. Then  $\deg(X)=\eta(X)$.
\end{lemma}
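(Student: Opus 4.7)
The plan is to prove the two inequalities $\deg(X)\leq\eta(X)$ and $\deg(X)\geq\eta(X)$ separately, using fundamental properties of circuits in matroids. The main ingredients will be: (i) every circuit minus one of its elements is independent and has the same closure as the whole circuit, and (ii) nullity is monotone with respect to inclusion (an immediate consequence of axiom $(\mathrm{R}_1)$).

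For the upper bound $\deg(X)\leq\eta(X)$, suppose $\{C_1,\ldots,C_r\}$ is a non-redundant family of circuits contained in $X$, and set $Y=\bigcup_{i=1}^r C_i\subseteq X$. By non-redundancy, for each $j$ I can pick a \emph{private} element $e_j\in C_j\setminus\bigcup_{i\neq j}C_i$; in particular the elements $e_1,\ldots,e_r$ are distinct and $e_i\notin C_j$ whenever $i\neq j$. Let $Y':=Y\setminus\{e_1,\ldots,e_r\}$. For each $j$ the set $C_j\setminus\{e_j\}$ is contained in $Y'$, is independent, and has the same closure as $C_j$ (because $C_j$ is a circuit), so the closure of $Y'$ contains every $C_j$ and hence all of $Y$. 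Therefore $\rho(Y)=\rho(Y')\leq |Y'|=|Y|-r$, which gives $\eta(Y)=|Y|-\rho(Y)\geq r$. Since nullity is monotone under inclusion and $Y\subseteq X$, I conclude $\eta(X)\geq\eta(Y)\geq r$, and taking the maximum over $r$ yields $\eta(X)\geq\deg(X)$.

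For the lower bound $\deg(X)\geq\eta(X)$, set $r=\eta(X)$; the case $r=0$ is trivial since then $X$ is independent and the empty family of circuits is non-redundant. For $r\geq 1$, choose a maximal independent subset (basis) $B$ of $X$, so $|X\setminus B|=r$. For every $e\in X\setminus B$ let $C(e,B)$ denote the \emph{fundamental circuit} of $e$ with respect to $B$, i.e.\ the unique circuit contained in $B\cup\{e\}$. The family $\{C(e,B):e\in X\setminus B\}$ consists of $r$ circuits inside $X$, and it is non-redundant because each $e\in X\setminus B$ lies in $C(e,B)$ but in no other $C(e',B)$ (as $C(e',B)\subseteq B\cup\{e'\}$). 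Hence $\deg(X)\geq r=\eta(X)$.

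The two estimates combine to give $\deg(X)=\eta(X)$. The only delicate point I expect to have to state carefully is the rank-preservation step $\rho(Y)=\rho(Y')$ in the upper-bound argument; everything else reduces to standard facts about bases and fundamental circuits that can be cited from \cite{oxley}.
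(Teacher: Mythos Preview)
Your proof is correct. The paper does not actually give a proof of this lemma; it simply records the statement with a citation to Brylawski's appendix \cite[p.~306, Table~A.1(6)]{Brylawski} and then uses it as a black box. So there is no ``paper's own proof'' to compare against.

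That said, your argument is the standard one and would be perfectly acceptable here. Both halves are clean: the lower bound via fundamental circuits of a basis of $X$ is the canonical construction, and your upper bound correctly exploits the private elements guaranteed by non-redundancy together with the fact that $C_j\setminus\{e_j\}$ spans $C_j$. The only place worth one extra sentence is exactly the step you flagged: to conclude $\rho(Y)=\rho(Y')$ you use that each $C_j\setminus\{e_j\}\subseteq Y'$ (which holds because $e_i\notin C_j$ for $i\neq j$), hence $Y\subseteq{\rm cl}(Y')$, giving $\rho(Y)\leq\rho(Y')$, while the reverse inequality follows from $Y'\subseteq Y$. Everything else is routine and can indeed be cited from \cite{oxley} or \cite{welsh}.
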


\begin{theorem}{\cite[Theorem~2]{Wei}}\label{Wei-theorem}
Let $C$ be an $[m,k]$-linear code and let $M^*$ 
be the dual of the vector matroid of $C$. 
Then, the $r$-th generalized Hamming weight of $C$ is given by 
$$\delta_r(C)=\min\{|X|\colon
X\subseteq E\mbox{ and }\eta^*(X)\geq r\} \mbox{ for }1\leq
r\leq k.
$$
\end{theorem}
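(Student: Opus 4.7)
The plan is to reformulate $\delta_r(C)$ in terms of the dimensions of the shortened subcodes $C_S := \{c \in C : c_i = 0 \text{ for all } i \notin S\}$ and then translate the condition into the language of the dual matroid $M^*$. The key observation is that, for any subspace $D \subseteq C$, one has $\chi(D) \subseteq S$ if and only if $D \subseteq C_S$. From this I would derive the equality
$$
\delta_r(C) \;=\; \min\{|S| : S \subseteq E \text{ and } \dim_K C_S \geq r\}.
$$
The inequality $\geq$ is immediate: for any $D$ of dimension $r$ with $|\chi(D)| = \delta_r(C)$, setting $S := \chi(D)$ gives $D \subseteq C_S$, hence $\dim_K C_S \geq r$. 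For the reverse inequality, given $S$ with $\dim_K C_S \geq r$ one can pick an $r$-dimensional subspace $D \subseteq C_S$, and then $|\chi(D)| \leq |S|$.

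Next I would compute $\dim_K C_S$. Fix a generator matrix $A$ of $C$, so $A$ is a $k \times m$ matrix of rank $k$ and $C$ is its row space. The coordinate projection $\pi \colon C \to K^{E \setminus S}$, $c \mapsto (c_i)_{i \notin S}$, is $K$-linear with kernel $C_S$, and its image is precisely the row space of the submatrix $A_{E \setminus S}$. Therefore $\dim_K(\mathrm{im}\,\pi) = \rho(E \setminus S)$, and rank-nullity gives
$$
\dim_K C_S \;=\; k - \rho(E \setminus S).
$$

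To finish, I would invoke the definition of the dual matroid from the excerpt: since $\rho^*(X) = |X| - \rho(E) + \rho(E \setminus X)$ and $\rho(E) = k$, one has
$$
\eta^*(X) \;=\; |X| - \rho^*(X) \;=\; k - \rho(E \setminus X) \;=\; \dim_K C_X.
$$
Hence the condition $\dim_K C_S \geq r$ is exactly $\eta^*(S) \geq r$, and the formula for $\delta_r(C)$ derived in the first paragraph becomes the one in the theorem. The range $1 \leq r \leq k$ is consistent because $\eta^*(E) = k - \rho(\emptyset) = k$, so the minimum is always taken over a non-empty family. I do not anticipate a serious obstacle: the only step that demands care is identifying the image of the coordinate projection off $S$ with the row space of $A_{E \setminus S}$, which is what bridges the code-theoretic quantity $\dim_K C_S$ and the matroid-theoretic rank function of $M[A]$.
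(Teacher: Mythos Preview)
Your argument is correct. The paper does not supply its own proof of this statement: it is quoted as \cite[Theorem~2]{Wei} and used as a black box, with only the remark afterward that Lemma~\ref{JV1} allows one to replace $\eta^*(X)\geq r$ by $\eta^*(X)=r$. So there is nothing to compare against at the level of proof strategy.

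That said, your write-up is a clean and standard derivation. The identification $\dim_K C_S = k - \rho(E\setminus S) = \eta^*(S)$ via the coordinate projection and rank--nullity is exactly the bridge between the code-theoretic and matroid-theoretic formulations, and your two inequalities for $\delta_r(C) = \min\{|S|:\dim_K C_S\geq r\}$ are handled correctly. One small presentational point: when you say ``the inequality $\geq$ is immediate,'' you mean $\delta_r(C)\geq\min\{\cdots\}$, which your argument indeed establishes; it may be worth making the direction explicit to avoid momentary confusion.
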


By Lemma~\ref{JV1}, we can replace the inequality $\eta^*(X)\geq r$ 
by $\eta^*(X)=r$. This result suggests how to 
define the generalized Hamming weights of
any matroid $M$.  

\begin{definition}{\cite[p.~332]{BJMS}} Let $M=(E,\rho)$ be a
matroid with nullity function $\eta$. The \textit{generalized Hamming
weights} of $M$ are defined as
\[d_r(M):=\min\{|X|\colon X \subseteq E\mbox{ and }\eta(X)=r\}\ \mbox{ for
}\ 1\leq r\leq \eta(E).\]
\end{definition}

\begin{lemma}\label{d-delta-relations} 
Let $C$ be a linear code of length $m$ and dimension $k$ and let $M$
be its vector matroid. Then $\delta_r(C)=d_r(M^*)$ for $1\leq
r\leq k$ and $\delta_r(C^\perp)=d_r(M)$ 
for $1\leq r\leq m-k$. 
\end{lemma}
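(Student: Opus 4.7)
The plan is to deduce both equalities directly from Wei's theorem (Theorem \ref{Wei-theorem}), the definition of $d_r$, and matroid duality $(M^*)^* = M$; no significant new machinery is needed.

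For the first equality $\delta_r(C) = d_r(M^*)$, I would apply Theorem \ref{Wei-theorem} to $C$ to obtain
\[
\delta_r(C) = \min\{|X| : X \subseteq E,\ \eta^*(X) \geq r\}, \qquad 1 \leq r \leq k,
\]
and then invoke the remark that follows Theorem \ref{Wei-theorem} (which is justified by Lemma \ref{JV1}) to replace the constraint $\eta^*(X) \geq r$ by $\eta^*(X) = r$. The resulting expression is $d_r(M^*)$ by the definition just recalled. The range of $r$ is consistent because $\eta^*(E) = |E| - \rho(M^*) = m - (m - k) = k$.

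For the second equality $\delta_r(C^\perp) = d_r(M)$, I would apply the first equality to $C^\perp$ in place of $C$. The vector matroid of $C^\perp$ is $M[H]$, where $H$ is a parity check matrix of $C$ (equivalently, a generator matrix of $C^\perp$); as recalled in the introduction, $M[H] = M[A]^* = M^*$. Applying the first equality to $C^\perp$ then gives
\[
\delta_r(C^\perp) = d_r\bigl((M^*)^*\bigr) = d_r(M),
\]
using $(M^*)^* = M$. The range $1 \leq r \leq m - k$ matches $\eta(E) = |E| - \rho(M) = m - k$.

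The main (and only) subtlety is the replacement of $\eta^*(X) \geq r$ by $\eta^*(X) = r$ in the first step. This is cited in the paper as a consequence of Lemma \ref{JV1}; alternatively, it follows from monotonicity of the rank function, since if $X_0$ minimizes $|X|$ subject to $\eta^*(X) \geq r$ and $\eta^*(X_0) > r$, then for any $e \in X_0$ one has $\eta^*(X_0 \setminus \{e\}) \geq \eta^*(X_0) - 1 \geq r$, contradicting the minimality of $|X_0|$. Hence a minimizer must satisfy $\eta^*(X_0) = r$, so the two minima agree. Beyond this observation, the proof is essentially an assembly of already-stated facts.
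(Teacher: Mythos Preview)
Your proof is correct and follows essentially the same route as the paper: invoke Wei's theorem together with Lemma~\ref{JV1} to obtain $\delta_r(C)=d_r(M^*)$, and then apply this to $C^\perp$ (whose matroid is $M^*$) and use $(M^*)^*=M$ to get $\delta_r(C^\perp)=d_r(M)$. Your added monotonicity argument for replacing $\eta^*(X)\geq r$ by $\eta^*(X)=r$ is a fine alternative to citing Lemma~\ref{JV1}.
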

\begin{proof} By Lemma~\ref{JV1} and Theorem~\ref{Wei-theorem}, 
we obtain $\delta_r(C)=d_r(M^*)$ for $1\leq r\leq k$. 
The matroid associated to $C^\perp$ is $M^*$. Hence
$\delta_r(C^\perp)=d_r(M)$ 
for $1\leq r\leq m-k$. 
\end{proof}

\begin{theorem}{\rm(\cite[Corollary~1.3]{tohaneanu-matroid},
\cite[Proposition~6]{Johnsen})}\label{formulas-for-GHWs-matroid} 
Let $M=(E,\rho)$ be a matroid and let $\eta$ be its nullity
function. The following hold.
\begin{align*}
&d_r(M^*)=\min\{|X|\colon
X\subseteq E\mbox{ and }\rho(E\setminus X)=\rho(E)-r\} \mbox{ for }1\leq
r\leq \rho(E).\\
&d_r(M)=\textstyle\min\left\{ 
\left|\bigcup_{i=1}^r C_i\right|\colon 
\{C_i\}_{i=1}^r\mbox{ are non-redundant circuits of } M\right\}\mbox{
for }1\leq r\leq \eta(E). 
\end{align*}
\end{theorem}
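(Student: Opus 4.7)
The plan is to prove the two identities separately; each follows from unpacking definitions together with one auxiliary input (the duality formula for $\rho^*$ and Lemma~\ref{JV1} respectively).

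For the formula for $d_r(M^*)$, the starting point is the defining identity $\rho^*(X)=|X|-\rho(E)+\rho(E\setminus X)$, which gives
\[
\eta^*(X)=|X|-\rho^*(X)=\rho(E)-\rho(E\setminus X).
\]
Substituting this into the matroidal definition $d_r(M^*)=\min\{|X|\colon \eta^*(X)=r\}$, the constraint $\eta^*(X)=r$ becomes exactly $\rho(E\setminus X)=\rho(E)-r$, which is the formula claimed. The range $1\leq r\leq\rho(E)$ coincides with the natural range $1\leq r\leq\eta^*(E)$, since $\eta^*(E)=|E|-\rho^*(E)=\rho(E)$.

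For the formula for $d_r(M)$, I would first establish the useful strengthening $d_r(M)=\min\{|X|\colon \eta(X)\geq r\}$. The nontrivial inclusion comes from the standard observation that whenever $\eta(X)>0$, one can pick a circuit $C\subseteq X$, extend $C\setminus\{e\}$ (for any $e\in C$) to a basis $B$ of $X\setminus\{e\}$, and note that $B\cup\{e\}\supseteq C$ is dependent; hence $B$ is also a basis of $X$ and $\eta(X\setminus\{e\})=\eta(X)-1$. Iterating reduces any $X$ with $\eta(X)\geq r$ to a subset of no greater size with nullity exactly $r$. With this in hand, Lemma~\ref{JV1} provides the bridge between nullity and non-redundant circuits.

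It then remains to verify two inequalities. For the $\leq$ direction, I would pick a minimum-size $X$ with $\eta(X)=r$; Lemma~\ref{JV1} produces $r$ non-redundant circuits $C_1,\ldots,C_r$ contained in $X$, so $|\bigcup_i C_i|\leq|X|=d_r(M)$. For the $\geq$ direction, pick non-redundant $\{C_i\}_{i=1}^r$ minimizing $|\bigcup_i C_i|$ and set $X=\bigcup_i C_i$; by Lemma~\ref{JV1} and the non-redundancy hypothesis, $\eta(X)\geq r$, and the strengthening above yields $d_r(M)\leq|X|$. The only step with any real content is the strengthening, which is a standard matroid-theoretic maneuver; I do not expect a genuine obstacle beyond this bookkeeping, since Lemma~\ref{JV1} is doing all of the combinatorial work.
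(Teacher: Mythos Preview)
Your proposal is correct and follows essentially the same approach as the paper: the first identity via the dual rank relation $\eta^*(X)=\rho(E)-\rho(E\setminus X)$, and the second via Lemma~\ref{JV1}. Your treatment is in fact more careful than the paper's terse proof, which simply asserts that Lemma~\ref{JV1} yields the second equality without spelling out the strengthening $d_r(M)=\min\{|X|:\eta(X)\geq r\}$ that you correctly identify and justify.
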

\begin{proof} 
According to 
\cite[Theorem~2, p.~35]{welsh}, one has $\rho(E\setminus
X)=\rho(E)-\eta^*(X)$. Therefore the first equality follows from
\begin{equation*}
d_r(M^*)=\min\{|X|\colon
X\subseteq E\mbox{ and }\eta^*(X)=r\} \mbox{ for }1\leq
r\leq \rho(E).
\end{equation*}
\quad On the other hand, recall that by definition of $d_r(M)$,
one has 
$$
d_r(M)=\min\{|X|\colon
X\subseteq E\mbox{ and }\eta(X)=r\} \mbox{ for }1\leq
r\leq\eta(E).
$$
\quad Therefore, applying Lemma~\ref{JV1}, the second equality
follows.  
\end{proof}

\begin{corollary}\label{formulas-for-GHWs} 
Let $C$ be an $[m,k]$-linear code and let
$M=(E,\rho)$ be the vector matroid of $C$. Then the following
equalities hold:
\begin{align*}
&\delta_r(C)=\min\{|X|\colon
X\subseteq E\mbox{ and }\rho(E\setminus X)=\rho(E)-r\} \mbox{ for }1\leq
r\leq k.\\
&\delta_r(C^\perp)=\textstyle\min\left\{ 
\left|\bigcup_{i=1}^r C_i\right|\colon 
\{C_i\}_{i=1}^r\mbox{ are non-redundant circuits of } M\right\}\mbox{
for }1\leq r\leq m-k. 
\end{align*}
\end{corollary}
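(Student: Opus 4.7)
The plan is to observe that this corollary is the specialization of Theorem~\ref{formulas-for-GHWs-matroid} to the vector matroid $M$ of a linear code $C$, via the dictionary provided by Lemma~\ref{d-delta-relations}. So the proof is essentially a bookkeeping exercise: combine the matroidal formulas for $d_r(M)$ and $d_r(M^*)$ with the identifications $\delta_r(C)=d_r(M^*)$ and $\delta_r(C^\perp)=d_r(M)$.

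First I would record that if $M$ is the vector matroid of an $[m,k]$-linear code $C$, then $\rho(E)=\text{rank}(A)=k$ and therefore $\eta(E)=|E|-\rho(E)=m-k$. This makes the ranges of validity match: Theorem~\ref{formulas-for-GHWs-matroid} gives the first formula for $1\leq r\leq \rho(E)=k$, which is precisely the range in which $\delta_r(C)$ is defined, and it gives the second formula for $1\leq r\leq \eta(E)=m-k$, which is precisely the range in which $\delta_r(C^\perp)$ is defined.

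Next I would invoke Lemma~\ref{d-delta-relations}, which gives $\delta_r(C)=d_r(M^*)$ for $1\leq r\leq k$ and $\delta_r(C^\perp)=d_r(M)$ for $1\leq r\leq m-k$. Substituting the expressions from Theorem~\ref{formulas-for-GHWs-matroid} into the left-hand sides yields exactly the two stated equalities.

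There is really no obstacle here; the only minor point to be careful about is noting that the circuits appearing in the formula for $\delta_r(C^\perp)$ are circuits of $M$ (the matroid of $C$), not of $M^*$, which is consistent because $\delta_r(C^\perp)=d_r(M)$ and Theorem~\ref{formulas-for-GHWs-matroid} describes $d_r$ via non-redundant circuits of the same matroid. Once the ranges of $r$ and the dualities $M \leftrightarrow M^*$, $C \leftrightarrow C^\perp$ are lined up correctly, the corollary falls out immediately.
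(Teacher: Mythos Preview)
Your proposal is correct and follows exactly the same approach as the paper: invoke Lemma~\ref{d-delta-relations} to identify $\delta_r(C)=d_r(M^*)$ and $\delta_r(C^\perp)=d_r(M)$, then apply Theorem~\ref{formulas-for-GHWs-matroid}. Your additional remarks about matching the ranges of $r$ via $\rho(E)=k$ and $\eta(E)=m-k$ are accurate and slightly more explicit than the paper's own terse justification.
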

\begin{proof} 
By Lemma~\ref{d-delta-relations}, 
we obtain $\delta_r(C)=d_r(M^*)$ for $1\leq r\leq k$ and 
$\delta_r(C^\perp)=d_r(M)$ 
for $1\leq r\leq m-k$. Thus the result follows from
Theorem~\ref{formulas-for-GHWs-matroid}.
\end{proof}

The number in the right hand side of the second equality of
Corollary~\ref{formulas-for-GHWs} is the
$r$-th \textit{circuit number} of $M$ and is denoted $\gamma_r(M)$
\cite{Britz-bond-cycle}. The 
$r$-th \textit{cocircuit number} $\beta_r(M)$ is defined similarly.

If $C$ is an $[m,k]$-linear code, then
$\delta_1(C)<\cdots<\delta_k(C)$ \cite{Johnsen-Roksvold-Verdure,Wei}. The following
\textit{duality} theorem of Wei is a classical result in
this area.

\begin{theorem}{\rm(Wei's duality
\cite[Theorem~3]{Wei})}\label{wei-duality} Let $C$ be
an $[m,k]$-linear code. Then 
$$\{\delta_r(C)\,\vert\,r=1,\ldots,k\}=\{1,\ldots,m\}\setminus\{m+1-
\delta_r(C^\perp)\,\vert\, r=1,\ldots,m-k\}.
$$
\end{theorem}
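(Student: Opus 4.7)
The plan is to reduce the identity to a purely matroid-theoretic statement and then encode both weight hierarchies via a single ``rank profile'' function. By Lemma~\ref{d-delta-relations}, setting $M$ to be the vector matroid of $C$ (so that the matroid of $C^\perp$ is $M^*$) reduces the claim to
$$\{d_r(M^*) : 1 \le r \le k\} \;=\; \{1,\ldots,m\}\setminus\{\,m+1-d_s(M) : 1 \le s \le m-k\,\}.$$
I would introduce the rank profile $\mu:\{0,1,\ldots,m\}\to\{0,1,\ldots,k\}$ defined by $\mu(i):=\min\{\rho(Y):Y\subseteq E,\,|Y|=i\}$. Axiom $(\mathrm{R}_1)$, applied by deleting or adjoining a single element to a rank-minimizer of the appropriate size, gives $\mu(0)=0$, $\mu(m)=\rho(E)=k$, and $\mu(i)-\mu(i-1)\in\{0,1\}$. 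Call the $k$ indices where $\mu$ strictly increases \emph{jumps} and the remaining $m-k$ indices \emph{non-jumps}; these partition $\{1,\ldots,m\}$.

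Combining Theorem~\ref{Wei-theorem}, Lemma~\ref{JV1}, and the matroid-duality identity $\eta^*(X)=k-\rho(E\setminus X)$ (immediate from the definition of $\rho^*$), I would rewrite
$$d_r(M^*)=\min\{\,j:\;k-\mu(m-j)\ge r\,\},\qquad d_s(M)=\min\{\,j:\;j-\mu(j)\ge s\,\}.$$
Since $\mu$ has unit steps, the minimum $j$ meeting the first cutoff is attained exactly when $\mu$ jumps at position $m+1-j$, so $\{m+1-d_r(M^*):r=1,\ldots,k\}$ is precisely the set of jumps of $\mu$. Dually, $j-\mu(j)$ counts the non-jumps of $\mu$ among $\{1,\ldots,j\}$, so $d_s(M)$ is the position of the $s$-th non-jump and $\{d_s(M):s=1,\ldots,m-k\}$ is the set of non-jumps of $\mu$.

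Because every index in $\{1,\ldots,m\}$ is either a jump or a non-jump of $\mu$, but not both, the two sets just identified partition $\{1,\ldots,m\}$; applying the involution $j\mapsto m+1-j$ yields the desired identity. The main obstacle is bookkeeping rather than any deep idea: one must track carefully the reflection $j\leftrightarrow m+1-j$ that arises from passing between $X$ and $E\setminus X$ in the two formulas above, and must use the unit-step property of $\mu$ to ensure that the jump/non-jump dichotomy matches the two weight hierarchies bijectively, without any off-by-one shift or overlap.
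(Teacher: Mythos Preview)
Your argument is correct. The rank-profile function $\mu$ is well chosen: the unit-step property follows from axiom $(\mathrm{R}_1)$ exactly as you say, and your two key identifications---that $j\mapsto j-\mu(j)$ counts non-jumps up to $j$, so $d_s(M)$ is the $s$-th non-jump, and that $m+1-d_r(M^*)$ is the position where $\mu$ jumps from $k-r$ to $k-r+1$---both check out. The final involution $j\mapsto m+1-j$ then converts the jump/non-jump partition into the claimed set identity.

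One point of comparison: the paper does not actually prove Theorem~\ref{wei-duality}. It is quoted as Wei's result \cite[Theorem~3]{Wei} and immediately followed by the remark that Britz, Johnsen, Mayhew and Shiromoto \cite[Theorem~5]{BJMS} extended it to arbitrary matroids. Your proof is precisely in that spirit: you first pass through Lemma~\ref{d-delta-relations} to a purely matroid-theoretic statement about $d_r(M)$ and $d_r(M^*)$, and your rank-profile argument never uses that $M$ is a vector matroid. So what you have written is in fact a self-contained proof of the matroid-level duality of \cite{BJMS}, specialized back to codes only at the very end. This is stronger than what the paper needs, and the bookkeeping you flagged (the reflection $j\leftrightarrow m+1-j$ and the unit-step check ensuring the jump sets have the right cardinalities $k$ and $m-k$) is handled cleanly.
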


This result was generalized by  Britz, Johnsen, Mayhew and 
Shiromoto \cite[Theorem~5]{BJMS} from linear codes
to arbitrary matroids.

\section{Generalized Hamming weights over signed graphs}\label{GHW-section}

In this section we present our main results on linear codes. 
To avoid repetitions, we continue to employ
the notations and 
definitions used in Sections~\ref{intro-section} and \ref{matroids-codes}.

A multigraph $G$ consists of a finite set of vertices, $V(G)$, and a
finite multiset of edges, $E(G)$. Edges of $G$ are of two types. A
\textit{link} $e=\{v,w\}$, with two distinct endpoints, $v,w$ in $V(G)$ and a 
\textit{loop}, $e=\{v,v\}$, with two coincident endpoints. 
As $E(G)$ is a multiset, multiple edges are allowed. 
The number of edges of $G$ counted with multiplicity is denoted by
$m=|E(G)|$. A multigraph with no loops or multiple edges is called a
\textit{simple graph} or a \textit{graph}.
Let $G$ be a multigraph. A \textit{cycle} of $G$ is a simple closed
path in $G$. A loop is a cycle of length $1$, a pair of parallel
links is a cycle of length $2$,  
a triangle is a cycle of length 3, and so on. 
A maximal connected subgraph of a graph is called a 
\textit{connected component} of the graph. 

\begin{theorem}
{\rm(\cite[Theorem~5.1]{Zaslavsky-signed-graphs}, 
\cite[Theorem~2.1]{Zaslavsky-biased-graphs-II})}
\label{signed-graphic-matroid}
Let $G_\sigma$ be a signed graph. Then there exists a 
matroid $M(G_\sigma)$ on $E(G)$ whose circuits are the balanced cycles
and the bowties of $G_\sigma$.
\end{theorem}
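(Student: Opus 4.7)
The plan is to realize $M(G_\sigma)$ as the vector matroid of an explicit rational matrix and then identify its circuits combinatorially. I would first construct the \emph{signed incidence matrix} $A \in \mathbb{Q}^{V(G) \times E(G)}$: fix an orientation of each non-loop edge $e=\{u,v\}$ and set its column to $\mathbf{e}_u-\sigma(e)\mathbf{e}_v$; for a loop $e=\{v,v\}$ set the column to $(1+\sigma(e))\mathbf{e}_v$, so a negative loop gives $2\mathbf{e}_v$ and a positive loop gives $\mathbf{0}$. Then $M(G_\sigma):=M[A]$ is automatically a matroid on $E(G)$, and the content of the theorem reduces to showing that the circuits of $M[A]$ are precisely the balanced cycles and bowties of $G_\sigma$.

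For the easy direction I would show that every balanced cycle and every bowtie is dependent in $M[A]$. For a balanced cycle $C$, Harary's switching lemma supplies $\tau\colon V(C)\to\{\pm1\}$ with $\tau(u)\tau(v)=\sigma(\{u,v\})$ along $C$; after rescaling basis vectors by $\tau$ the columns of $C$ become those of an ordinary unsigned cycle, which satisfy the usual alternating-sum relation. For a bowtie I would exhibit an explicit linear relation that combines the two unbalanced cycles with appropriate coefficients along the connecting path. The underlying computation in both cases is the rank formula $\rho(X)=|V(X)|-c_0(X)$, which I would verify by induction on $|X|$; it implies in particular that a connected unbalanced subgraph on $k$ vertices has rank exactly $k$ in $M[A]$. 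Minimality of these circuits then follows: a proper subset of a balanced cycle is a forest of balanced edges, and removing any edge from a bowtie leaves components that are either forests or connected unbalanced subgraphs with at most as many edges as vertices, all of which are independent by the rank formula.

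The main obstacle is the converse, that every circuit of $M[A]$ is either a balanced cycle or a bowtie. Starting from a minimal linear relation $\sum_{e\in X}\alpha_e A_e=0$ with all $\alpha_e\neq 0$, I would first argue that every vertex incident to $X$ touches at least two edges of $X$; otherwise the row of $A$ at that vertex would force some $\alpha_e=0$. Minimality gives $\eta(X)=1$, and the rank formula yields $|X|-|V(X)|+c_0(X)=1$. A component-by-component analysis, using that a balanced connected subgraph contributes nullity $|E|-|V|+1\geq 0$ and an unbalanced one contributes $|E|-|V|\geq 0$, combined with the minimum-degree-two condition and the minimality of $X$, forces exactly two possibilities: a single balanced cycle, or two unbalanced cycles that either share a vertex or are joined by a path---precisely a bowtie. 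Theta-like configurations also satisfy $|E|=|V|+1$, but among the three subcycles of a theta at least one must be balanced and hence a strictly smaller circuit, so they are excluded by minimality.
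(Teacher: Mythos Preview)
The paper does not supply its own proof of this statement; Theorem~\ref{signed-graphic-matroid} is quoted as a known result of Zaslavsky, with citations, and is used as a black box throughout. So there is no in-paper argument to compare your attempt against.

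Your outline is nonetheless a correct and self-contained route to the result. Defining $M(G_\sigma):=M[A]$ over $\mathbb{Q}$ gives you the matroid axioms for free, and the circuit identification you sketch goes through. The rank formula $\rho(X)=|V(X)|-c_0(X)$ (equivalent to Zaslavsky's $\rho(X)=|V(G)|-c_0(X)$ once isolated vertices are counted as balanced components) does the heavy lifting, and your plan to prove it for $M[A]$ by induction on $|X|$ is the natural one. Your converse case analysis is right: minimum degree $\geq 2$ in $X$ together with $\eta(X)=1$ forces either a single balanced cycle or a single connected unbalanced subgraph with $|E|=|V|+1$ and minimum degree $2$; such a subgraph is a theta, a figure-eight, or a handcuff, and the theta is correctly excluded by the parity observation that exactly one of its three subcycles is balanced when the theta itself is unbalanced. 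One small slip: your loop formula should read $(1-\sigma(e))\mathbf{e}_v$ rather than $(1+\sigma(e))\mathbf{e}_v$ to match your own prose and the paper's Definition~\ref{incidence-matrix} (positive loop $\mapsto\mathbf{0}$, negative loop $\mapsto 2\mathbf{e}_v$); indeed your link formula $\mathbf{e}_u-\sigma(e)\mathbf{e}_v$ already specializes correctly when $u=v$.

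For context, this representation-first approach inverts Zaslavsky's original order: in \cite{Zaslavsky-signed-graphs} the matroid $M(G_\sigma)$ is first defined combinatorially (verifying the rank or circuit axioms directly), and only afterwards, as a separate theorem (the paper's Theorem~\ref{signed-graphic-rep}), is it shown to be represented by the incidence matrix in characteristic $\neq 2$. Your route trades the axiom check for the rank computation in $M[A]$; both are standard.
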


The matroid $M(G_\sigma)$ is called the \textit{signed-graphic
matroid} of $G_\sigma$. The circuits of $M(G_\sigma)$ are called the 
\textit{circuits} of $G_\sigma$. If $G_\sigma$ is balanced, then
$M(G_\sigma)$ is the graphic matroid $M(G)$ of $G$.

\begin{definition}
Let $G_\sigma$ be an unbalanced (resp. balanced) signed graph. 
A \textit{cutset} of $G_\sigma$ is a set of edges
whose removal from $G_\sigma$ increases the number of balanced
connected components (resp. connected components) of $G_\sigma$. 
A \textit{cocircuit} of $G_\sigma$ is a
minimal cutset of $G_\sigma$. If $G_\sigma$ is balanced, a cocircuit
of $G_\sigma$ is called a \textit{cocycle} or \textit{bond} of $G$.
\end{definition}

\begin{lemma}\cite[Proposition~2.3.1]{oxley}\label{cocycles-graphic-matroid}
If $G_\sigma$ is a balanced signed graph, then the cocircuits of
$G_\sigma$ are the cocircuits of the graphic matroid $M(G)$ of $G$,
that is, the circuits of $M(G)^*$.
\end{lemma}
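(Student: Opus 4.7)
The plan is to reduce the claim to Oxley's Proposition~2.3.1 by unfolding the definitions on both sides. Since $G_\sigma$ is balanced, by the remark immediately following Theorem~\ref{signed-graphic-matroid} one has $M(G_\sigma)=M(G)$; hence identifying the cocircuits of $G_\sigma$ with the circuits of $M(G)^*$ reduces to a purely graph-theoretic statement about the graphic matroid.

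First I would observe that every edge-deletion of a balanced signed graph is again balanced, because every cycle of a subgraph of $G_\sigma$ is a cycle of $G_\sigma$ and therefore still has an even number of negative edges. In particular, every connected component of every edge-deletion of $G_\sigma$ is again balanced, so for subgraphs of $G_\sigma$ the notions of \emph{balanced connected component} and \emph{connected component} coincide. Unpacking the definition of cocircuit in the balanced case then shows that a cocircuit of $G_\sigma$ is precisely a minimal set $X\subseteq E(G)$ whose removal strictly increases the number of connected components of the underlying multigraph $G$, i.e., a bond of $G$.

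To finish, I would invoke Oxley's Proposition~2.3.1, which asserts that the circuits of $M(G)^*$---equivalently, the cocircuits of $M(G)$---are exactly the bonds of $G$. Combining the two identifications yields the claimed equality. The proof is essentially a trace through definitions, so I do not anticipate any serious obstacle; the only point requiring care is to confirm that Oxley's formulation applies to multigraphs, which is standard, since loops form singleton circuits of $M(G)$ and hence never meet a cocircuit, while parallel edges are accommodated by the usual rank formula $\rho(X)=|V(G)|-c(X)$ for the graphic matroid.
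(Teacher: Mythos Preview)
Your proposal is correct. Note that the paper does not supply its own proof of this lemma: it simply cites Oxley's Proposition~2.3.1 and moves on. Your argument fills in exactly the reduction that the citation is implicitly pointing to, namely that the paper's definition of a cocircuit of a balanced signed graph is, verbatim, a bond of the underlying multigraph $G$, and that Oxley identifies bonds of $G$ with the cocircuits of $M(G)$. One small remark: your step showing that edge-deletions of balanced signed graphs remain balanced, while true and a nice consistency check, is not strictly needed here, because the paper's Definition of cutset already uses ``connected components'' (rather than ``balanced connected components'') in the balanced case.
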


\begin{lemma}\cite[Theorem~5.1(i)]{Zaslavsky-signed-graphs}\label{jan15-19}
If $G_\sigma$ is a connected unbalanced signed graph and $M(G_\sigma)$
is its signed-graphic matroid, then the cocircuits of $G_\sigma$ are
the cocircuits of $M(G_\sigma)$, that is, the circuits of
the dual matroid $M(G_\sigma)^*$.
\end{lemma}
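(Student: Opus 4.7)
The plan is to establish both assertions by dictionary translation between the matroid-theoretic definition of cocircuit and the graph-theoretic one, using the rank formula $\rho(X)=|V(G)|-c_0(X)$ guaranteed by Theorem~\ref{signed-graphic-matroid}. The second equality in the statement is immediate from the definition of duality, since a cocircuit of $M(G_\sigma)$ is by definition a circuit of $M(G_\sigma)^*$; so all the work goes into identifying the minimal cutsets of $G_\sigma$ with the circuits of $M(G_\sigma)^*$.

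First I would record the easy rank computation: because $G_\sigma$ is connected and unbalanced, it contains no balanced connected component, so $c_0(E(G))=0$ and hence $\rho(M(G_\sigma))=|V(G)|$. Next I would invoke the standard matroid-theoretic characterization that $T\subseteq E(G)$ is a circuit of the dual matroid $M(G_\sigma)^*$ if and only if $T$ is a minimal subset of the ground set whose complement fails to span $M(G_\sigma)$, i.e., $\rho(E(G)\setminus T)<\rho(M(G_\sigma))$. Substituting the rank formula, this inequality reads $|V(G)|-c_0(E(G)\setminus T)<|V(G)|$, which simplifies to $c_0(E(G)\setminus T)\geq 1$.

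Finally I would match this with the definition of cocircuit of $G_\sigma$ given in the excerpt: since $G_\sigma$ begins with $0$ balanced components, a minimal edge set $T$ whose removal strictly increases the number of balanced components is exactly a minimal $T$ with $c_0(E(G)\setminus T)\geq 1$. The two minimality conditions coincide, so the cocircuits of $G_\sigma$ are precisely the circuits of $M(G_\sigma)^*$, i.e., the cocircuits of $M(G_\sigma)$. The only subtle point is bookkeeping: one must note that an isolated vertex produced by removing $T$ counts as a balanced connected component, so the graph-theoretic definition and the rank-function definition truly agree at the boundary cases. Working entirely through the rank formula $\rho(X)=|V(G)|-c_0(X)$ makes this bookkeeping automatic, so I expect no real obstacle beyond the translation itself.
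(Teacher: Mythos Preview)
Your proposal is correct and follows essentially the same approach as the paper: both arguments translate between the graph-theoretic and matroid-theoretic notions of cocircuit via the rank formula $\rho(X)=|V(G)|-c_0(X)$, using the observation that $c_0(E)=0$ for a connected unbalanced signed graph. The paper spells out explicitly that the complement $E\setminus X$ of a graph cocircuit is a hyperplane (closed of rank $\rho(E)-1$) and cites Welsh for the hyperplane--cocircuit correspondence, whereas you invoke the equivalent characterization ``$T$ is a cocircuit iff $T$ is minimal with $\rho(E\setminus T)<\rho(E)$'' directly; this lets you match the two minimality conditions in one stroke rather than treating each implication separately, but the content is the same.
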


\begin{proof} Let $\rho$ be the rank function of 
$M(G_\sigma)$. We set $V=V(G)$ and $E=E(G)$. Take a cocircuit
$X\subseteq E$ of $G_\sigma$. As $G_\sigma$ is 
connected, one has $c_0(E)=0$ and $c_0(E\setminus
X)=1$ if $G_\sigma$ is unbalanced, and $c_0(E)=1$ and 
$c_0(E\setminus X)=2$ if $G_\sigma$ is balanced. Then 
$c_0(E\setminus X)=c_0(E)+1$. According to 
\cite[Theorem~5.1(j)]{Zaslavsky-signed-graphs}, one has
\begin{equation}\label{mar15-19} 
\rho(E)=|V|-c_0(E)\ \mbox{ and }\ \rho(E\setminus
X)=|V|-c_0(E\setminus X).
\end{equation}
\quad Therefore $\rho(E\setminus X)=\rho(E)-1$. Since $X$ is a
minimal cutset, it follows that $H:=E\setminus X$ is closed, that is,
$\rho(H\cup\{e\})=\rho(H)+1=\rho(E)$ for each $e\notin H$. Indeed, from the
equality $H\cup\{e\}=E\setminus(X\setminus\{e\})$, and the minimality
of $X$, we get $c_0(H\cup\{e\})=c_0(E)$. Hence, using
Eq.~(\ref{mar15-19}), we obtain $\rho(H\cup\{e\})=\rho(E)=\rho(H)+1$
for each $e\notin H$. As a consequence, $H$ is a maximal set of rank
$\rho(E)-1$. Thus, by \cite[Lemma~1, p.~38]{welsh}, $H$ is a
hyperplane of $M(G_\sigma)$ in the sense of \cite{welsh}, and by \cite[Theorem~2,
p.~39]{welsh}, $X$ is a cocircuit of $M(G_\sigma)$. Similarly, if $X$
is a cocircuit of $M(G_\sigma)$, it is seen that $X$ is a cocircuit 
of $G_\sigma$.
\end{proof}
 
\begin{lemma}\label{GHW-signed-graphic-matroid} 
Let $G_\sigma$ be a connected signed graph, let $\rho$ and
$\eta$ be the rank and nullity functions of the signed-graphic 
matroid $M=M(G_\sigma)$ of $G_\sigma$. 
The following hold.
\begin{itemize}
\item[(i)] If $1\leq r\leq \eta(M)$, then $d_r(M)$ is equal to the
minimum number of edges of $G$ forming a union of $r$
non-redundant balanced cycles and bowties of $G_\sigma$. 
\item[(ii)] If $1\leq r\leq\rho(M)$ and $G_\sigma$ is unbalanced
$($resp. balanced$)$, then
$d_r(M^*)$ is the $r$-th cogirth $\upsilon_r(G_\sigma)$ $($resp. 
$r$-th edge connectivity $\lambda_r(G_\sigma)$$)$ of $G_\sigma$.  
\item[(iii)] If $1\leq r\leq\rho(M)$, then $d_r(M^*)$ is equal to the
minimum number of edges of $G_\sigma$ forming a union of $r$
non-redundant cocircuits of $G_\sigma$. 
\end{itemize}
\end{lemma}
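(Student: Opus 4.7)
The plan is to extract each of the three parts from Theorem~\ref{formulas-for-GHWs-matroid} combined with the signed-graphic characterizations of circuits and cocircuits already compiled in the preceding lemmas.

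For part (i), I would appeal directly to the second formula of Theorem~\ref{formulas-for-GHWs-matroid}, which expresses $d_r(M)$ as the minimum size of a union of $r$ non-redundant circuits of $M$. Since Theorem~\ref{signed-graphic-matroid} tells us that the circuits of $M=M(G_\sigma)$ are exactly the balanced cycles and bowties of $G_\sigma$, the conclusion is immediate.

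For part (ii), the strategy is to start from the first formula of Theorem~\ref{formulas-for-GHWs-matroid}:
\[
d_r(M^*)=\min\{|X|\colon X\subseteq E(G),\ \rho(E(G)\setminus X)=\rho(E(G))-r\},
\]
and then rewrite the condition $\rho(E(G)\setminus X)=\rho(E(G))-r$ using the Zaslavsky rank formula $\rho(Y)=|V(G)|-c_0(Y)$. I would split into two cases depending on the sign-type of $G_\sigma$. If $G_\sigma$ is connected and unbalanced, then $c_0(E(G))=0$, so the condition becomes $c_0(E(G)\setminus X)=r$; matching the definition of the $r$-th cogirth yields $d_r(M^*)=\upsilon_r(G_\sigma)$. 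If $G_\sigma$ is connected and balanced, then $c_0(E(G))=1$, and removing edges cannot create unbalanced components (every cycle in a subgraph of $G_\sigma$ is a cycle of $G_\sigma$, hence balanced). Thus $c_0(E(G)\setminus X)$ equals the number of connected components of $(V(G),E(G)\setminus X)$, and the condition translates to "$X$ disconnects $G$ into $r+1$ components," which is precisely $\lambda_r(G)$.

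For part (iii), I would apply the second formula of Theorem~\ref{formulas-for-GHWs-matroid} with the roles of $M$ and $M^*$ interchanged, obtaining
\[
d_r(M^*)=\min\Bigl\{\bigl|\textstyle\bigcup_{i=1}^{r} C_i\bigr|\colon \{C_i\}_{i=1}^{r}\text{ non-redundant circuits of }M^*\Bigr\}.
\]
The circuits of $M^*=M(G_\sigma)^*$ are by definition the cocircuits of $M(G_\sigma)$, and Lemmas~\ref{cocycles-graphic-matroid} and \ref{jan15-19} identify these with the cocircuits of $G_\sigma$ in both the balanced and unbalanced cases. Substituting gives the claimed formula.

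The only genuinely non-formal step is the balanced-case observation in (ii): namely, that for balanced $G_\sigma$ the quantity $c_0$ coincides with the number of connected components of any edge-deleted subgraph, which is where the definition of $\lambda_r$ (as opposed to $\upsilon_r$) enters. Everything else is a direct application of the formulas already established, so I would expect this to be the only place requiring explicit justification rather than citation.
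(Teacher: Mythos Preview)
Your proposal is correct and follows essentially the same route as the paper's own proof: each part is obtained by combining the appropriate formula of Theorem~\ref{formulas-for-GHWs-matroid} with the signed-graphic description of circuits (Theorem~\ref{signed-graphic-matroid}), the Zaslavsky rank formula, or the cocircuit identifications of Lemmas~\ref{cocycles-graphic-matroid} and \ref{jan15-19}. Your explicit remark in the balanced case of (ii)---that subgraphs of a balanced signed graph remain balanced, so $c_0$ agrees with the ordinary component count---is exactly the implicit step the paper uses when it replaces $c_0(E\setminus X)$ by $c(E\setminus X)$.
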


\begin{proof} (i): By Theorem~\ref{signed-graphic-matroid}, the
circuits of $M$ are the balanced cycles and the bowties of $G_\sigma$. Hence, it suffices
to recall the following formula of 
Theorem~\ref{formulas-for-GHWs-matroid}:
$$
d_r(M)=\textstyle\min\left\{ 
\left|\bigcup_{i=1}^r C_i\right|\colon 
\{C_i\}_{i=1}^r\mbox{ are non-redundant circuits of } M\right\}\mbox{
for }1\leq r\leq \eta(M). 
$$
\quad (ii): Let $E$ be the edge set of $G_\sigma$, which is the ground set of $M$, and let
$V$ be the vertex set of $G_\sigma$. According to
\cite[Theorem~5.1(j)]{Zaslavsky-signed-graphs} the rank function of
$M(G_\sigma)$ satisfies
\begin{equation}\label{mar9-19} 
\rho(E\setminus X)=|V|-c_0(E\setminus X)\ \mbox{ for }\ X\subseteq E, 
\end{equation}
where
$c_0(E\setminus X)$ is the number of balanced connected components of
the signed subgraph $G_\sigma\setminus X$ with edge set $E\setminus
X$ 
and vertex set $V$. Therefore, by
Theorem~\ref{formulas-for-GHWs-matroid}, 
we obtain
\begin{eqnarray*}
d_r(M^*)&=&\min\left\{ 
\left|X\right|\colon 
X\subseteq E\mbox{ and }\rho(E\setminus X)=\rho(E)-r\right\}\\
&=&\min\left\{ 
\left|X\right|\colon 
X\subseteq E\mbox{ and }|V|-c_0(E\setminus X)=\rho(E)-r\right\}
\end{eqnarray*}
for $1\leq r\leq\rho(E)$. If $G_\sigma$ is unbalanced (resp.
balanced), then by making
$X=\emptyset$ in Eq.~(\ref{mar9-19}) we get $\rho(E)=|V|$
(resp. $\rho(E)=|V|-1$). Therefore
$$
d_r(M^*)=\begin{cases}
\min\left\{ 
\left|X\right|\colon 
X\subseteq E\mbox{ and }c_0(E\setminus X)=r\right\}=\upsilon_r(G_\sigma)&\mbox{if
}G_\sigma\mbox{ is unbalanced},\\
\min\left\{ 
\left|X\right|\colon 
X\subseteq E\mbox{ and }c(E\setminus X)=r+1\right\}=\lambda_r(G_\sigma)
& \mbox{if
}G_\sigma\mbox{ is balanced},
\end{cases}
$$
where $c(E\setminus X)$ is the number of connected components of the
signed subgraph $G_\sigma\setminus X$.
 
(iii): By Lemmas~\ref{cocycles-graphic-matroid} and
\ref{jan15-19}, 
the circuits of the dual matroid
$M^*$ of $M$ are the
cocircuits of the signed graph $G_\sigma$, and by 
Theorem~\ref{formulas-for-GHWs-matroid} we get
$$
d_r(M^*)=\textstyle\min\left\{ 
\left|\bigcup_{i=1}^r C_i^*\right|\colon 
\{C_i^*\}_{i=1}^r\mbox{ are non-redundant circuits of } M^*\right\}\mbox{
for }1\leq r\leq \eta^*(E).
$$
\quad Hence, the required equality follows noticing that $\eta^*(E)=\rho(E)$. 
\end{proof}

\begin{definition}\label{incidence-matrix} 
Let $G_\sigma$ be a signed graph with $s$ vertices $t_1,\ldots,t_s$
and $m$
edges, let
$K$ be a field, and let $\mathbf{e}_i$ be the $i$-th unit vector in
$K^s$. The \textit{incidence matrix} of $G_\sigma$ over the field $K$
is the $s\times m$ matrix $A$ whose column vectors are given by:
\begin{itemize}
\item[(i)] $\mathbf{e}_i-\mathbf{e}_j$ (resp. $\mathbf{e}_i+\mathbf{e}_j$) if $e=\{t_i,t_j\}$ is a
link with $\sigma(e)=+$ (resp. $\sigma(e)=-$);
\item[(ii)] $\mathbf{0}$ (resp. $2\mathbf{e}_i$) if $e=\{t_i,t_i\}$ is a loop with $\sigma(e)=+$
(resp. $\sigma(e)=-$).
\end{itemize}
\end{definition}

Note that the columns of $A$ are defined up to sign, so one can 
pick $\mathbf{e}_i-\mathbf{e}_j$ or $\mathbf{e}_j-\mathbf{e}_i$ if $e=\{t_i,t_j\}$ is a
link with $\sigma(e)=+$. To avoid ambiguity we could normalize and pick 
$e_i-e_j$ if $i>j$. The order of the columns of $A$ and the choice of
sign have no significance for the invariants of linear codes, signed
graphs, and Stanley--Reisner ideals that we want to
study.

If $G$ is a multigraph with vertices $t_1,\ldots,t_s$, the
\textit{incidence matrix} 
of $G$ over a field $K$ is the incidence matrix of the negative signed
graph $G_-$, that is, the matrix whose
columns are all vectors $\mathbf{e}_i+\mathbf{e}_j$ such that
$\{t_i,t_j\}$ is an edge of $G$. 
A \textit{digraph} $\mathcal{D}$ consists of a multigraph
$G$ with vertices $t_1,\ldots,t_s$ where all edges of $G$ are directed from
one vertex to another. The \textit{edges} or \textit{arrows} of $\mathcal{D}$ 
are ordered pairs of vertices $(t_i,t_j)$ with $e=\{t_i,t_j\}$ an edge
of $G$, where $(t_i, t_j)$ represents the edge $e$ directed
from $t_i$ to $t_j$. The \textit{incidence matrix} of $\mathcal{D}$ over a field
$K$ is the incidence matrix of the positive signed graph $G_+$, that
is, the matrix whose
columns are all vectors $\mathbf{e}_i-\mathbf{e}_j$ such that
$(t_i,t_j)$ is an edge of $\mathcal{D}$. 

\begin{theorem}\cite[Theorems~8B.1, 8B.2]{Zaslavsky-signed-graphs}
\label{signed-graphic-rep}
Let $G_\sigma$ be a signed graph and let $A$ be
its incidence matrix over a field of characteristic $p$. The following
hold.
\begin{itemize}
\item[(a)] If $p\neq 2$, then the vector matroid $M[A]$ of $A$ is the
signed-graphic matroid $M(G_\sigma)$. 
\item[(b)] If $p=2$, then 
$M[A]$ is the graphic matroid $M(G)$ of $G$. 
\end{itemize}
\end{theorem}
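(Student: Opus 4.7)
The plan is to establish both parts by computing the rank function of $M[A]$ on each subset $X\subseteq E(G)$ and matching it against the rank function of the claimed matroid. By Theorem~\ref{signed-graphic-matroid} together with the formula $\rho(E\setminus X)=|V|-c_0(E\setminus X)$ appearing in Eq.~(\ref{mar9-19}), the rank of $M(G_\sigma)$ on $X$ equals $|V(G)|-c_0(X)$. Since the rank of $A_X$ is additive over the connected components of the signed subgraph of $G_\sigma$ induced by $X$, I would first reduce to the case in which this subgraph $H_\sigma$ on edge set $X$ is connected. The required identity then becomes $\mathrm{rank}(A_X)=|V(H)|-1$ if $H_\sigma$ is balanced, and $\mathrm{rank}(A_X)=|V(H)|$ if $H_\sigma$ is unbalanced.

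A key tool I would introduce is the \emph{switching} operation at a subset $U\subseteq V(G)$, which reverses the sign of every link with exactly one endpoint in $U$ while leaving loops fixed. In matrix terms, switching at $U$ corresponds to negating the rows of $A$ indexed by $U$ followed by absorbing the resulting sign changes into the column sign ambiguity allowed by Definition~\ref{incidence-matrix}. Both operations preserve the vector matroid, so switching-equivalent signed graphs yield the same $M[A]$. I would use the standard fact that a connected signed graph is balanced if and only if it is switching equivalent to the positive signed graph on its underlying multigraph.

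For part (b), characteristic $2$ makes signs irrelevant, because $e_i-e_j=e_i+e_j$ and $2e_i=0$ in $K$. Hence $A$ coincides with the incidence matrix of $G_+$ over $K$, and the classical identification of the vector matroid of a graph's incidence matrix with its graphic matroid yields $M[A]=M(G)$. For part (a) with $p\neq 2$, in the balanced connected case switching to $H_+$ gives the usual signed incidence matrix, which is well known to have rank $|V(H)|-1$. In the unbalanced connected case, I would fix a spanning tree $T\subseteq X$ and switch so that every edge of $T$ becomes positive; the tree columns $e_i-e_j$ span the hyperplane of vectors of zero coordinate sum, contributing rank $|V(H)|-1$. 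Because $H_\sigma$ is unbalanced, some non-tree edge $e$ has an unbalanced fundamental cycle with $T$, which after switching forces $e$ to be a negative link or a negative loop; in either case its column ($e_i+e_j$ or $2e_i$) has coordinate sum $2\neq 0$ in $K$, so it lies outside the tree's span and raises the rank to $|V(H)|$. Summing across components gives $\mathrm{rank}(A_X)=|V(G)|-c_0(X)$.

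The main obstacle will be the careful bookkeeping for the switching reduction, in particular verifying that the column sign ambiguity in Definition~\ref{incidence-matrix} genuinely absorbs the sign changes introduced by switching, and handling loops correctly. Positive loops contribute the zero column in every characteristic, matching their role as balanced $1$-cycles; negative loops contribute $2e_i$, which is nonzero in characteristic $\neq 2$ (consistent with their being unbalanced $1$-cycles, not matroid loops) but vanishes in characteristic $2$, consistent with parts (a) and (b) respectively.
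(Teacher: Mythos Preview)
The paper does not supply its own proof of this statement: Theorem~\ref{signed-graphic-rep} is quoted from Zaslavsky \cite[Theorems~8B.1, 8B.2]{Zaslavsky-signed-graphs} with only a citation and is used as a black box throughout. There is therefore nothing in the paper to compare your argument against.

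That said, your sketch is correct and is essentially the standard route to this result. A couple of small points are worth making explicit when you write it up. First, when you assert that $\mathrm{rank}(A_X)$ is additive over the connected components of the signed subgraph on edge set $X$, recall that this subgraph has vertex set all of $V(G)$; isolated vertices each count as a balanced component in $c_0(X)$ while contributing a zero row to $A_X$, so the bookkeeping $\mathrm{rank}(A_X)=|V(G)|-c_0(X)$ does go through, but only once you observe this. Second, in the unbalanced step you can shortcut the fundamental-cycle language: after switching so that the spanning tree $T$ is all positive, if every non-tree edge were also positive the switched graph would be $H_+$ and hence balanced, contradicting that switching preserves balance; so some non-tree edge is negative, and its column has coordinate sum $2\neq 0$ in characteristic $p\neq 2$, pushing the rank from $|V(H)|-1$ to $|V(H)|$. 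Finally, for part~(b) note that in characteristic $2$ both positive and negative loops yield the zero column, matching the fact that loops are circuits (matroid loops) of the graphic matroid $M(G)$.
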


\begin{proposition}
\label{rank-incidence-signed}
Let $G_\sigma$ be a signed graph with $s$ vertices, $c$ connected components, 
$c_0$ balanced connected components, and
let $A$ be its incidence matrix over a field $K$. Then 
$$
{\rm rank}(A)=\begin{cases} s-c_0&\text{if }{\rm char}(K)\neq 2
,\\
s-c&\text{if }{\rm char}(K)=2 \text{ or }G_\sigma\text{ is balanced}.
\end{cases}
$$
\end{proposition}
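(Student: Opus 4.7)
The plan is to bootstrap from Theorem~\ref{signed-graphic-rep}, which identifies the vector matroid $M[A]$ of the incidence matrix with either the signed-graphic matroid $M(G_\sigma)$ or the graphic matroid $M(G)$, depending on the characteristic of $K$. Since ${\rm rank}(A)$ coincides with the rank of the matroid $M[A]$ evaluated at the full ground set $E(G)$, it suffices to recall the formula $\rho(E) = |V(G)| - c_0(E)$ for the rank function of the signed-graphic matroid (already used in Equation~(\ref{mar9-19})) and the analogous $\rho(E) = |V(G)| - c$ for the graphic matroid.

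Concretely, I would split into two cases. In the first case, assume ${\rm char}(K) \neq 2$. By Theorem~\ref{signed-graphic-rep}(a), $M[A] = M(G_\sigma)$, so evaluating the rank at the entire edge set gives ${\rm rank}(A) = |V(G)| - c_0(E(G)) = s - c_0$. In the second case, assume ${\rm char}(K) = 2$. By Theorem~\ref{signed-graphic-rep}(b), $M[A] = M(G)$, and the classical rank formula for the graphic matroid yields ${\rm rank}(A) = s - c$. The remaining subcase, $G_\sigma$ balanced with arbitrary characteristic, follows from the observation that in a balanced signed graph every connected component is balanced, so $c_0 = c$; therefore the first-case formula $s - c_0$ reduces to $s - c$, making both lines of the piecewise definition consistent when their hypotheses overlap.

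I do not anticipate a genuine obstacle here, since the proposition is essentially a translation of Theorem~\ref{signed-graphic-rep} via the rank formula $\rho(E) = |V(G)| - c_0(E)$. The only point worth flagging is the equality $c = c_0$ for balanced signed graphs, which guarantees that the two branches of the piecewise formula agree whenever both conditions of the second branch are simultaneously satisfied. If one preferred an argument independent of Zaslavsky's matroid representation theorem, an alternative route would be to reduce to the connected case, observe that vertex switchings correspond to row negations of $A$ and hence preserve rank, use switching to make any balanced connected component positive (where the classical kernel of constants gives rank $s_H - 1$), and in the unbalanced connected case exhibit an odd negative cycle to force the kernel of $A^T$ to vanish when $\textrm{char}(K) \neq 2$; but this would just reprove Theorem~\ref{signed-graphic-rep} in disguise.
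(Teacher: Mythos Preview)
Your proposal is correct and follows essentially the same route as the paper: both invoke Theorem~\ref{signed-graphic-rep} to identify $M[A]$ with $M(G_\sigma)$ (when ${\rm char}(K)\neq 2$) or $M(G)$ (when ${\rm char}(K)=2$), then apply the rank formulas $\rho(E)=s-c_0$ and $\rho(E)=s-c$ for these matroids, and handle the balanced case via $c=c_0$. The paper cites \cite[Theorem~5.1(j)]{Zaslavsky-signed-graphs} and spanning trees \cite[p.~28]{welsh} explicitly, while you reference Eq.~(\ref{mar9-19}) and the ``classical rank formula,'' but the argument is the same.
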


\begin{proof} Assume ${\rm char}(K)\neq 2$. By
Theorem~\ref{signed-graphic-rep}(a), the signed graphic matroid
$M(G_\sigma)$ is the vector matroid $M[A]$.  According to 
\cite[Theorem~5.1(j)]{Zaslavsky-signed-graphs}, the rank of
$M(G_\sigma)$ is $s-c_0$. Thus in this case ${\rm rank}(A)=s-c_0$.
Assume ${\rm char}(K)=2$. By
Theorem~\ref{signed-graphic-rep}(b), the graphic matroid $M(G)$ is the
vector matroid $M[A]$. If $G$ is connected,
then the bases of the matroid $M(G)$ are the spanning trees of $G$
\cite[p.~28]{welsh}, and ${\rm
rank}(A)=s-1$. As a consequence, if $G$ has $c$ components, one has 
${\rm rank}(A)=s-c$. If $G_\sigma$ is balanced, then $c=c_0$, and by
the previous two cases ${\rm rank}(A)=s-c$, regardless of the
characteristic of the field $K$.
\end{proof}

\begin{corollary}\label{rank-incidence-oriented} 
Let $\mathcal{D}$ be a digraph with $s$ vertices and $c$ 
connected components, and
let $A$ be its incidence matrix over a field $K$. Then, ${\rm
rank}(A)=s-c$. 
\end{corollary}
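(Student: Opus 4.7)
The plan is to reduce the statement directly to Proposition~\ref{rank-incidence-signed} by observing that a digraph corresponds to a \emph{balanced} signed graph, so the ``balanced case'' of that proposition applies regardless of the characteristic of $K$.

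First, I would unpack the definitions. By Definition~\ref{incidence-matrix} and the paragraph immediately following it, the incidence matrix of the digraph $\mathcal{D}$ over $K$ is, by definition, the incidence matrix of the positive signed graph $G_+$, where $G$ is the underlying multigraph of $\mathcal{D}$ with $s$ vertices and $c$ connected components. The matrix $A$ therefore has column vectors of the form $\mathbf{e}_i-\mathbf{e}_j$ for each arrow $(t_i,t_j)$ of $\mathcal{D}$, which is precisely the shape required by Proposition~\ref{rank-incidence-signed} applied to $G_+$.

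Next, I would verify that $G_+$ is balanced. Every edge of $G_+$ carries the sign $+$, so every cycle of $G_+$ contains zero negative edges, and zero is even; hence every cycle is balanced, and $G_+$ is a balanced signed graph. In particular, the numbers of connected components and of balanced connected components coincide for $G_+$.

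Finally, applying Proposition~\ref{rank-incidence-signed} to $G_\sigma=G_+$ in the balanced case yields ${\rm rank}(A)=s-c$, with no restriction on ${\rm char}(K)$, which is exactly the assertion. The only step that requires any care is the bookkeeping check that $G_+$ is indeed balanced and that the ``$c$'' appearing in the corollary (the number of components of $\mathcal{D}$) agrees with the ``$c$'' in the balanced clause of the proposition; both follow immediately from the fact that $\mathcal{D}$ and $G$ share the same underlying graph structure. I do not anticipate any obstacle, as this corollary is essentially a specialization of the proposition.
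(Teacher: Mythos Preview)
Your proposal is correct and follows essentially the same route as the paper: identify the incidence matrix of $\mathcal{D}$ with that of the positive signed graph $G_+$, observe that $G_+$ is balanced, and invoke the balanced case of Proposition~\ref{rank-incidence-signed}. Your write-up is in fact a bit more careful than the paper's, spelling out why $G_+$ is balanced and why the component counts agree.
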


\begin{proof} Let $G$ be the underlying unoriented simple graph of
$\mathcal{D}$. Consider the positive signed graph $G_+$. Note that $G_+$ is
balanced. Since $\mathcal{D}$ and $G_+$ have the same incidence matrix,
the result follows
from Proposition~\ref{rank-incidence-signed}.
\end{proof}

\begin{definition}
The \textit{incidence matrix code} of a signed graph $G_\sigma$ (resp.
multigraph $G$, digraph $\mathcal{D}$), over a finite field
$\mathbb{F}_q$, is the
linear code $C$ generated by the rows of the incidence matrix of
the signed graph $G_\sigma$ (resp. multigraph $G$, digraph $\mathcal{D}$). 
\end{definition}

\begin{corollary}\label{dim-cp} Let $G_\sigma$ be a connected signed graph with $s$ vertices and 
$m$ edges, and let $C$ be the incidence matrix code of 
$G_\sigma$ over a finite field of characteristic $p$. Then
\begin{itemize}
\item[(a)] $C$ $($resp. $C^\perp$$)$ is an $[m,s]$ $($resp.
$[m,m-s]$$)$ linear code if $p\neq 2$
and $G_\sigma$ is unbalanced. 
\item[(b)] $C$ $($resp. $C^\perp$$)$ is an $[m,s-1]$ $($resp.
$[m,m-s+1]$$)$ linear code if $p=2$ or $G_\sigma$ is balanced. 
\end{itemize}
\end{corollary}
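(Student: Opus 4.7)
The plan is to deduce this directly from Proposition~\ref{rank-incidence-signed}, using that the length of $C$ equals the number of columns of the incidence matrix $A$ of $G_\sigma$ (namely $m$) and that $\dim_K(C) = \mathrm{rank}(A)$, while $\dim_K(C^\perp) = m - \dim_K(C)$.

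For part (a), I would assume $p \neq 2$ and $G_\sigma$ unbalanced. Since $G_\sigma$ is connected and unbalanced, the number of balanced connected components is $c_0 = 0$. Applying the first case of Proposition~\ref{rank-incidence-signed} over a field of characteristic $\neq 2$ gives $\mathrm{rank}(A) = s - c_0 = s$. Hence $C$ is an $[m, s]$-linear code and $C^\perp$ is an $[m, m-s]$-linear code.

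For part (b), I would split into the two assumptions. If $p = 2$, the second case of Proposition~\ref{rank-incidence-signed} applies and yields $\mathrm{rank}(A) = s - c$; connectedness of $G_\sigma$ gives $c = 1$, so $\mathrm{rank}(A) = s - 1$. If instead $G_\sigma$ is balanced, then the second case of the proposition again applies (regardless of $p$), and connectedness together with balancedness gives $c_0 = c = 1$, so again $\mathrm{rank}(A) = s - 1$. In either subcase, $\dim_K(C) = s - 1$, so $C$ is an $[m, s-1]$-linear code and $C^\perp$ is an $[m, m-s+1]$-linear code.

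There is no real obstacle here: the statement is essentially a specialization of Proposition~\ref{rank-incidence-signed} to the connected case, together with the trivial identity $\dim_K(C^\perp) = m - \dim_K(C)$. The only thing to be careful about is keeping track of the two distinct cases ($c_0 = 0$ versus $c_0 = c = 1$) that correspond to unbalanced versus balanced connected signed graphs, and noticing that in characteristic $2$ the relevant invariant is $c$ rather than $c_0$.
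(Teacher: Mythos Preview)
Your proposal is correct and follows exactly the same approach as the paper's own proof, which simply says the result follows from Proposition~\ref{rank-incidence-signed} together with $\dim(C)+\dim(C^\perp)=m$. You have merely spelled out the values $c_0=0$ and $c=c_0=1$ in the connected unbalanced and balanced cases, which the paper leaves implicit.
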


\begin{proof} This follows from
Proposition~\ref{rank-incidence-signed} noticing that $\dim(C)+\dim(C^\perp)=m$.
\end{proof}

\begin{definition}\label{bowtie} Let $G$ be a multigraph. A \textit{bowtie} 
of $G$ is the union of two odd cycles which meet at a single
vertex or the union of two vertex-disjoint odd cycles 
and a simple path which meets one cycle at each end and is otherwise
disjoint from them.
\end{definition}

\begin{corollary}\label{circuits-signed-graphic} Let $G$ be a
multigraph and let $G_+$ and $G_-$ be the positive and negative signed
graphs, respectively. The following hold.
\begin{itemize}
\item[(a)] The circuits of the signed-graphic matroid $M(G_+)$ are the cycles of $G$, that is,
$M(G_+)$ is the graphic matroid $M(G)$ of $G$.
\item[(b)] The signed-graphic matroid $M(G_+)$ is the vector matroid,
over any field $K$, of the incidence matrix of $G_+$ whose columns are of the form
$\mathbf{e}_i-\mathbf{e}_j$.
\item[(c)] The balanced $($resp. unbalanced$)$ cycles of $G_-$ are the even
$($resp. odd$)$ cycles of $G$. 
A circuit of $M(G_-)$ is 
either an even cycle or a bowtie of $G$.
\item[(d)] If $G_\sigma$ is a balanced signed graph, then
$M(G_\sigma)$ is the graphic matroid $M(G)$ of $G$. 
\end{itemize}
\end{corollary}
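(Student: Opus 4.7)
The plan is that all four parts follow quickly from the definition of balance, Theorem~\ref{signed-graphic-matroid} (Zaslavsky's characterization of the circuits of $M(G_\sigma)$), and Theorem~\ref{signed-graphic-rep} (on the vector matroid of the incidence matrix). The key observation to reuse throughout is that a cycle in $G_\sigma$ is balanced precisely when it has an even number of negative edges, and that this condition specializes in two useful ways: every cycle of $G_+$ has $0$ negative edges (hence is balanced), while a cycle of $G_-$ has all edges negative (hence is balanced iff it has even length).

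For (a), I would first note that since $G_+$ has no negative edges, every cycle of $G_+$ is balanced and no cycle is unbalanced; in particular, $G_+$ contains no bowties. By Theorem~\ref{signed-graphic-matroid}, the circuits of $M(G_+)$ are exactly the cycles of $G$, which are the circuits of $M(G)$, so $M(G_+)=M(G)$. For (c), I would argue identically that the balanced (resp. unbalanced) cycles of $G_-$ are the even (resp. odd) cycles of $G$, so that a bowtie in $G_-$ (two unbalanced cycles sharing a vertex or joined by a simple path) becomes exactly a bowtie of $G$ in the sense of Definition~\ref{bowtie}. Invoking Theorem~\ref{signed-graphic-matroid} again identifies the circuits of $M(G_-)$. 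For (d), balance of $G_\sigma$ means no cycle is unbalanced, hence there are no bowties; Theorem~\ref{signed-graphic-matroid} then reduces the circuit family of $M(G_\sigma)$ to the cycles of $G$, and equality of matroids follows because they share ground set and circuits.

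Part (b) is the only item requiring attention to the ground field. The plan is to split on characteristic: if $\mathrm{char}(K)\neq 2$, then Theorem~\ref{signed-graphic-rep}(a) directly gives $M[A]=M(G_+)$, which equals $M(G)$ by part~(a); if $\mathrm{char}(K)=2$, then $\mathbf{e}_i-\mathbf{e}_j=\mathbf{e}_i+\mathbf{e}_j$, so the matrix $A$ coincides with the incidence matrix of $G_-$ over this field, and Theorem~\ref{signed-graphic-rep}(b) yields $M[A]=M(G)=M(G_+)$, again by~(a). The only mild subtlety, and the closest thing to an obstacle, is handling positive loops: the incidence matrix assigns them the zero column, but this matches both the form $\mathbf{e}_i-\mathbf{e}_i$ and the fact that a positive loop is a balanced cycle of length one, hence a loop of the matroid in both descriptions, keeping the correspondence consistent.
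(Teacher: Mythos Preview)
Your proposal is correct and follows essentially the same route as the paper's proof: parts (a), (c), (d) all reduce immediately to Theorem~\ref{signed-graphic-matroid} once one identifies the balanced cycles, and part (b) is handled by splitting on the characteristic and invoking Theorem~\ref{signed-graphic-rep}. The only cosmetic difference is that for the $p=2$ case of (b) the paper cites Welsh directly for the fact that the graphic matroid is the vector matroid of the incidence matrix of $G$, whereas you route through Theorem~\ref{signed-graphic-rep}(b); your remark on positive loops is a nice extra consistency check that the paper omits.
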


\begin{proof} (a): There are no unbalanced cycles of $G_+$. Hence, by
Theorem~\ref{signed-graphic-matroid}, the circuits of $M(G_+)$ are the
cycles of $G$.

(b): Let $p$ be the characteristic of the field $K$. If $p\neq 2$, by 
Theorem~\ref{signed-graphic-rep}, $M(G_+)$ is the
vector matroid of the incidence matrix $A$ of $G_+$ and the columns of
this matrix have the required form. If $p=2$, the graphic matroid $M(G)$ of $G$ is
the vector matroid $M[A]$ of the incidence matrix $A$ of $G$ 
\cite[Theorem~3, p.~149]{welsh}. By part (a), $M(G_+)$ is the graphic
matroid of $G$. Thus $M(G_+)$ is the vector matroid of $A$. The columns
of $A$ have the required form because in this case $1=-1$.

(c), (d): These follow readily from Theorem~\ref{signed-graphic-matroid}.
\end{proof}

\begin{remark} Let $G$ be a multigraph and let $A$ be the incidence
matrix of $G_+$ over the field $K=\mathbb{Q}$ of rational numbers. 
Since $M[A]$ is the graphic matroid of $G$, 
to compute all cycles of $G$ one can use \textit{Macaulay}$2$
\cite{mac2} and the package \textit{Matroids} \cite{matroids}.  
\end{remark}

\begin{corollary}{\cite{Simoes-Pereira,Vi3,Zaslavsky-signed-graphs}}
\label{VectorCircuit}
If $A$ is the incidence matrix
of a multigraph $G$ over a field of ${\rm char}(K)\neq 2$, 
then the circuits of the vector matroid $M[A]$ are the even cycles 
and bowties of $G$. 
\end{corollary}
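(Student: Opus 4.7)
The plan is to assemble the statement directly from three ingredients that appear earlier in the excerpt, so the work is really just to trace a short chain of identifications. First I would recall the convention set just before Corollary~\ref{rank-incidence-oriented}: for a multigraph $G$, its incidence matrix over a field is by definition the incidence matrix of the negative signed graph $G_-$. Thus the matrix $A$ in the statement is the incidence matrix of the signed graph $G_-$ over $K$.

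Next, since ${\rm char}(K)\neq 2$, I would invoke Theorem~\ref{signed-graphic-rep}(a), which asserts that in this characteristic the vector matroid of the incidence matrix of any signed graph is the signed-graphic matroid of that signed graph. Applied to $G_-$, this gives the identification $M[A]=M(G_-)$. In particular, the circuits of $M[A]$ coincide with the circuits of $M(G_-)$.

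Finally, I would appeal to Corollary~\ref{circuits-signed-graphic}(c), which describes the circuits of $M(G_-)$ intrinsically in terms of $G$: they are exactly the even cycles and the bowties of $G$. Chaining these three steps produces the required description of the circuits of $M[A]$ and completes the argument.

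There is no genuine obstacle here; the only thing to be careful about is that the ``bowtie'' appearing in the conclusion is the bowtie of the underlying multigraph $G$ (Definition~\ref{bowtie}, built from odd cycles), which agrees with the notion of bowtie for the signed graph $G_-$ because the unbalanced cycles of $G_-$ are precisely the odd cycles of $G$, as recorded in Corollary~\ref{circuits-signed-graphic}(c). Once this terminological match is noted, the proof is a one-line citation.
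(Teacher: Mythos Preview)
Your proposal is correct and follows exactly the paper's own proof, which simply says the result follows from Theorem~\ref{signed-graphic-rep}(a) and Corollary~\ref{circuits-signed-graphic}(c) by considering $G_-$. Your additional remark reconciling the two notions of bowtie is a helpful clarification but not a departure from the paper's approach.
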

\begin{proof} It follows from
Theorem~\ref{signed-graphic-rep}(a) and
Corollary~\ref{circuits-signed-graphic}(c) 
by considering $G_-$.
\end{proof}
Our main results on linear codes are the following. First, we give
graph theoretical formulas for the GHWs for the incidence matrix code
of a signed graph.

\begin{theorem}\label{pepe-vila-2019} Let $C$ be the
incidence matrix code of a connected signed graph
$G_\sigma$ with $s$ vertices, $r$-th cogirth $\upsilon_r(G_\sigma)$, 
$r$-th edge connectivity  
$\lambda_r(G)$, over a finite field $K$ of ${\rm
char}(K)=p$. Then, the $r$-th generalized Hamming weight of $C$ is given by 
$$ 
\delta_r(C)=\begin{cases}
\upsilon_r(G_\sigma)&\text{if } p\neq 2,\, G_\sigma\textit{ is
unbalanced and }\, 1\leq r\leq s,\\
\lambda_r(G)&\text{if }p=2\mbox{ and }1\leq r\leq s-1,\\
\lambda_r(G)&\text{if }G_\sigma\textit{ is balanced and }\, 1\leq r\leq s-1.
\end{cases}
$$
\end{theorem}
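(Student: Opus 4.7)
The plan is to reduce the statement, via Lemma~\ref{d-delta-relations} and Theorem~\ref{signed-graphic-rep}, to the already established Lemma~\ref{GHW-signed-graphic-matroid}(ii). Let $A$ be the incidence matrix of $G_\sigma$ over $K$ and let $M=M[A]$ be its vector matroid. By Lemma~\ref{d-delta-relations} one has $\delta_r(C)=d_r(M^*)$ for $1\leq r\leq\dim_K(C)=\rho(M)$, so the computation of the generalized Hamming weights of $C$ is equivalent to the computation of $d_r(M^*)$ in the range $1\leq r\leq\rho(M)$. The task is then to identify $M$ as a signed-graphic or graphic matroid in each of the three cases and to read off the rank from Proposition~\ref{rank-incidence-signed}.

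In the first case, $p\neq 2$ and $G_\sigma$ is unbalanced. Theorem~\ref{signed-graphic-rep}(a) gives $M=M(G_\sigma)$, and since $G_\sigma$ is connected and unbalanced, Proposition~\ref{rank-incidence-signed} yields $\rho(M)=s-c_0=s$. The formula $\delta_r(C)=\upsilon_r(G_\sigma)$ for $1\leq r\leq s$ then follows directly from the ``unbalanced'' half of Lemma~\ref{GHW-signed-graphic-matroid}(ii) applied to $G_\sigma$.

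In the remaining two cases, the matroid $M$ coincides with the graphic matroid $M(G)$. If $p=2$ this is exactly Theorem~\ref{signed-graphic-rep}(b), while if $G_\sigma$ is balanced one has $M(G_\sigma)=M(G)$ by Corollary~\ref{circuits-signed-graphic}(d), so Theorem~\ref{signed-graphic-rep}(a) again gives $M=M(G)$. Now $M(G)=M(G_+)$ by Corollary~\ref{circuits-signed-graphic}(a), and $G_+$ is a connected balanced signed graph, so Proposition~\ref{rank-incidence-signed} gives $\rho(M)=s-1$. Applying the ``balanced'' half of Lemma~\ref{GHW-signed-graphic-matroid}(ii) to $G_+$ yields $d_r(M^*)=\lambda_r(G_+)=\lambda_r(G)$ for $1\leq r\leq s-1$, and by the opening reduction this equals $\delta_r(C)$, which finishes the second and third cases.

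This proof is essentially a bookkeeping argument, so I do not expect a single hard step; the substantive content has already been done in Lemma~\ref{GHW-signed-graphic-matroid}(ii) and Theorem~\ref{signed-graphic-rep}. The only points that need care are (a) checking that the upper bound on $r$ in each case really equals $\rho(M)$, so that Lemma~\ref{d-delta-relations} applies throughout the claimed range, and (b) feeding the correct signed graph into Lemma~\ref{GHW-signed-graphic-matroid}(ii), namely $G_\sigma$ itself in the unbalanced case and $G_+$ in the two cases where the vector matroid has collapsed to $M(G)$, using the fact noted in the introduction that the $r$-th edge connectivity $\lambda_r$ depends only on the underlying multigraph and not on the sign function.
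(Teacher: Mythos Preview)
Your proof is correct and follows essentially the same route as the paper: reduce via Lemma~\ref{d-delta-relations} to $d_r(M^*)$, identify $M$ using Theorem~\ref{signed-graphic-rep} and Proposition~\ref{rank-incidence-signed}, and invoke Lemma~\ref{GHW-signed-graphic-matroid}(ii). The only cosmetic difference is the grouping of cases: the paper splits on $p\neq 2$ versus $p=2$ and, in the former, applies Lemma~\ref{GHW-signed-graphic-matroid}(ii) directly to $G_\sigma$ (its balanced clause already returns $\lambda_r(G_\sigma)=\lambda_r(G)$), whereas you merge the balanced and $p=2$ cases and route both through $G_+$; either organization works.
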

\begin{proof} Let $A$ be the incidence matrix of $G_\sigma$ and let
$\rho$ be the rank function of the vector matroid $M=M[A]$. According 
to Proposition~\ref{rank-incidence-signed}, $\rho(M)=s$ if
$p\neq 2$ and $G_\sigma$ is unbalanced, and $\rho(M)=s-1$ if $p=2$
or $G_\sigma$ is balanced.

Assume that $p\neq 2$. By Theorem~\ref{signed-graphic-rep}(a), the
signed-graphic matroid $M(G_\sigma)$ is the vector matroid $M=M[A]$.
Hence, using Lemmas~\ref{d-delta-relations} and \ref{GHW-signed-graphic-matroid}(ii),  
one has
$$ 
\delta_r(C)=d_r(M^*)=\begin{cases}
\upsilon_r(G_\sigma)&\text{if } G_\sigma\textit{ is unbalanced and
}\, 1\leq
r\leq s,\\
\lambda_r(G)&\text{if }G_\sigma\textit{ is balanced and }\, 1\leq
r\leq s-1.
\end{cases}
$$
\quad Assume that $p=2$. By Theorem~\ref{signed-graphic-rep}(b), $M=M[A]$ is the graphic matroid
$M(G)$ and, by Corollary~\ref{circuits-signed-graphic}(a), $M(G_+)$ is 
also the graphic matroid $M(G)$. As $M(G_+)$ is balanced, by 
Lemmas~\ref{d-delta-relations} and \ref{GHW-signed-graphic-matroid}(ii), we get
$\delta_r(C)=d_r(M^*)=\lambda_r(G_+)=\lambda_r(G)$.
\end{proof}

Let $G$ be a multigraph. The $r$-th \textrm{cogirth} 
$\upsilon_r(G_-)$ of $G_-$ is the minimum number of edges whose removal
results in a multigraph with $r$ bipartite connected components. If $r=1$,
$\upsilon_1(G_-)$ is denoted $\upsilon(G_-)$. For simple graphs, the
following combinatorial formulas for the  
generalized Hamming weights were shown in \cite{linear-codes}. 

\begin{corollary}\label{pepe-vila-2018-coro1} Let $C$ be the
incidence matrix code of a connected multigraph
$G$ with $s$ vertices
over a finite field $K$ of ${\rm char}(K)=p$. Then  
$$ 
\delta_r(C)=\begin{cases}
\upsilon_r(G_-)&\text{if } p\neq 2,\, G\textit{ is non-bipartite and
}\, 1\leq r\leq s,\\
\lambda_r(G)&\text{if }p=2\mbox{ and }\, 1\leq r\leq s-1,\\
\lambda_r(G)&\text{if }G\textit{ is bipartite and }\, 1\leq r\leq s-1.
\end{cases}
$$
\end{corollary}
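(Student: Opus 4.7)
The plan is to reduce this corollary directly to Theorem~\ref{pepe-vila-2019} applied to the negative signed graph $G_-$. By the definition of the incidence matrix of a multigraph given just after Definition~\ref{incidence-matrix}, the incidence matrix of $G$ over $K$ is literally the incidence matrix of $G_-$. Hence the incidence matrix code $C$ of $G$ is the same as the incidence matrix code of the signed graph $G_-$, so Theorem~\ref{pepe-vila-2019} can be invoked with $G_\sigma = G_-$.

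Next, I would translate the hypotheses for $G_-$ into the language of the multigraph $G$. By Corollary~\ref{circuits-signed-graphic}(c), the balanced cycles of $G_-$ are exactly the even cycles of $G$, so $G_-$ is balanced (every cycle balanced) if and only if $G$ has no odd cycle, that is, if and only if $G$ is bipartite. Equivalently, $G_-$ is unbalanced if and only if $G$ is non-bipartite. Moreover, the edge connectivity of a signed graph depends only on its underlying multigraph, and $G_-$ and $G$ share the same underlying multigraph, so $\lambda_r(G_-) = \lambda_r(G)$ for every $r$. Finally, $\upsilon_r(G_-)$ is the quantity appearing in the statement of the corollary.

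With these identifications in hand, I would split the proof into the three cases of Theorem~\ref{pepe-vila-2019} applied to $G_\sigma = G_-$: first, if $p \neq 2$ and $G_-$ is unbalanced, then $\delta_r(C) = \upsilon_r(G_-)$ for $1 \le r \le s$, which gives the first case since $G_-$ unbalanced is equivalent to $G$ non-bipartite; second, if $p = 2$, then $\delta_r(C) = \lambda_r(G) = \lambda_r(G)$ for $1 \le r \le s-1$; third, if $G_-$ is balanced, i.e., $G$ is bipartite, then $\delta_r(C) = \lambda_r(G_-) = \lambda_r(G)$ for $1 \le r \le s-1$. Each case matches exactly one branch of the piecewise formula in the corollary.

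There is no real obstacle; the only thing to verify carefully is the correspondence ``balanced $G_-$ $\Leftrightarrow$ bipartite $G$'' via Corollary~\ref{circuits-signed-graphic}(c), and the observation that $\lambda_r$ is an invariant of the underlying multigraph. The connectedness of $G$ is used only to invoke Theorem~\ref{pepe-vila-2019}, which requires $G_\sigma$ to be connected; this holds for $G_\sigma = G_-$ because the connected components of a signed graph coincide with the connected components of its underlying multigraph.
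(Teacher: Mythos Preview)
Your proposal is correct and follows exactly the same approach as the paper: apply Theorem~\ref{pepe-vila-2019} to the negative signed graph $G_-$, using that the incidence matrix of a multigraph is by definition that of $G_-$, and that $G_-$ is balanced if and only if $G$ is bipartite. The paper's proof is a one-line version of what you wrote.
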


\begin{proof} It follows from Theorem~\ref{pepe-vila-2019} by
considering the negative signed graph $G_-$ and noticing that $G_-$ is
balanced if and only if $G$ is bipartite. 
\end{proof}

The next result shows combinatorial formulas for the minimum
distance of the incidence matrix code of a signed graph 
\cite[Proposition~9.2.4]{oxley}.

\begin{corollary}\label{Dankelmann-etal} 
Let $C$ be the incidence matrix code of a connected signed graph
$G_\sigma$ with $s$ vertices, cogirth $\upsilon(G_\sigma)$, edge connectivity  
$\lambda(G_\sigma)$, over a finite field $K$ of ${\rm
char}(K)=p$. Then, the minimum distance $\delta(C)$ of $C$ is given by 
$$ 
\delta(C)=\begin{cases}
\upsilon(G_\sigma)&\text{if } p\neq 2,\, G_\sigma\textit{ is
unbalanced and }\, 1\leq r\leq s,\\
\lambda(G_\sigma)&\text{if }p=2\mbox{ and }\, 1\leq r\leq s-1,\\
\lambda(G_\sigma)&\text{if }G_\sigma\textit{ is balanced and }\, 1\leq r\leq s-1.
\end{cases}
$$
\end{corollary}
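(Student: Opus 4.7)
The plan is to observe that Corollary~\ref{Dankelmann-etal} is nothing more than the $r=1$ specialization of Theorem~\ref{pepe-vila-2019}, since by definition one has $\delta(C)=\delta_1(C)$. Accordingly, I would invoke Theorem~\ref{pepe-vila-2019} with $r=1$ in each of its three cases and translate notation using the identities $\upsilon_1(G_\sigma)=\upsilon(G_\sigma)$ and $\lambda_1(G)=\lambda(G)=\lambda(G_\sigma)$ fixed in the introduction when these graph invariants were defined; recall in particular that the edge connectivity of the signed graph coincides with that of the underlying multigraph because this parameter is independent of $\sigma$.

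The only thing to check is that the range hypotheses $1\leq r\leq s$ (in the unbalanced, $p\neq 2$ case) and $1\leq r\leq s-1$ (in the $p=2$ and balanced cases) do accommodate $r=1$. The former is automatic for any connected signed graph, and the latter requires only $s\geq 2$, which is the only range in which the corresponding statement is meaningful: for $s=1$ the underlying multigraph is a single vertex with loops, so $\lambda(G)=0$ and the $p=2$ or balanced incidence code is trivial in any case.

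The main obstacle is essentially nonexistent: all the combinatorial and matroid-theoretic content has already been absorbed in the proof of Theorem~\ref{pepe-vila-2019} (which in turn rests on Lemmas~\ref{d-delta-relations} and \ref{GHW-signed-graphic-matroid} and on Theorem~\ref{signed-graphic-rep}). The corollary is a cosmetic restatement at $r=1$, and I would expect the written proof to be one or two sentences of the form ``apply Theorem~\ref{pepe-vila-2019} with $r=1$ and use $\upsilon_1=\upsilon$, $\lambda_1=\lambda$.''
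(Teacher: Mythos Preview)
Your proposal is correct and matches the paper's own proof exactly: the paper simply writes ``It follows by making $r=1$ in Theorem~\ref{pepe-vila-2019}.'' Your additional remarks about the range hypotheses and the notational identities $\upsilon_1=\upsilon$, $\lambda_1=\lambda$ are accurate elaborations of what that one-line proof leaves implicit.
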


\begin{proof} It follows by making $r=1$ in Theorem~\ref{pepe-vila-2019}.
\end{proof}

Our next result gives graph theoretical formulas for the
generalized Hamming weights of the
dual code of the incidence matrix code of a signed graph. 

\begin{theorem}\label{pepe-vila-2019-dual} Let $G_\sigma$ be a
connected signed graph with $s$ vertices and
$m$ edges, and let $C$ be the incidence matrix code of $G_\sigma$ over a finite field
$K$ of characteristic $p$. The following hold.
\begin{itemize}
\item[(a)] If $p=2$ or $G_\sigma$ is balanced, 
and $1\leq r\leq m-s+1$ $($resp. $1\leq r\leq s-1$$)$, then
$\delta_r(C^\perp)$ $($resp. $\delta_r(C)$$)$ is the minimum number
of edges of $G$ forming a union of $r$
non-redundant cycles $($resp. cocycles$)$ of $G$. 
\item[(b)] If $p\neq 2$ and $1\leq r\leq m-s$ $($resp. $1\leq r\leq s$$)$, then
$\delta_r(C^\perp)$ $($resp. $\delta_r(C)$$)$ is the 
minimum number of edges of $G$ forming a union of $r$
non-redundant balanced cycles and bowties $($resp. cocircuits$)$ of $G_\sigma$. 
\end{itemize}
\end{theorem}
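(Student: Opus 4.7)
The plan is to reduce both parts to Lemma~\ref{GHW-signed-graphic-matroid} through the two-step translation supplied by Lemma~\ref{d-delta-relations}, namely $\delta_r(C)=d_r(M^*)$ and $\delta_r(C^\perp)=d_r(M)$, where $M=M[A]$ is the vector matroid of the incidence matrix $A$ of $G_\sigma$ over $K$. The bridge between the code and its signed-graphic description is Theorem~\ref{signed-graphic-rep}, which identifies $M[A]$ with $M(G_\sigma)$ when $p\neq 2$ and with the graphic matroid $M(G)$ when $p=2$. The ranges for $r$ in each case will be read off from Proposition~\ref{rank-incidence-signed}: $\rho(M)=s$, $\eta(M)=m-s$ in the unbalanced case with $p\neq 2$, and $\rho(M)=s-1$, $\eta(M)=m-s+1$ in the remaining cases.

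For part (b), I assume $p\neq 2$ and apply Theorem~\ref{signed-graphic-rep}(a) to obtain $M=M(G_\sigma)$. A direct application of Lemma~\ref{GHW-signed-graphic-matroid}(i) to $d_r(M)=\delta_r(C^\perp)$ then yields the description in terms of $r$ non-redundant balanced cycles and bowties of $G_\sigma$, while Lemma~\ref{GHW-signed-graphic-matroid}(iii) applied to $d_r(M^*)=\delta_r(C)$ gives the description in terms of $r$ non-redundant cocircuits of $G_\sigma$. The stated ranges for $r$ are exactly $\eta(M)$ and $\rho(M)$, respectively.

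For part (a), I split on the hypothesis. If $p=2$, Theorem~\ref{signed-graphic-rep}(b) gives $M=M(G)$, which by Corollary~\ref{circuits-signed-graphic}(a) equals $M(G_+)$, the signed-graphic matroid of the balanced signed graph $G_+$. If instead $G_\sigma$ is balanced (any $p$), then $M=M(G_\sigma)=M(G)$ by Corollary~\ref{circuits-signed-graphic}(d). In both subcases $M$ is the signed-graphic matroid of a balanced signed graph, so Lemma~\ref{GHW-signed-graphic-matroid}(i) and (iii) apply. The description then collapses: a balanced signed graph has no unbalanced cycles, hence no bowties (which by definition require two unbalanced cycles), so in Lemma~\ref{GHW-signed-graphic-matroid}(i) the union of non-redundant balanced cycles and bowties is simply a union of $r$ non-redundant cycles of $G$; and by Lemma~\ref{cocycles-graphic-matroid} the cocircuits of a balanced signed graph coincide with the cocycles of $G$, so Lemma~\ref{GHW-signed-graphic-matroid}(iii) yields $\delta_r(C)$ as the minimum number of edges forming a union of $r$ non-redundant cocycles of $G$.

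The proof is essentially bookkeeping: the work has already been done in Lemma~\ref{GHW-signed-graphic-matroid}, so the only real task is to make the matroid identifications cleanly in each of the three regimes and to verify that in the balanced case the ``balanced cycles and bowties'' and ``cocircuits'' descriptions degenerate to the cleaner ``cycles'' and ``cocycles'' descriptions. The mildest obstacle is simply to double-check, using Proposition~\ref{rank-incidence-signed}, that the stated ranges of $r$ in (a) and (b) align with $\rho(M)$ and $\eta(M)$ under each hypothesis, so that Lemma~\ref{GHW-signed-graphic-matroid} is applied in its natural range.
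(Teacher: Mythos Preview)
Your proposal is correct and follows essentially the same approach as the paper. The only difference is organizational: the paper's proof goes back to Corollary~\ref{formulas-for-GHWs} and re-identifies the circuits and cocircuits of $M[A]$ directly in each case, whereas you route everything through Lemma~\ref{GHW-signed-graphic-matroid} (which already packages that computation at the level of the signed-graphic matroid) together with Lemma~\ref{d-delta-relations}. Your version is slightly more economical, but the underlying ingredients---Theorem~\ref{signed-graphic-rep} for the matroid identification, Corollary~\ref{circuits-signed-graphic} and Lemma~\ref{cocycles-graphic-matroid} for the balanced case, and Proposition~\ref{rank-incidence-signed} for the ranges of $r$---are identical.
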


\begin{proof} (a): Assume $p=2$ and $1\leq r\leq m-s+1$. Let $A$ be 
the incidence matrix of $G_\sigma$. By 
Theorem~\ref{signed-graphic-rep}(b), the vector 
matroid $M=M[A]$ is the graphic matroid $M(G)$. Thus the circuits of
$M[A]$ are the cycles of $G$. Therefore, by
Corollary~\ref{formulas-for-GHWs}, 
we get 
$$
\delta_r(C^\perp)=\textstyle\min\left\{ 
\left|\bigcup_{i=1}^r C_i\right|\colon 
\{C_i\}_{i=1}^r\mbox{ are non-redundant cycles of } G\right\}. 
$$
\quad Assume $p=2$ and $1\leq r\leq s-1$. By the previous part,
$M=M[A]$ is the graphic matroid $M(G)$. The circuits of $M^*=M[A]^*$ are the cocycles of $G$
\cite[p.~41]{welsh}, that is, these are edge sets $X$ whose removal
from $G$ increases the number of connected components of $G$ and are
minimal with respect to this property. The 
dual matroid $M^*$ of $M$ is the vector matroid of $C^\perp$
\cite[p.~141]{welsh}. Therefore, by
Corollary~\ref{formulas-for-GHWs} and the equality $C=(C^\perp)^\perp$, 
we get
\begin{eqnarray*}
\delta_r(C)&=&\textstyle\min\left\{ 
\left|\bigcup_{i=1}^r C_i^*\right|\colon 
\{C_i^*\}_{i=1}^r\mbox{ are non-redundant circuits of } M^*\right\}\\
&=&\textstyle\min\left\{ 
\left|\bigcup_{i=1}^r C_i^*\right|\colon 
\{C_i^*\}_{i=1}^r\mbox{ are non-redundant cocycles of } G\right\}. 
\end{eqnarray*}
\quad Assume that $G_\sigma$ is balanced. If $p=2$, the formulas for
$\delta_r(C^\perp)$ and $\delta_r(C)$ follow from the two previous cases.
Assume $p\neq 2$. By Theorem~\ref{signed-graphic-rep}(a), the vector
matroid $M[A]$ is the signed-graphic matroid $M(G_\sigma)$ and, by 
Corollary~\ref{circuits-signed-graphic}(d), $M(G_\sigma)$ is the
graphic matroid $M(G)$. Hence, we can proceed as in the previous cases.  

\quad (b): Assume $1\leq r\leq m-s$. Let $A$ be 
the incidence matrix of $G_\sigma$. By
Theorem~\ref{signed-graphic-matroid}, the circuits of $M(G_\sigma)$
are the balanced cycles and bowties of $G_\sigma$. As $p\neq 2$, 
by Theorem~\ref{signed-graphic-rep}(a), the vector
matroid $M=M[A]$ is $M(G_\sigma)$. Therefore the circuits of $M[A]$ are
the balanced cycles and bowties of $G_\sigma$, and by
Corollary~\ref{formulas-for-GHWs} one has  
$$
\delta_r(C^\perp)=\textstyle\min\left\{ 
\left|\bigcup_{i=1}^r C_i\right|\colon 
\{C_i\}_{i=1}^r\mbox{ are non-redundant circuits of } M\right\}. 
$$

Assume $1\leq r\leq s$. As $M=M[A]$ is the signed-graphic
matroid $M(G_\sigma)$, the circuits of $M^*=M[A]^*$ are the cocircuits
of $G_\sigma$ by 
Lemmas~\ref{cocycles-graphic-matroid} and \ref{jan15-19}. The 
dual matroid $M^*$ of $M$ is the vector matroid of $C^\perp$
\cite[p.~141]{welsh}. Hence, by
Corollary~\ref{formulas-for-GHWs} and noticing $C=(C^\perp)^\perp$, 
we get
\begin{eqnarray*}
\delta_r(C)&=&\textstyle\min\left\{ 
\left|\bigcup_{i=1}^r C_i^*\right|\colon 
\{C_i^*\}_{i=1}^r\mbox{ are non-redundant circuits of } M^*\right\}\\
&=&\textstyle\min\left\{ 
\left|\bigcup_{i=1}^r C_i^*\right|\colon 
\{C_i^*\}_{i=1}^r\mbox{ are non-redundant cocircuits of }
G_\sigma\right\}.
\end{eqnarray*}
\quad This completes the proof of part (b).
\end{proof}

\begin{corollary}\label{pepe-vila-2018-digraph} Let $C$ be the
incidence matrix code, over a finite field $K=\mathbb{F}_q$, 
of a connected digraph $\mathcal{D}$ with $s$ vertices and $m$ edges, 
and let $G$ be its underlying multigraph. 
Then  
\begin{itemize}
\item[(a)] $
\delta_r(C)=\lambda_r(G)\mbox{ for } 1\leq r\leq s-1.
$
\item[(b)] If $1\leq r\leq m-s+1$ $($resp. $1\leq r\leq s-1$$)$, then 
$\delta_r(C^\perp)$ $($resp. $\delta_r(C)$$)$ 
is the minimum number of edges of $G$ forming a union of $r$
non-redundant cycles $($resp. cocycles$)$ of $G$. 
\end{itemize}
\end{corollary}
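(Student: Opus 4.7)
The plan is to reduce this corollary to the signed graph statements of Theorems~\ref{pepe-vila-2019} and \ref{pepe-vila-2019-dual} by passing through the positive signed graph $G_+$ associated to the underlying multigraph $G$ of $\mathcal{D}$. Recall from the paragraph following Definition~\ref{incidence-matrix} that the incidence matrix of the digraph $\mathcal{D}$ is, by definition, the incidence matrix of $G_+$. Consequently, the incidence matrix code $C$ of $\mathcal{D}$ coincides with the incidence matrix code of the connected signed graph $G_+$, and $C^\perp$ coincides with the dual of that code.

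The key observation is that $G_+$ is balanced: every edge has sign $+$, so every cycle has zero negative edges, and zero is even. This puts us in the balanced branch of both theorems, so the conclusions will hold uniformly without splitting on the characteristic of $K$.

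For part (a), I would invoke Theorem~\ref{pepe-vila-2019} applied to the balanced signed graph $G_+$. The relevant case of that theorem gives
\[
\delta_r(C) = \lambda_r(G_+) = \lambda_r(G) \quad \text{for } 1 \le r \le s-1,
\]
using that $\lambda_r$ depends only on the underlying multigraph, as noted in the introduction.

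For part (b), I would apply Theorem~\ref{pepe-vila-2019-dual}(a) to $G_+$, again using that $G_+$ is balanced. This directly expresses $\delta_r(C^\perp)$ (for $1 \le r \le m-s+1$) as the minimum size of a union of $r$ non-redundant cycles of $G$, and $\delta_r(C)$ (for $1 \le r \le s-1$) as the minimum size of a union of $r$ non-redundant cocycles of $G$. There is no real obstacle here; the only step that warrants a sentence is reminding the reader why the passage from $\mathcal{D}$ to $G_+$ is legitimate (identical incidence matrices, hence identical vector matroids and identical codes) and why $G_+$ is balanced. The result then falls out by quoting the already-established signed graph theorems.
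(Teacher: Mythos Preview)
Your proposal is correct and matches the paper's own proof essentially verbatim: the paper simply observes that the incidence matrix of $\mathcal{D}$ is that of the balanced signed graph $G_+$, and then invokes Theorems~\ref{pepe-vila-2019} and~\ref{pepe-vila-2019-dual} for parts (a) and (b) respectively.
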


\begin{proof} Parts (a) and (b) follow from Theorems~\ref{pepe-vila-2019} and
\ref{pepe-vila-2019-dual}, respectively,  by considering the positive signed graph
$G_+$ and noticing that this is a balanced signed graph. 
\end{proof}

\begin{corollary}\label{feb4-19} Let $G$ be a connected multigraph with $s$ vertices and
$m$ edges, and let $C$ be its incidence matrix code over a finite field
$K=\mathbb{F}_q$ of characteristic $p$. The following hold.
\begin{itemize}
\item[(a)] If $p=2$ or $G$ is bipartite, and $1\leq r\leq m-s+1$ 
$($resp. $1\leq r\leq s-1$$)$, then 
$\delta_r(C^\perp)$ $($resp. $\delta_r(C)$$)$ 
is the minimum number of edges of $G$ forming a union of $r$
non-redundant cycles $($resp. cocycles$)$ of $G$. 
\item[(b)] If $p\neq 2$  and $1\leq r\leq m-s$, then
$\delta_r(C^\perp)$ is the 
minimum number of edges of $G$ forming a union of $r$
non-redundant even cycles and bowties of $G$. 
\end{itemize}
\end{corollary}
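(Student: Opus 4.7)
The plan is to reduce Corollary~\ref{feb4-19} to Theorem~\ref{pepe-vila-2019-dual} applied to the negative signed graph $G_-$. By the definition stated just before Theorem~\ref{signed-graphic-rep}, the incidence matrix of a multigraph $G$ over $K$ is the incidence matrix of $G_-$, so the incidence matrix code $C$ of $G$ coincides with the incidence matrix code of $G_-$, and all invariants $\delta_r(C)$, $\delta_r(C^\perp)$ can be read off from $G_-$.

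For part (a), I would first observe that $G_-$ is balanced if and only if every cycle of $G$ is even, i.e. if and only if $G$ is bipartite. Hence the hypothesis \emph{``$p=2$ or $G$ is bipartite''} is exactly the hypothesis \emph{``$p=2$ or $G_\sigma$ is balanced''} of Theorem~\ref{pepe-vila-2019-dual}(a) applied with $G_\sigma = G_-$. Under this hypothesis the circuits of the vector matroid $M[A]$ are the cycles of $G$ (by Theorem~\ref{signed-graphic-rep}(b) when $p=2$ and by Corollary~\ref{circuits-signed-graphic}(d) when $G_-$ is balanced and $p\neq 2$), and the cocircuits of $G_-$ reduce to the cocycles of $G$ via Lemma~\ref{cocycles-graphic-matroid}. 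Substituting these identifications into Theorem~\ref{pepe-vila-2019-dual}(a) yields the claimed formulas for $\delta_r(C^\perp)$ and $\delta_r(C)$.

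For part (b), with $p\neq 2$, I would apply Theorem~\ref{pepe-vila-2019-dual}(b) to $G_-$. Its conclusion describes $\delta_r(C^\perp)$ as the minimum number of edges in a union of $r$ non-redundant balanced cycles and bowties of $G_-$. The final step is to translate these signed-graph objects back to $G$: by Corollary~\ref{circuits-signed-graphic}(c), the balanced (resp.\ unbalanced) cycles of $G_-$ are precisely the even (resp.\ odd) cycles of $G$. In particular, since a bowtie of $G_-$ is built from two unbalanced cycles joined at a vertex or by a path, the bowties of $G_-$ agree exactly with the bowties of $G$ in the sense of Definition~\ref{bowtie}. Making these substitutions gives the stated formula.

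There is no real technical obstacle here: the heavy lifting was done in Theorem~\ref{pepe-vila-2019-dual}, and the only point requiring care is maintaining the dictionary between the combinatorics of $G_-$ and of $G$ (balanced cycles $\leftrightarrow$ even cycles, bowties in the signed sense $\leftrightarrow$ bowties in the multigraph sense), which is provided cleanly by Corollary~\ref{circuits-signed-graphic}(c) together with Definition~\ref{bowtie}. Note that part~(b) does not state a formula for $\delta_r(C)$, which is consistent with the fact that in the unbalanced case Theorem~\ref{pepe-vila-2019-dual}(b) describes $\delta_r(C)$ in terms of cocircuits of $G_\sigma$, and these cocircuits of $G_-$ do not admit as clean a combinatorial description purely in terms of $G$ as the cycles/bowties do.
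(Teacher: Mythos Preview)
Your proposal is correct and follows essentially the same route as the paper: apply Theorem~\ref{pepe-vila-2019-dual} to the signed graph $G_-$ (whose incidence matrix is by definition that of $G$), and then translate the signed-graph notions back to $G$ via Corollary~\ref{circuits-signed-graphic}. Your write-up is in fact more detailed than the paper's very terse proof, but the strategy and the supporting lemmas invoked are the same.
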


\begin{proof} (a): This follows from
Theorem~\ref{pepe-vila-2019-dual}(a) noticing that, if $G$ is
bipartite, then the circuits of $M(G_+)$ and $M(G_-)$ are the cycles 
of $G$. 

(b): This part follows from Corollary~\ref{circuits-signed-graphic}(c) and
Theorem~\ref{pepe-vila-2019-dual}(b), by considering the negative signed graph $G_-$.
\end{proof}

\section{The regularity of the ideal of
circuits}\label{regularity-section}

Let $M$ be a matroid on $E=\{1,\ldots,m\}$, let $\Delta$ be its independence
complex, that is, the faces of $\Delta$ are the independent
sets of $M$, and let $R=\mathbb{Q}[x_1,\ldots,x_m]=\bigoplus_{d=0}^\infty
R_d$ be a polynomial
ring with the standard grading over 
the field of rational numbers. It is convenient also to think of $E$
as the set of variables $\{x_1,\ldots,x_m\}$. The Stanley--Reisner
ideal $I_\Delta$ of $\Delta$, in the sense of \cite{Sta2}, 
is the \textit{edge ideal} $I(\mathcal{C}_M)$ of the
clutter of circuits $\mathcal{C}_M$ of $M$, that is, $I_\Delta$ is
the \textit{ideal of circuits} of $M$ generated by all 
squarefree monomials $\prod_{i\in X}x_i$ such that $X$ is a circuit of $M$. 

The simplicial complex $\Delta$
is pure shellable, in particular the ideal $I_\Delta$ is
Cohen--Macaulay, and the graded Betti
numbers of the Stanley--Reisner ring $K[\Delta]=R/I_\Delta$ are the
same if we replace $\mathbb{Q}$ by any
other field (see \cite[Remark~1, p.~78]{JohVer} and the 
references therein). 

\begin{definition}\label{regularity-socle-degree}\rm Let $I\subset R$ be a graded ideal and let
${\mathbf F}$ be the minimal graded free resolution of $R/I$ as an
$R$-module:
\[
{\mathbf F}:\ \ \ \textstyle 0\rightarrow
\bigoplus_{j}R(-j)^{\beta_{g,j}}
\stackrel{}{\rightarrow} \cdots
\rightarrow\bigoplus_{j}
R(-j)^{\beta_{1,j}}\stackrel{}{\rightarrow} R
\rightarrow R/I \rightarrow 0.
\]
\quad The $(i,j)$-th \textit{graded Betti number} of
$R/I$, denoted $\beta_{i,j}(R/I)$, is $\beta_{i,j}$, the integer $j$ is a
\textit{shift} of the resolution, $g$ is the
\textit{projective dimension} of $R/I$, and the \textit{regularity} of $R/I$ is 
$${\rm reg}(R/I):=\max\{j-i \mid \beta_{i,j}\neq 0\}.$$
\quad If $R/I$ is Cohen-Macaulay (i.e. $g=\dim(R)-\dim(R/I)$) and
there is a unique $j$ such that $\beta_{g,j}\neq 0$, then the ring $R/I$
is called {\em level}. 
\end{definition}

An excellent reference for the regularity of graded ideals and Betti
numbers is the book of Eisenbud \cite{eisenbud-syzygies}. The shifts and the Betti numbers of
the Stanley--Reisner ring of the independence complex $\Delta$ of a
matroid $M$ were determined by Johnsen and 
Verdure \cite{JohVer}.  

The following result shows that one can read the 
generalized Hamming weights of a matroid $M$ from the minimal graded free resolution of the
ideal of circuits of $M$. 

\begin{theorem}{\cite[Theorem~2]{JohVer}}\label{JV3} Let $M$ be a
matroid, let $R/I_\Delta$ be the Stanley--Reisner
ring of the independence complex $\Delta$ of $M$, and let 
$\beta_{r,j}(M)$ denote the $(r,j)$-th 
graded Betti number of $R/I_\Delta$. Then 
the generalized Hamming weights of $M$ are given by 
\[
d_r(M)=\min\{j:\beta_{r,j}(M)\neq 0\}\ \mbox{ for }\ 1\leq r\leq
\eta(M).
\]
\end{theorem}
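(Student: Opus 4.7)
The plan is to combine Hochster's formula for the graded Betti numbers of a Stanley--Reisner ring with the fact that the independence complex of a matroid is Cohen--Macaulay.

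First I would recall Hochster's formula: for the Stanley--Reisner ideal $I_\Delta$ of a simplicial complex $\Delta$ on vertex set $E$, and for $r \geq 1$,
\[
\beta_{r,j}(R/I_\Delta) \;=\; \sum_{\substack{W \subseteq E \\ |W|=j}} \dim_{\mathbb{Q}} \tilde{H}_{j-r-1}(\Delta|_W;\mathbb{Q}),
\]
where $\Delta|_W := \{F \in \Delta : F \subseteq W\}$ is the restriction to $W$. When $\Delta$ is the independence complex of $M$, the restriction $\Delta|_W$ is precisely the independence complex of the restricted matroid $M|_W$, which is pure of dimension $\rho(W)-1$.

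Next I would invoke the classical theorem (Provan--Billera, Bj\"orner) that the independence complex of any matroid is pure shellable, hence Cohen--Macaulay. Consequently $\tilde{H}_i(\Delta|_W;\mathbb{Q}) = 0$ for $i < \rho(W)-1$, so a nonzero contribution in Hochster's sum forces $j-r-1 = \rho(W)-1$, equivalently $r = |W|-\rho(W) = \eta(W)$. This gives the equivalence
\[
\beta_{r,j}(M) \neq 0 \;\iff\; \exists\, W \subseteq E \text{ with } |W|=j,\ \eta(W)=r,\ \tilde{H}_{\rho(W)-1}(\Delta|_W) \neq 0,
\]
from which the inequality $\min\{j : \beta_{r,j}(M) \neq 0\} \geq d_r(M)$ is immediate.

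For the reverse inequality, let $W$ realize $d_r(M)$, so $|W| = d_r(M)$ and $\eta(W) = r$. Minimality of $|W|$ implies $M|_W$ has no coloop: a coloop $e$ would satisfy $\rho(W\setminus\{e\}) = \rho(W)-1$, giving $\eta(W\setminus\{e\}) = r$ with $|W\setminus\{e\}| < d_r(M)$, a contradiction. It then suffices to show $\tilde{H}_{\rho(W)-1}(\Delta|_W;\mathbb{Q}) \neq 0$, since this will yield $\beta_{r,d_r(M)}(M) \neq 0$. This is the main obstacle: it is the matroid--topological fact that the independence complex of a coloopless matroid has nontrivial top reduced homology. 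One way to prove it is to use Bj\"orner's shelling to identify the rank of the top homology with the number of bases having no broken circuit, which is positive precisely when $M|_W$ has no coloops; equivalently, by Rota's theorem this rank equals the absolute value of the M\"obius function $\mu_L(\hat 0,\hat 1)$ of the geometric lattice of flats of the simplification of $M|_W$, which is nonzero. Combining the two inequalities gives the theorem.
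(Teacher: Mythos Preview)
The paper does not supply a proof of this statement; it is quoted from \cite{JohVer}. Your strategy---Hochster's formula plus Cohen--Macaulayness of matroid independence complexes, reducing everything to the claim that a coloop-free restriction $M|_W$ has nonvanishing top reduced homology---is exactly the standard one and is essentially what Johnsen and Verdure do.

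The gap is in your justification of that last nonvanishing claim. The rank of $\tilde H_{\rho(W)-1}(\Delta|_W;\mathbb{Q})$ equals $T_{M|_W}(0,1)$, the number of bases of $M|_W$ with \emph{internal activity} zero in Bj\"orner's shelling, and $T_N(0,1)>0$ holds precisely when $N$ has no coloops (a coloop $e$ gives $T_N(x,y)=x\cdot T_{N/e}(x,y)$, hence $T_N(0,1)=0$; the converse follows by deletion--contraction induction). What you invoke instead---no-broken-circuit bases and the M\"obius number $\mu_L(\hat 0,\hat 1)$ of the lattice of flats---computes the dual quantity $T_{M|_W}(1,0)$, which is the top Betti number of the broken-circuit complex (equivalently, of the order complex of the proper part of the lattice of flats), not of the independence complex; and $T_N(1,0)$ is positive iff $N$ has no \emph{loops}, which is not the condition you established. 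If you want to keep the NBC/M\"obius language, apply it to the dual matroid: $T_{M|_W}(0,1)=T_{(M|_W)^*}(1,0)$, and $(M|_W)^*$ is loopless exactly when $M|_W$ is coloop-free. With this correction your argument is complete.
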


\begin{corollary}\label{dec30-18} Let $C$ be an $[m,k]$-linear code and let
$R/I_\Delta$ be the Stanley--Reisner ring of the independence complex
of the matroid of $C$. Then
\[
\delta_r(C^\perp)=\min\{j:\beta_{r,j}(R/I_\Delta)\neq 0\}\ \mbox{ for }\ 1\leq r\leq
m-k.
\]
\end{corollary}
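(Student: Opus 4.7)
The plan is to chain together two results already in hand: Lemma~\ref{d-delta-relations}, which translates between generalized Hamming weights of a linear code and of the associated vector matroid, and Theorem~\ref{JV3} of Johnsen--Verdure, which reads the GHWs of a matroid off the graded Betti numbers of the Stanley--Reisner ring of its independence complex.

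First I would fix a generator matrix $A$ of $C$ and let $M=M[A]$ denote the vector matroid of $C$, whose independence complex is $\Delta$. By Lemma~\ref{d-delta-relations}, one has $\delta_r(C^\perp)=d_r(M)$ for all $1\leq r\leq m-k$. Before invoking Theorem~\ref{JV3}, I would verify that the range $1\leq r\leq m-k$ in the statement coincides with the admissible range $1\leq r\leq \eta(M)$ for the matroid GHWs: since the rank of $M$ equals $\mathrm{rank}(A)=k$ and the ground set has size $m$, the nullity is $\eta(M)=|E|-\rho(E)=m-k$.

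Second, I would apply Theorem~\ref{JV3} directly to $M$, which yields
\[
d_r(M)=\min\{j:\beta_{r,j}(R/I_\Delta)\neq 0\}\quad\text{for}\quad 1\leq r\leq \eta(M)=m-k,
\]
using that $\beta_{r,j}(M)$ in the statement of Theorem~\ref{JV3} is by definition $\beta_{r,j}(R/I_\Delta)$. Combining this with the equality from the previous paragraph gives the desired formula.

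There is essentially no obstacle here: the result is a formal corollary, and no new combinatorial or homological input is required beyond checking that the range of $r$ matches on both sides via the rank computation $\rho(M)=k$. The only thing to be slightly careful about is that Theorem~\ref{JV3} is stated over $\mathbb{Q}$, but as recalled in the paragraph preceding Definition~\ref{regularity-socle-degree}, the graded Betti numbers of $R/I_\Delta$ are independent of the ground field, so one may freely transfer the statement to the setting of the finite field over which $C$ is defined.
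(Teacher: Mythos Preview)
Your proof is correct and follows exactly the route the paper takes: the paper's proof is the one-line ``It follows from Lemma~\ref{d-delta-relations} and Theorem~\ref{JV3},'' and you have simply written out the details, including the verification that $\eta(M)=m-k$.
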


\begin{proof} It follows from Lemma~\ref{d-delta-relations} and
Theorem~\ref{JV3}.
\end{proof}

The following notion of a non-degenerate code will play a role here.
\begin{definition}
If $C \subseteq K^m$ is a linear code and $\pi_i$ is the $i$-th
projection map 
$$
\pi_i: C \rightarrow K,\ \ \ (v_1,\ldots,v_m)\mapsto v_i,
$$
for $i=1,\ldots,m$, we say that $C$ is \textit{degenerate} if for some $i$
the image of $\pi_i$ is zero, otherwise we say that $C$ is \textit{non-degenerate}.   
\end{definition}

\begin{remark}\label{dec29-18}
If $C\subseteq K^m$ is a non-degenerate linear code, then $\delta_k(C)=m$, where $k$ is the
dimension of $C$. If all columns of a generator matrix of
$C$ are non-zero, then $\pi_i\neq 0$ for $i=1,\ldots,m$ and $C$ is non-degenerate.  
\end{remark}
\begin{lemma}\label{dec28-18}
Let $M$ be the matroid on $E$ of a linear code $C$ and let $\Delta$
$($resp. $\Delta^*$$)$ be the 
independence complex of $M$ $($resp. $M^*$$)$. The following hold.
\begin{itemize}
\item[(a)] If $r=\dim(C^\perp)$, then ${\rm reg}(R/I_{\Delta})=\delta_r(C^\perp)-\dim(C^\perp)$. 
\item[(b)] If $r=\dim(C)$, then ${\rm reg}(R/I_{\Delta^*})=\delta_r(C)-\dim(C)$.
\item[(c)] If $C^\perp$ is non-degenerate, 
then ${\rm reg}(R/I_{\Delta})=\dim(C)$. 
\item[(d)] If $C$ is non-degenerate, 
then ${\rm reg}(R/I_{\Delta^*})=\dim(C^\perp)$. 
\end{itemize}
\end{lemma}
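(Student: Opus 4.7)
The plan is to deduce all four parts from three ingredients: the Cohen--Macaulay property of $R/I_\Delta$ and $R/I_{\Delta^*}$ (noted at the start of this section), the Johnsen--Verdure formula in the form of Corollary~\ref{dec30-18}, and the level property of $R/I_\Delta$ at the last step of its minimal free resolution.

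For part (a), I would first observe that, since $\Delta$ is the independence complex of the matroid $M$ of $C$, its facets are the bases of $M$, so $\dim(R/I_\Delta)=\rho(M)=\dim(C)$. Combined with the Cohen--Macaulay property, the Auslander--Buchsbaum formula then yields
\[
g:=\operatorname{pd}(R/I_\Delta)=m-\dim(C)=\dim(C^\perp).
\]
Because $R/I_\Delta$ is Cohen--Macaulay, its regularity equals $\max\{\,j-g:\beta_{g,j}(R/I_\Delta)\neq 0\,\}$, and because $R/I_\Delta$ is level there is a unique $j_0$ with $\beta_{g,j_0}(R/I_\Delta)\neq 0$. By Corollary~\ref{dec30-18} this $j_0$ equals $\delta_g(C^\perp)=\delta_{\dim(C^\perp)}(C^\perp)$. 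Assembling these yields $\operatorname{reg}(R/I_\Delta)=\delta_{\dim(C^\perp)}(C^\perp)-\dim(C^\perp)$, proving (a).

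Part (b) then follows by applying (a) to the dual code $C^\perp$, whose matroid is $M^*$ with independence complex $\Delta^*$, using $(C^\perp)^\perp=C$. Parts (c) and (d) are immediate consequences of (a) and (b) via Remark~\ref{dec29-18}: a non-degenerate $[m,k]$-linear code $C'$ satisfies $\delta_k(C')=m$, so substituting $C'=C^\perp$ in (a) yields $\operatorname{reg}(R/I_\Delta)=m-\dim(C^\perp)=\dim(C)$, and substituting $C'=C$ in (b) yields $\operatorname{reg}(R/I_{\Delta^*})=m-\dim(C)=\dim(C^\perp)$.

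The main obstacle will be justifying the level property of $R/I_\Delta$, that is, the uniqueness of the top shift $j_0$. The approach is via Hochster's formula together with the simplicial-join description of the independence complex of a direct sum of matroids: a subset $W\subseteq E$ contributing to $\beta_{g,j}(R/I_\Delta)$ must satisfy $\eta(W)=g$, and if $W$ properly contains the union $W_0$ of all circuits of $M$, then $M|W$ acquires a coloop, forcing $\Delta_W$ to be the simplicial join of a point with a smaller complex and hence killing its top reduced homology. Therefore $W$ must equal $W_0$, and $|W_0|=d_g(M)=\delta_{\dim(C^\perp)}(C^\perp)$ by Theorem~\ref{JV3}, giving both existence and uniqueness of $j_0$.
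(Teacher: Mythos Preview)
Your overall argument mirrors the paper's: both use that $R/I_\Delta$ is Cohen--Macaulay to identify the projective dimension with $\dim(C^\perp)$ and to read the regularity off the last step of the resolution, then invoke the level property to turn the maximum shift there into the minimum, and finally apply Corollary~\ref{dec30-18}; parts (b)--(d) follow by duality and Remark~\ref{dec29-18} in both treatments.

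The only substantive difference is how levelness is justified. The paper simply cites \cite[Theorem~3.4]{Sta2}. Your Hochster-based argument is a legitimate alternative, but as written it has a gap. Your coloop observation gives one inclusion: if $W\setminus W_0\neq\emptyset$ then any element there is a coloop of $M|W$, so a contributing $W$ (one with $\tilde H_{\rho(W)-1}(\Delta_W)\neq 0$) must satisfy $W\subseteq W_0$. But you then assert ``therefore $W=W_0$'' without establishing the reverse inclusion $W\supseteq W_0$, and nothing in your paragraph forces it. That inclusion in fact follows from $\eta(W)=g$ alone: for each $e\in E\setminus W$ one has $g=\eta(W)\le\eta(E\setminus\{e\})\le\eta(E)=g$, whence $\rho(E\setminus\{e\})=\rho(E)-1$, so $e$ is a coloop of $M$ and hence $e\notin W_0$; thus $E\setminus W\subseteq E\setminus W_0$, i.e.\ $W\supseteq W_0$. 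With this line added your argument is complete and gives a self-contained replacement for the Stanley citation.
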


\begin{proof} By  \cite[Corollary~6.3.5]{monalg-rev}, the 
Stanley--Reisner ring $K[\Delta]:=R/I_\Delta$ has Krull dimension
$\dim(\Delta)+1$. As $\dim(\Delta)$ is $\rho(M)-1$ and $\dim(R)=|E|$,  one has $|E|-{\rm
ht}(I_\Delta)=\rho(M)$, where ${\rm
ht}(I_\Delta)$ is the height of the ideal $I_\Delta$. Therefore
\begin{eqnarray*}
\eta(M)&=&|E|-\rho(M)={\rm ht}(I_\Delta)\\
&=&|E|-\dim(C)=\dim(C^\perp)=\rho(M^*).
\end{eqnarray*}
\quad Thus $\eta(M)={\rm ht}(I_\Delta)=\dim(C^\perp)=\rho(M^*)$. The independence 
complex $\Delta$ is pure shellable 
\cite[Remark~1, p.~78]{JohVer}. Hence $I_\Delta$ is Cohen--Macaulay,
that is, ${\rm ht}(I_\Delta)$ is equal to ${\rm pd}_R(R/I_\Delta)$, the
projective dimension of $R/I_\Delta$. Let $\beta_{r,j}$ be the
$(r,j)$-th graded Betti number of $R/I_\Delta$, with 
$r=\eta(M)={\rm pd}_R(R/I_\Delta)$. According to
\cite[Theorem~3.4]{Sta2}, the ring $R/I_\Delta$ is level. Therefore,
by making $r=\dim(C^\perp)$ in Corollary~\ref{dec30-18}, we get
\begin{eqnarray*}
{\rm reg}(R/I_\Delta)&=&\max\{j-r\vert\, \beta_{r,j}\neq
0\}=\min\{j-r\vert\, \beta_{r,j}\neq
0\}\\
&=&\min\{j\vert\, \beta_{r,j}\neq 0\}-r=\delta_r(C^\perp)-\dim(C^\perp).
\end{eqnarray*}
\quad Thus the equality of (a) holds. The equality of (b) follows from (a) using duality. Parts
(c) and (d) follow readily from Remark~\ref{dec29-18}.
\end{proof}

\begin{theorem}\label{regularity-ideal}
Let $G_\sigma$ be a signed graph without loops with $s$ vertices, $m$
edges, $c$ connected components,  
$c_0$ balanced components, let $M$ be the
matroid on $E$ of the incidence matrix code $C$ of $G_\sigma$, over a
finite field of characteristic $p$, 
and let  $\Delta$ $($resp. $\Delta^*$$)$ be the
independence complex of $M$ $($resp. $M^*$$)$. The following hold.
$$
{\rm reg}(R/I)=\begin{cases} 
m-s+c_0&\text{if }I=I_{\Delta^*},\,\ p\neq 2
,\\
 m-s+c&\text{if }I=I_{\Delta^*},\,\ p=2\,  
\mbox{ or }\, G_\sigma\mbox{ is balanced},\\
s-c_0&\text{if }I=I_\Delta,\ p\neq 2,   
\mbox{ and any }i\in E\mbox{ is in some circuit of }M
,\\
 s-c&\text{if }I=I_\Delta,\ p=2 
\mbox{ or }G_\sigma\mbox{ is balanced},\mbox{ and }G\textit{ has no
bridges}.
\end{cases}
$$
\end{theorem}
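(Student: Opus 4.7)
The plan is to reduce each of the four formulas to Lemma~\ref{dec28-18}. The key preliminary observation is that since $G_\sigma$ has no loops, every column of the incidence matrix $A$ is of the form $\mathbf{e}_i\pm\mathbf{e}_j$, hence nonzero; by Remark~\ref{dec29-18} the code $C$ is therefore non-degenerate.

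For $I=I_{\Delta^*}$, I would invoke Lemma~\ref{dec28-18}(d), which immediately gives ${\rm reg}(R/I_{\Delta^*})=\dim(C^\perp)=m-\dim(C)$. Substituting $\dim(C)=s-c_0$ when $p\neq 2$, and $\dim(C)=s-c$ when $p=2$ or $G_\sigma$ is balanced (Proposition~\ref{rank-incidence-signed}), delivers the first two formulas.

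For $I=I_\Delta$, the idea is to apply Lemma~\ref{dec28-18}(c), so the task becomes verifying that $C^\perp$ is non-degenerate under the stated hypotheses. The translation step, which I expect to be the main point of the argument, is the chain of equivalences: $\pi_i(C^\perp)=0$ iff the $i$-th column of a parity check matrix of $C$ is zero iff $\{i\}$ is a loop of $M^*$ iff $i$ is a coloop of $M$ iff $i$ belongs to no circuit of $M$. Hence $C^\perp$ is non-degenerate exactly when every element of $E$ lies in some circuit of $M$. For $p\neq 2$ this is precisely the stated hypothesis; for $p=2$ or $G_\sigma$ balanced, Theorem~\ref{signed-graphic-rep} (together with Corollary~\ref{circuits-signed-graphic}(d) in the balanced case) identifies $M$ with the graphic matroid $M(G)$, whose circuits are the cycles of $G$, and since $G$ has no graph-loops, an edge lies in a cycle iff it is not a bridge, so the hypothesis ``$G$ has no bridges'' supplies the required non-degeneracy. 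In each sub-case ${\rm reg}(R/I_\Delta)=\dim(C)$ then evaluates to $s-c_0$ or $s-c$ by Proposition~\ref{rank-incidence-signed}, matching the remaining two formulas. Beyond this matroidal reformulation of non-degeneracy, the proof is bookkeeping with the rank formula of Proposition~\ref{rank-incidence-signed}.
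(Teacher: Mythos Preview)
Your proposal is correct and follows essentially the same route as the paper: reduce everything to Lemma~\ref{dec28-18}, using non-degeneracy of $C$ (from the no-loops hypothesis) for the $I_{\Delta^*}$ cases and non-degeneracy of $C^\perp$ for the $I_\Delta$ cases, with Proposition~\ref{rank-incidence-signed} supplying the dimension. The one stylistic difference is that the paper establishes non-degeneracy of $C^\perp$ by exhibiting, for each $i$, an explicit circuit vector in $C^\perp$ with nonzero $i$-th entry, whereas you argue via the cleaner matroidal chain ``$\pi_i(C^\perp)=0 \Leftrightarrow i$ is a loop of $M^* \Leftrightarrow i$ is a coloop of $M \Leftrightarrow i$ lies in no circuit of $M$''; these are equivalent arguments and yours is arguably more transparent.
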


\begin{proof} Let $A$ be the incidence matrix of $G_\sigma$. As
$G_\sigma$ has no loops, all columns of $A$ are non-zero, that is, $C$
is non-degenerate. Hence, the first two formulas follow at once from
Proposition~\ref{rank-incidence-signed}, Lemma~\ref{dec28-18}(d), and
the equality $\dim(C^\perp)=m-\dim(C)$. 

Assume that $p\neq 2$ and suppose any $i\in E$ is in 
some circuit of $M$. Let $H$ be the parity check matrix of $C$ whose
rows correspond to the circuits of $C$ 
(see the discussion below). The matrix $H$ is a generator matrix for
$C^\perp$ and $M^*$ is the vector matroid $M[H]$. Let $\mathbf{v}_1,\ldots,\mathbf{v}_m$ be
the column vectors of $A$.  Take any $i\in E$, then $i$ is in some
circuit $X\subseteq E$ of $M$. Then $\sum_{j\in
X}\lambda_j\mathbf{v}_j=0$, where $\lambda_j\neq 0$ for $j\in X$. 
Setting $\lambda_j=0$ for $j\in E\setminus X$, we get that 
$\mathbf{\lambda}=(\lambda_1,\ldots,\lambda_m)$ is  
a row of $H$ and $\lambda_i\neq 0$. Thus the $i$-th column of $H$ is
non-zero for $i=1,\ldots,m$, that is, $C^\perp$ is non-degenerate.
Therefore, the third formula follows from
Proposition~\ref{rank-incidence-signed} and Lemma~\ref{dec28-18}(c). 

Assume that $p=2$ or $G_\sigma$ is balanced, and suppose $G$ has no
bridges. Then the vector matroid $M$ is the graphic matroid of $G$. As
$G$ has no bridges, i.e., any edge belongs to a cycle, one has that
every edge is in some circuit of $M$. Hence, by
the previous part, $C^ \perp$ is non-degenerate. Hence, by 
Proposition~\ref{rank-incidence-signed} and Lemma~\ref{dec28-18}(c), the
fourth equality follows. 
\end{proof}

\section{An algebraic formula for the frustration index}\label{frustration-index-section}

Let $G_\sigma$ be a connected signed simple graph with $s$ vertices,
$m$ edges, frustration index $\varphi(G_\sigma)$, and let
$V(G_\sigma)=\{t_1,\ldots,t_s\}$ be its vertex set. For use below,
$\mathbb{X}$ will denote the set of projective points in the
projective space $\mathbb{P}^{s-1}$ defined by the column vectors of the
incidence matrix of $G_\sigma$ over a field $K$ 
of ${\rm char}(K)\neq 2$. Consider a polynomial ring 
$S=K[t_1,\ldots,t_s]=\bigoplus_{d=0}^\infty S_d$ over a field $K$ with the
standard grading. Given a
homogeneous polynomial 
$h$ in $S$, that is, $h\in S_d$ for some $d$, we denote the set of zeros of $h$ in $\mathbb{X}$ by
$V_\mathbb{X}(h)$. The {\it vanishing ideal\/} of
$\mathbb{X}$, denoted $I(\mathbb{X})$,  is the ideal of $S$ 
generated by the homogeneous polynomials that vanish at all points of
$\mathbb{X}$. 

The following characterization of balanced signed graphs is due to 
Harary \cite{Harary}. For other characterizations of this property 
see \cite{Zaslavsky-signed-graphs} and the references therein.

\begin{theorem}{\rm(\cite[Theorem~3]{Harary},
\cite[Proposition~2.1]{Zaslavsky-signed-graphs})}
\label{Harary-balanced} A signed simple graph is balanced if and only if
its vertex set can be partitioned into two disjoint classes $($possible
empty$)$, such that an edge is negative if and only if its two 
endpoints belong to distinct classes. 
\end{theorem}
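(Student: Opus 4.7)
The plan is to prove both directions of the equivalence, treating the ``if'' direction as a short counting argument and the ``only if'' direction as an explicit construction of the partition using a parity function on vertices.

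\textbf{The ``if'' direction.} Assume $V(G_\sigma)=V_1\sqcup V_2$ with an edge negative exactly when its endpoints lie in different classes. Let $C=v_1 v_2\cdots v_k v_1$ be any cycle. As we traverse $C$, a negative edge corresponds to a ``class switch'' and a positive edge to ``staying in the same class''. Since we return to $v_1$, the total number of switches must be even, i.e., $C$ contains an even number of negative edges. Hence every cycle is balanced.

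\textbf{The ``only if'' direction.} Assume $G_\sigma$ is balanced. By handling each connected component independently (choosing the partition on each one), we may assume $G_\sigma$ is connected. Fix a base vertex $v_0\in V(G)$. For any vertex $v$, choose a path $P$ from $v_0$ to $v$ and set
\[
\varepsilon(v):=(-1)^{n_-(P)},
\]
where $n_-(P)$ denotes the number of negative edges of $P$. The first job is to show $\varepsilon(v)$ is independent of the chosen path. Given two simple paths $P_1$ and $P_2$ from $v_0$ to $v$, the edge-set symmetric difference $P_1\triangle P_2$ decomposes as an edge-disjoint union of cycles of $G$. By hypothesis each such cycle is balanced, so $n_-(P_1\triangle P_2)$ is even; combined with $n_-(P_1)+n_-(P_2)=2\,n_-(P_1\cap P_2)+n_-(P_1\triangle P_2)$, this forces $n_-(P_1)\equiv n_-(P_2)\pmod 2$. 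Therefore $\varepsilon$ is well defined.

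With $\varepsilon$ in hand, put $V_1:=\varepsilon^{-1}(1)$ and $V_2:=\varepsilon^{-1}(-1)$. I would then verify the edge condition as follows: for any edge $e=\{u,v\}$, pick a path $P$ from $v_0$ to $u$; appending $e$ yields a walk from $v_0$ to $v$, and applying the well-definedness argument above (extended from paths to walks via the observation that any closed walk decomposes into cycles plus retraced edges which cancel in pairs) gives $\varepsilon(v)=\varepsilon(u)\cdot\sigma(e)$. Thus $\sigma(e)=-$ if and only if $\varepsilon(u)\neq\varepsilon(v)$, i.e., $u$ and $v$ lie in distinct classes, which is the desired characterization.

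\textbf{Main obstacle.} The only non-trivial point is the well-definedness of $\varepsilon$, which depends on the cycle-decomposition of $P_1\triangle P_2$ and the fact that balance is a cycle-wise property. One needs to be a little careful because $P_1\triangle P_2$ need not itself be a single cycle; the standard decomposition of an Eulerian subgraph (every vertex has even degree in $P_1\triangle P_2$) into edge-disjoint cycles does the job. Everything else is bookkeeping.
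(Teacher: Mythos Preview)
The paper does not give its own proof of this theorem: it is stated with citations to Harary and Zaslavsky and then used as a tool in the proof of Lemma~\ref{vila-zaslavsky-lemma}. So there is nothing in the paper to compare your argument against.

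That said, your proof is correct and is essentially the classical one. The ``if'' direction is the obvious parity count. For the ``only if'' direction your switching function $\varepsilon$ is exactly Zaslavsky's \emph{switching} (or Harary's potential function), and the well-definedness step via the Eulerian decomposition of $P_1\triangle P_2$ is the standard way to justify it. Your caveat about walks versus paths is the right place to be careful; a slightly cleaner variant that avoids walks entirely is to fix a spanning tree $T$ of the component, define $\varepsilon(v)$ via the unique $T$-path from $v_0$ to $v$, and then check the edge condition separately for tree edges (immediate) and non-tree edges (use that the fundamental cycle of $e$ with respect to $T$ is balanced). Either route works.
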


\begin{lemma}\label{vila-zaslavsky-lemma}
Let $G_\sigma$ be a connected signed simple graph over a
field $K$ of ${\rm char}(K)\neq 2$. Then
\begin{equation} \label{feb20-19}
\varphi(G_\sigma)=\min\{|\mathbb{X}\setminus V_\mathbb{X}(h)|\colon
\, h=a_1t_1+\cdots+a_st_s,\, a_i\in\{\pm 1\}\mbox{ for all }i\}.
\end{equation}
\end{lemma}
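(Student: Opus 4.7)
The plan is to translate the right-hand side into combinatorial data about the signed graph and then apply Harary's theorem (Theorem~\ref{Harary-balanced}). The key observation is that for a linear form $h=a_1t_1+\cdots+a_st_s$ with $a_i\in\{\pm1\}$, the assignment $a\colon V(G_\sigma)\to\{\pm1\}$ induces a partition $V(G_\sigma)=V_+\sqcup V_-$ with $V_\pm=\{t_i\colon a_i=\pm1\}$. Evaluating $h$ on the column vector of an edge $e=\{t_i,t_j\}$ gives $a_i-a_j$ if $\sigma(e)=+$ and $a_i+a_j$ if $\sigma(e)=-$. Therefore $h$ vanishes on the projective point of $e$ exactly when $e$ is consistent with the partition in the sense of Harary: either $\sigma(e)=+$ with both endpoints in the same class, or $\sigma(e)=-$ with endpoints in different classes.

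From this dictionary, $\mathbb{X}\setminus V_\mathbb{X}(h)$ is in bijection with the set $F(a)$ of edges that are inconsistent with the partition induced by $a$. First I would show the inequality $\varphi(G_\sigma)\le|\mathbb{X}\setminus V_\mathbb{X}(h)|$ for every admissible $h$: after deleting $F(a)$ from $G_\sigma$, all remaining edges are consistent with the partition $(V_+,V_-)$, so by Theorem~\ref{Harary-balanced} the resulting signed subgraph is balanced, whence $\varphi(G_\sigma)\le|F(a)|=|\mathbb{X}\setminus V_\mathbb{X}(h)|$. Taking the minimum over $h$ yields one direction.

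For the reverse inequality, I would start from a minimum frustration set $F\subseteq E(G)$ with $|F|=\varphi(G_\sigma)$ such that $G_\sigma\setminus F$ is balanced. By Theorem~\ref{Harary-balanced} applied to $G_\sigma\setminus F$, the vertex set $V(G_\sigma)$ splits as $V_+\sqcup V_-$ so that every edge of $G_\sigma\setminus F$ is consistent with this partition. Define $a_i=+1$ if $t_i\in V_+$ and $a_i=-1$ otherwise, and set $h=\sum a_it_i$. Then the inconsistent edges $F(a)$ form a subset of $F$, so $|\mathbb{X}\setminus V_\mathbb{X}(h)|=|F(a)|\le|F|=\varphi(G_\sigma)$. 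Combining both inequalities gives Eq.~\eqref{feb20-19}.

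The only real subtlety — and the place I would be careful — is ensuring that the points of $\mathbb{X}$ are in bijection with the edges of $G_\sigma$, so that counting points where $h$ does not vanish actually counts inconsistent edges. Since $G_\sigma$ is a signed \emph{simple} graph with no loops and ${\rm char}(K)\neq 2$, each edge contributes a column of the form $\mathbf{e}_i\pm\mathbf{e}_j$ with $i\neq j$, and distinct edges give projectively distinct columns (their nonzero coordinate positions, or the relative sign of their two nonzero entries, differ). Once this bijection is recorded, the rest of the proof is the clean combinatorial translation described above and poses no further obstacle.
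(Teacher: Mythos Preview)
Your proposal is correct and follows essentially the same route as the paper: both arguments rest on Harary's characterization (Theorem~\ref{Harary-balanced}) together with the correspondence between linear forms $h=\sum a_it_i$ with $a_i\in\{\pm1\}$ and vertex bipartitions $V_+\sqcup V_-$, under which $h$ vanishes on the column of an edge precisely when that edge is consistent with the bipartition. Your write-up is in fact a bit more streamlined---you set up the dictionary once and let both inequalities fall out, whereas the paper carries out a case analysis ($V_2=\emptyset$ versus $V_1,V_2\neq\emptyset$) and uses minimality of the frustration set to prove the exact equality $V_{\mathbb X}(h)=\{\mathbf v_i\}_{i=r+1}^m$; for the inequality you need, your weaker containment $F(a)\subseteq F$ suffices. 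Your explicit check that distinct edges of a signed \emph{simple} graph yield distinct projective points (so that $|\mathbb X|=m$) is a point the paper uses tacitly.
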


\begin{proof} Let $\mathbf{v}_1,\ldots,\mathbf{v}_m$ be
the column vectors of the incidence matrix of $G_\sigma$. 
We set $r=\varphi(G_\sigma)$ and let $r_0$ be
the right hand side of Eq.~(\ref{feb20-19}). 
If $G_\sigma$ is balanced, using Theorem~\ref{Harary-balanced}, it is
not hard to see that there is a linear polynomial
$h=a_1t_1+\cdots+a_st_s$, $a_i\in\{\pm 1\}$ for all $i$, such that
$h(\mathbf{v}_i)=0$ for all $i$, 
that is, $r_0=0$ and $\varphi(G_\sigma)=r_0$ (see the discussion below). Thus we may assume that
$G_\sigma$ is not balanced. Pick a minimum set of edges
$e_1,\ldots,e_r$ such that the signed subgraph 
$H_\sigma=G_\sigma\setminus\{e_1,\ldots,e_r\}$ is balanced. We may
assume that $\{e_1,\ldots,e_r,\ldots,e_m\}$ is the set of edges of
$G_\sigma$ and that $e_i$ corresponds to $\mathbf{v}_i$ for
$i=1,\ldots,m$. We first show the inequality $r\geq r_0$. Note that
$V(H_\sigma)=V(G_\sigma)$. According
to Theorem~\ref{Harary-balanced}, the vertex set of $H_\sigma$ can be
partitioned into two disjoint classes $V_1$ and $V_2$ 
(possible empty) is such a way that an edge of $H_\sigma$ is negative
if and only if its two endpoints belong to distinct classes. We set 
\begin{equation*}
h:=\sum_{t_i\in V_1}t_i-\sum_{t_i\in V_2}t_i.
\end{equation*}
\quad To show the inequality $r\geq r_0$ it suffices to show 
the equality $V_\mathbb{X}(h)=\{\mathbf{v}_i\}_{i=r+1}^m$ because 
this equality implies $r=|\mathbb{X}\setminus V_\mathbb{X}(h)|$, and
consequently $r=\varphi(G_\sigma)\geq r_0$.

\quad Case (I): $V_2=\emptyset$. Therefore, $\sigma(e)=+$ for $e\in
E(H_\sigma)$. As $h=\sum_{i=1}^st_i$, one has the inclusion 
$\{\mathbf{v}_i\}_{i=r+1}^m\subseteq V_\mathbb{X}(h)$. We claim that 
$\sigma(e_i)=-$ for $i=1,\ldots,r$. If
$\sigma(e_i)=+$ for some $1\leq i\leq r$, then 
$G_\sigma\setminus\{e_1,\ldots,e_{i-1},e_{i+1},\ldots,e_r\}$ is balanced 
because it is a positive signed graph, a contradiction. As
$h=\sum_{i=1}^st_i$, the inclusion  
$V_\mathbb{X}(h)\subseteq \{\mathbf{v}_i\}_{i=r+1}^m$ follows because
${\rm char}(K)\neq 2$.  

Case (II): $V_1\neq \emptyset$ and $V_2\neq \emptyset$. If $1\leq
i\leq r$ and $e_i$ joins
$V_1$ and $V_2$, then $\sigma(e_i)=+$ and
$h(\mathbf{v}_i)\neq 0$ because ${\rm char}(K)\neq 2$. Indeed, if
$\sigma(e_i)=-$, then 
$G_\sigma\setminus\{e_1,\ldots,e_{i-1},e_{i+1},\ldots,e_r\}$ is
balanced by Theorem~\ref{Harary-balanced}, a contradiction. If $1\leq
i\leq r$ and the two endpoints of $e_i$ are both in $V_1$ or $V_2$,
then $\sigma(e_i)=-$ and
$h(\mathbf{v}_i)\neq 0$ because ${\rm char}(K)\neq 2$. Indeed, if
$\sigma(e_i)=+$, then 
$G_\sigma\setminus\{e_1,\ldots,e_{i-1},e_{i+1},\ldots,e_r\}$ is
balanced by Theorem~\ref{Harary-balanced}, a contradiction. Thus, one
has the inclusion 
$V_\mathbb{X}(h)\subseteq \{\mathbf{v}_i\}_{i=r+1}^m$. If $i>r$, 
then $h(\mathbf{v}_i)=0$, that is, $\{\mathbf{v}_i\}_{i=r+1}^m\subseteq
V_\mathbb{X}(h)$. This follows noticing that, for $i>r$, one has 
$\sigma(e_i)=+$ if the endpoints of $e_i$ are in $V_1$ or $V_2$, and $\sigma(e_i)=-$
if $e_i$ joins $V_1$ and $V_2$. Therefore the equality
$V_\mathbb{X}(h)=\{\mathbf{v}_i\}_{i=r+1}^m$ 
holds. 

Now, we show the inequality $r\leq r_0$. Pick
$h=a_1t_1+\cdots+a_st_s$, $a_i=\pm 1$ for $i=1,\ldots,s$, such that
$r_0=|\mathbb{X}\setminus V_\mathbb{X}(h)|$. We may assume that the
set $\mathbb{X}\setminus V_\mathbb{X}(h)$ is equal to
$\{\mathbf{v}_1,\ldots,\mathbf{v}_{r_0}\}$, and we may also 
assume that $\{e_1,\ldots,e_{r_0},\ldots,e_m\}$ is the set of edges of
$G_\sigma$ and that $e_i$ corresponds to $\mathbf{v}_i$ for
$i=1,\ldots,m$. It suffices to show that the signed 
subgraph $H_\sigma=G_\sigma\setminus\{e_1,\ldots,e_{r_0}\}$ is
balanced because this implies that $r=\varphi(G_\sigma)\leq r_0$.
There are disjoint sets $V_1$ and $V_2$ (possibly empty) such that 
$V(G_\sigma)=\{t_1,\ldots,t_s\}=V_1\cup V_2$ and 
\begin{equation*}
h=\sum_{t_i\in V_1}t_i-\sum_{t_i\in V_2}t_i.
\end{equation*}
\quad Note that $h(\mathbf{v}_i)=0$ if and only if $i>r_0$ and
$E(H_\sigma)=\{e_{i}\}_{i=r_0+1}^m$. If $\sigma(e_i)=-$ for some
$i>r_0$, then $h(\mathbf{v}_i)=0$, and consequently $e_i$ joins $V_1$
and $V_2$ because ${\rm char}(K)\neq 2$. If $\sigma(e_i)=+$ for some
$i>r_0$, then $h(\mathbf{v}_i)=0$, and consequently the endpoints of
$e_i$ are in $V_1$ or $V_2$. Therefore, by
Theorem~\ref{Harary-balanced},
$H_\sigma=G_\sigma\setminus\{e_1,\ldots,e_{r_0}\}$ is balanced. 
\end{proof}

Let $I\neq(0)$ be a graded ideal
of $S$ of Krull dimension $k$. The {\it Hilbert function} of $S/I$ is: 
$$
H_I(d):=\dim_K(S_d/I_d),\ \ \ d=0,1,2,\ldots,
$$
where $I_d=I\cap S_d$. By a theorem of Hilbert \cite[p.~58]{Sta1},
there is a unique polynomial 
$h_I(x)\in\mathbb{Q}[x]$ of 
degree $k-1$ such that $H_I(d)=h_I(d)$ for  $d\gg 0$. The
degree of the zero polynomial is $-1$. 

The {\it degree\/} or {\it multiplicity\/} of $S/I$, denoted
$\deg(S/I)$, is the positive integer given by 
$$
\deg(S/I):=(k-1)!\, \lim_{d\rightarrow\infty}{H_I(d)}/{d^{k-1}}
\ \mbox{ if }\ k\geq 1,
$$ 
and $\deg(S/I)=\dim_K(S/I)$ if $k=0$. If $f\in S$, the ideal 
$(I\colon f)=\{g\in S\, \vert\, gf \in I \}$ is referred to as
a \textit{colon ideal}. Note that $f$ is a zero-divisor of $S/I$ if and only 
if $(I\colon f)\neq I$.

\begin{lemma}{\cite[Lemma~3.2]{hilbert-min-dis}}
\label{degree-formula-for-the-number-of-zeros-proj}
Let $\mathbb{X}$ be a finite subset of 
$\mathbb{P}^{s-1}$ over a field $K$ and let $I(\mathbb{X})\subset S$
be its vanishing ideal. If 
$0\neq f\in S$ is homogeneous and $(I(\mathbb{X})\colon f)\neq
I(\mathbb{X})$, then  
$$
|V_{\mathbb{X}}(f)|=\deg(S/(I(\mathbb{X}),f)).
$$
\end{lemma}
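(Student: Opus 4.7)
The plan is to prove Lemma~\ref{degree-formula-for-the-number-of-zeros-proj} by combining the short exact sequence coming from multiplication by $f$ on $S/I(\mathbb{X})$ with the primary decomposition of the vanishing ideal of a finite set of projective points.

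First I would write $\mathbb{X}=\{P_1,\ldots,P_n\}$ and recall that $I(\mathbb{X})=\bigcap_{i=1}^n I(P_i)$, where each $I(P_i)$ is a homogeneous prime ideal of height $s-1$ in $S$. In particular $S/I(\mathbb{X})$ is a one-dimensional Cohen--Macaulay graded ring whose Hilbert polynomial is the constant $|\mathbb{X}|$, so $\deg(S/I(\mathbb{X}))=|\mathbb{X}|$. Since each $I(P_i)$ is prime, one has $(I(P_i)\colon f)=I(P_i)$ when $f\notin I(P_i)$ and $(I(P_i)\colon f)=S$ otherwise. Setting $A=V_\mathbb{X}(f)$ and $B=\mathbb{X}\setminus A$, and using the identity $(\bigcap_i J_i\colon f)=\bigcap_i(J_i\colon f)$, I obtain
$$
(I(\mathbb{X})\colon f)=\bigcap_{P_i\in B}I(P_i)=I(B),
$$
with the convention that an empty intersection is $S$. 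Thus the hypothesis $(I(\mathbb{X})\colon f)\neq I(\mathbb{X})$ is equivalent to $A\neq\emptyset$.

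If $B=\emptyset$, then $f\in I(\mathbb{X})$ and $(I(\mathbb{X}),f)=I(\mathbb{X})$, so the claim reduces to $\deg(S/I(\mathbb{X}))=|\mathbb{X}|=|V_\mathbb{X}(f)|$, already established. If $B\neq\emptyset$, the key step is the graded short exact sequence
$$
0\longrightarrow \bigl(S/(I(\mathbb{X})\colon f)\bigr)(-d)\xrightarrow{\ \cdot f\ } S/I(\mathbb{X})\longrightarrow S/(I(\mathbb{X}),f)\longrightarrow 0,
$$
where $d=\deg(f)$. Both $S/I(\mathbb{X})$ and $S/(I(\mathbb{X})\colon f)=S/I(B)$ are one-dimensional of degrees $|\mathbb{X}|$ and $|B|$, so passing to Hilbert polynomials and using additivity shows that the Hilbert polynomial of $S/(I(\mathbb{X}),f)$ is the constant $|\mathbb{X}|-|B|=|A|>0$. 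Hence $\dim(S/(I(\mathbb{X}),f))=1$ and $\deg(S/(I(\mathbb{X}),f))=|A|=|V_\mathbb{X}(f)|$, as required.

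The main obstacle is essentially bookkeeping: one must be careful with the degenerate case $B=\emptyset$, where the leftmost term in the exact sequence vanishes, and with the grading shift by $(-d)$ when one computes Hilbert polynomials (the shift does not affect the leading coefficient in dimension one, but it does affect intermediate terms). Everything else is a direct application of the degree/Hilbert-polynomial machinery recalled just before the statement.
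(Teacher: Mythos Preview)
The paper does not supply its own proof of this lemma; it is quoted verbatim as \cite[Lemma~3.2]{hilbert-min-dis} and used as a black box in the proof of Theorem~\ref{vila-zaslavsky}. Your argument is correct and is the standard one: decompose $I(\mathbb{X})$ as the intersection of the linear primes $I(P_i)$, identify $(I(\mathbb{X})\colon f)$ with $I(B)$ for $B=\mathbb{X}\setminus V_\mathbb{X}(f)$, and then read off the degree from the exact sequence
\[
0\longrightarrow \bigl(S/I(B)\bigr)(-d)\xrightarrow{\ \cdot f\ } S/I(\mathbb{X})\longrightarrow S/(I(\mathbb{X}),f)\longrightarrow 0
\]
via additivity of Hilbert polynomials, using that $\deg(S/I(Y))=|Y|$ for a finite set $Y\subseteq\mathbb{P}^{s-1}$. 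The edge cases you flag (the degenerate situation $B=\emptyset$, and the irrelevance of the shift $(-d)$ for the leading term in Krull dimension one) are handled correctly.
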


The following algebraic formula for the frustration index can be used
to compute or estimate this number using \textit{Macaulay}$2$
\cite{mac2} (Example~\ref{example-graph4}). 

\begin{theorem}\label{vila-zaslavsky}
Let $G_\sigma$ be a connected unbalanced signed simple graph with
frustration index $\varphi(G_\sigma)$ over a
field $K$ of ${\rm char}(K)\neq 2$, and let $\mathcal{F}$ be the set
of linear forms $h=\sum_{i=1}^sa_it_i$ such that $a_i=\pm 1$ for all
$i$ and $(I(\mathbb{X})\colon h)\neq I(\mathbb{X})$. Then
\begin{equation*}
\varphi(G_\sigma)=
|\mathbb{X}|-\max\{\deg(S/(I(\mathbb{X}),h))\colon
\, h\in\mathcal{F}\}.
\end{equation*}
\end{theorem}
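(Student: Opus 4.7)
The plan is to combine Lemma~\ref{vila-zaslavsky-lemma} and Lemma~\ref{degree-formula-for-the-number-of-zeros-proj}. Lemma~\ref{vila-zaslavsky-lemma} can be rewritten as $\varphi(G_\sigma)=|\mathbb{X}|-\max\{|V_\mathbb{X}(h)|\colon h=\sum_{i=1}^s a_it_i,\ a_i\in\{\pm 1\}\}$, and Lemma~\ref{degree-formula-for-the-number-of-zeros-proj} converts the geometric count $|V_\mathbb{X}(h)|$ into the algebraic multiplicity $\deg(S/(I(\mathbb{X}),h))$ precisely when $h$ is a zero-divisor modulo $I(\mathbb{X})$, i.e., when $h\in\mathcal{F}$. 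So the proof reduces to matching the outer maximum coming from Lemma~\ref{vila-zaslavsky-lemma} with the maximum over $\mathcal{F}$ appearing in the theorem.

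To carry this out I would prove the following dichotomy: for any nonzero linear form $h$ with $a_i\in\{\pm 1\}$, either $h\in\mathcal{F}$, in which case Lemma~\ref{degree-formula-for-the-number-of-zeros-proj} gives $|V_\mathbb{X}(h)|=\deg(S/(I(\mathbb{X}),h))$, or $(I(\mathbb{X})\colon h)=I(\mathbb{X})$, in which case I claim $V_\mathbb{X}(h)=\emptyset$. This claim is the main technical point of the argument; it follows from the fact that, since $\mathbb{X}\subset\mathbb{P}^{s-1}$ is finite, the vanishing ideal $I(\mathbb{X})$ is radical and its minimal (equivalently, associated) primes are the homogeneous ideals $I(\{P\})$ with $P\in\mathbb{X}$. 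Because the zero-divisors of $S/I(\mathbb{X})$ are the union of these associated primes, one has $(I(\mathbb{X})\colon h)\neq I(\mathbb{X})$ iff $h\in I(\{P\})$ for some $P\in\mathbb{X}$, iff $V_\mathbb{X}(h)\neq\emptyset$.

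To finish I would verify that $\mathcal{F}$ is nonempty, so that the maximum on the right of the theorem is well-defined. Since $G_\sigma$ is connected and unbalanced it contains at least one edge $e=\{t_i,t_j\}$; choosing $a_i=a_j$ when $\sigma(e)=+$ and $a_i=-a_j$ when $\sigma(e)=-$, and assigning the remaining $a_k\in\{\pm 1\}$ arbitrarily, yields an $h$ that vanishes at the column of the incidence matrix corresponding to $e$, so $h\in\mathcal{F}$. Combining everything, the forms with $V_\mathbb{X}(h)=\emptyset$ contribute $0$ to the outer maximum and may be discarded, so the two maxima agree and substitution gives the desired identity
\[
\varphi(G_\sigma)=|\mathbb{X}|-\max\bigl\{\deg(S/(I(\mathbb{X}),h))\colon h\in\mathcal{F}\bigr\}.
\]
The main obstacle is the zero-divisor/vanishing equivalence, but this is a standard consequence of $I(\mathbb{X})$ being radical with associated primes indexed by the points of $\mathbb{X}$; the rest of the argument is a direct bookkeeping substitution between the two lemmas.
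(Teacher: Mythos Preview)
Your proposal is correct and follows essentially the same route as the paper: both combine Lemma~\ref{vila-zaslavsky-lemma} with Lemma~\ref{degree-formula-for-the-number-of-zeros-proj}, and both hinge on the equivalence $V_\mathbb{X}(h)\neq\emptyset\iff(I(\mathbb{X})\colon h)\neq I(\mathbb{X})$. The paper obtains this equivalence by citing \cite[Lemma~3.1]{rth-footprint}, whereas you derive it directly from the primary decomposition of the radical ideal $I(\mathbb{X})$; your explicit verification that $\mathcal{F}\neq\emptyset$ is a nice touch that the paper leaves implicit.
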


\begin{proof} The vanishing ideal $I(\mathbb{X})$ does not contains
linear forms. This follows by noticing that the incidence matrix of
$G_\sigma$ has rank equal to $s$, the number of vertices of $G_\sigma$,
because $G_\sigma$ is unbalanced and connected (see
Proposition~\ref{rank-incidence-signed}). Thus $\mathbb{X}\setminus
V_\mathbb{X}(h)\neq \emptyset$ for any $0\neq h\in S_1$. If $h$ is a
linear form, by \cite[Lemma~3.1]{rth-footprint},
$V_\mathbb{X}(h)\neq\emptyset$ if and only if $(I(\mathbb{X})\colon
h)\neq I(\mathbb{X})$. Therefore, using
Lemmas~\ref{vila-zaslavsky-lemma} and
\ref{degree-formula-for-the-number-of-zeros-proj}, we obtain 
\begin{eqnarray*}
\varphi(G_\sigma)&=&\min\{|\mathbb{X}\setminus V_\mathbb{X}(h)|\colon
\, h=\textstyle\sum_{i=1}^sa_it_i,\, a_i=\pm 1\mbox{ for all }i\}\\
&=&\min\{|\mathbb{X}\setminus V_\mathbb{X}(h)|\colon
\, h=\textstyle\sum_{i=1}^sa_it_i,\, a_i=\pm 1\mbox{ for all }i\mbox{ and
}V_\mathbb{X}(h)\neq\emptyset\}\\
&=& \min\{|\mathbb{X}\setminus V_\mathbb{X}(h)|\colon
\, h\in\mathcal{F}\}=|\mathbb{X}|-\max\{|V_\mathbb{X}(h)|\colon
\, h\in\mathcal{F}\}\\
&=&|\mathbb{X}|-\max\{\deg(S/(I(\mathbb{X}),h))\colon
\, h\in\mathcal{F}\}.
\end{eqnarray*}
\quad The second equality follows by discarding all $h$ with
$V_\mathbb{X}(h)=\emptyset$.
\end{proof}

\begin{remark}\label{dec8-18} If we allow the coefficients
$a_1,\ldots,a_s$ to be in $\{0,\pm 1\}$
such that not all of them are zero, we obtain the minimum distance of
the incidence matrix code $C$ of $G_\sigma$ over any finite field of
characteristic $p\neq 2$. This follows from the results of
Section~\ref{GHW-section} and Proposition~\ref{mar10-19} below. 
\end{remark}

The following algebraic formula for the minimum distance of an incidence matrix
code can be used
to compute or estimate this number using \textit{Macaulay}$2$
\cite{mac2} and the algorithms of \cite{rth-footprint,hilbert-min-dis}. 

\begin{proposition}\label{mar10-19} 
Let $G_\sigma$ be a connected signed simple graph and let $C$ be its
incidence matrix code over a finite field $K$.  
Then the minimum distance of $C$ is given by 
\begin{equation*}
\delta(C)=
|\mathbb{X}|-\max\{\deg(S/(I(\mathbb{X}),h))\colon
\, h\in S_1\setminus I(\mathbb{X})\mbox{ and } (I(\mathbb{X})\colon h)\neq I(\mathbb{X})\}.
\end{equation*}
\end{proposition}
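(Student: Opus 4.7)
The plan is to express the Hamming weight of a codeword of $C$ as the complement of the number of zeros of a linear form on $\mathbb{X}$, and then invoke the two lemmas of this section to convert $|V_\mathbb{X}(h)|$ into a multiplicity.

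First, writing $A=[\mathbf{v}_1,\ldots,\mathbf{v}_m]$ for the incidence matrix of $G_\sigma$, every codeword of $C$ has the form $c=y^TA=(h(\mathbf{v}_1),\ldots,h(\mathbf{v}_m))$, where $y=(a_1,\ldots,a_s)\in K^s$ is identified with the linear form $h:=a_1t_1+\cdots+a_st_s\in S_1\cup\{0\}$; in particular $c=0$ if and only if $h\in I(\mathbb{X})$. Because $G_\sigma$ is simple and ${\rm char}(K)\neq 2$ (the latter being implicit in the definition of $\mathbb{X}$ at the beginning of Section~\ref{frustration-index-section}), the columns $\mathbf{v}_1,\ldots,\mathbf{v}_m$ are pairwise non-proportional: distinct unordered pairs of endpoints produce columns with distinct support, and the sign distinguishes $\mathbf{e}_i-\mathbf{e}_j$ from $\mathbf{e}_i+\mathbf{e}_j$. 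Hence $|\mathbb{X}|=m$, and the weight of $c$ equals $|\mathbb{X}|-|V_\mathbb{X}(h)|$.

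Passing from minimum to maximum then gives
$$
\delta(C)=\min\{|\mathbb{X}|-|V_\mathbb{X}(h)|\colon h\in S_1\setminus I(\mathbb{X})\}=|\mathbb{X}|-\max\{|V_\mathbb{X}(h)|\colon h\in S_1\setminus I(\mathbb{X})\}.
$$
Linear forms with $V_\mathbb{X}(h)=\emptyset$ contribute $0$ to the maximum and can be discarded; by \cite[Lemma~3.1]{rth-footprint}, dropping them is precisely the same as imposing $(I(\mathbb{X})\colon h)\neq I(\mathbb{X})$. Finally, Lemma~\ref{degree-formula-for-the-number-of-zeros-proj} replaces $|V_\mathbb{X}(h)|$ by $\deg(S/(I(\mathbb{X}),h))$ for each surviving $h$, yielding the stated equality.

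The only substantive point is the verification that $|\mathbb{X}|=m$, i.e.\ that edges of $G_\sigma$ correspond bijectively to projective points of $\mathbb{X}$; this uses simplicity of $G_\sigma$ together with ${\rm char}(K)\neq 2$. Beyond that, the proof is a direct dictionary between the Hamming weight of an incidence-matrix codeword and the geometric count of zeros of the corresponding linear form on $\mathbb{X}$, finished by exactly the two lemmas already employed in the proof of Theorem~\ref{vila-zaslavsky}.
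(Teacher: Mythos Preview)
Your argument is correct and arrives at the result by essentially the same underlying identification as the paper, but via a more hands-on route. The paper's proof simply observes that $C$ is the image of $S_1$ under the evaluation map ${\rm ev}_1\colon h\mapsto(h(P_1),\ldots,h(P_m))$, recognizes this as the Reed--Muller-type code $C_{\mathbb{X}}(1)$ in the sense of \cite{GRT}, and then invokes \cite[Theorem~4.7]{hilbert-min-dis} as a black box. You instead unpack that black box in degree~$1$: you compute the Hamming weight of a codeword directly as $|\mathbb{X}|-|V_{\mathbb{X}}(h)|$, discard the forms with $V_{\mathbb{X}}(h)=\emptyset$ via \cite[Lemma~3.1]{rth-footprint}, and convert $|V_{\mathbb{X}}(h)|$ to a multiplicity via Lemma~\ref{degree-formula-for-the-number-of-zeros-proj}---exactly the chain of lemmas already used in the proof of Theorem~\ref{vila-zaslavsky}. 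Your version is more self-contained; the paper's is shorter by outsourcing the work to the cited theorem. One minor remark: your verification that $|\mathbb{X}|=m$ invokes ${\rm char}(K)\neq 2$, but simplicity of $G_\sigma$ alone already forces distinct edges to have distinct endpoint pairs and hence non-proportional columns (in characteristic~$2$ the only nonzero scalar is $1$, so non-equal is the same as non-proportional); thus the formula holds over any finite field, consistent with the proposition's unrestricted hypothesis.
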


\begin{proof} Let $\mathbf{v}_1,\ldots,\mathbf{v}_m$ be
the column vectors of the incidence matrix of $G_\sigma$ and let
$P_i$ be the point $[\mathbf{v}_i]$ in
$\mathbb{P}^{s-1}$ for $i=1,\ldots,m$. Thus, $\mathbb{X}$ is
the set of points $\{P_1,\ldots,P_m\}$. Note that $C$ is the image 
of $S_1$---the vector space of linear
forms of $S$---under the evaluation map 
$${\rm ev}_1\colon S_1\rightarrow K^{m},\quad
h\mapsto\left(h(P_1),\ldots,h(P_m)\right).
$$ 
\quad The image of the linear function $t_i$, under the
map $\text{ev}_1$, gives the $i$-th row of $C$. This means that $C$ is the Reed--Muller-type code
$C_\mathbb{X}(1)$ in the sense of \cite{GRT}. The result now follows
readily by applying \cite[Theorem~4.7]{hilbert-min-dis}.
\end{proof}
\section{Examples of signed graphs}\label{examples-section}
In this section we illustrate how to use our results in practice with
some examples. 

\begin{example} Let $G_\sigma$ be a signed simple graph whose underlying
graph $G$ is given in  Figure~\ref{complete-intersection}, let $C$ be
the incidence matrix code of $G_\sigma$, let $A$
be the incidence matrix of $G_\sigma$, and let $M=M[A]$ be the 
matroid of $C$. Assume that $K$ is either a
field of characteristic $2$ or that $K$ is any field and 
$G_\sigma=G_+$. In either case, by Theorem~\ref{signed-graphic-rep}(b) and
Corollary~\ref{circuits-signed-graphic}(d), $M$ is the cycle matroid
of $G$ and, by
Proposition~\ref{rank-incidence-signed}, the rank of $M$ is $10$. 
\begin{figure}[ht]
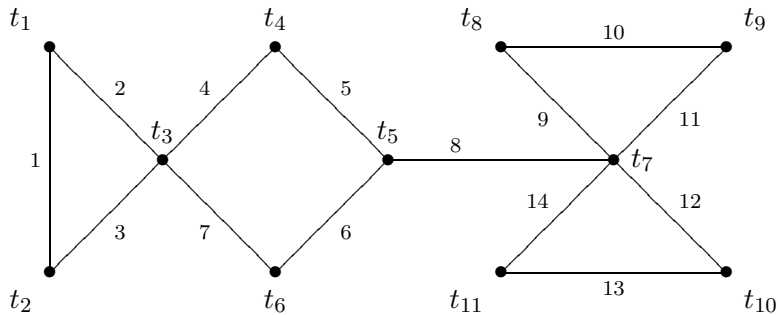

\[
 \xygraph{
 !{<0cm,0cm>:<1.5cm,0cm>:<0cm,1.5cm>::}
 !{(0,2)}*+{\bullet}@\cir{}="v1"
 !{(1,1)}*+{\bullet}@\cir{}="v2"
 !{(0,0)}*+{\bullet}@\cir{}="v3"
 !{(2,2)}*+{\bullet}@\cir{}="v4"
 !{(3,1)}*+{\bullet}@\cir{}="v5"
 !{(2,0)}*+{\bullet}@\cir{}="v6"
 !{(5,1)}*+{\bullet}@\cir{}="v7"
 !{(4,2)}*+{\bullet}@\cir{}="v8"
 !{(6,2)}*+{\bullet}@\cir{}="v9"
 !{(6,0)}*+{\bullet}@\cir{}="v10"
 !{(4,0)}*+{\bullet}@\cir{}="v11"
 "v1"-_{1}"v3" "v1"-^{2}"v2" "v2"-^{3}"v3"
 "v2"-^{4}"v4" "v4"-^{5}"v5" "v5"-^{6}"v6"
 "v6"-^{7}"v2" "v5"-^{\hspace{-1.2cm}8}"v7" "v7"-^{9}"v8"
 "v8"-^{10}"v9" "v9"-^{11}"v7" "v7"-^{12}"v10"
 "v10"-^{13}"v11" "v11"-^{14}"v7"
 "v1"!{+U*++!RD{t_1}} "v2"!{+D*++!D{t_3}} "v3"!{+U*++!RU{t_2}}
 "v4"!{+U*++!D{t_4}} "v5"!{+U*++!D{t_5}} "v6"!{+U*++!U{t_6}}
 "v7"!{+U*++!L{t_7}} "v8"!{+U*++!RD{t_8}} "v9"!{+U*++!LD{t_9}}
 "v10"!{+U*++!LU{t_{10}}} "v11"!{+U*++!RU{t_{11}}}
 }
\]
\caption{Simple graph $G$ with $11$ vertices and $14$ edges}\label{complete-intersection}
\end{figure}

Therefore, the circuits of $M$ are the cycles of $G$ and they are given by
\begin{center}
\begin{tabular}{l}
 $c_1=\{1,2,3\}$,\quad $c_2=\{9,10,11\}$,\quad
 $c_3=\{12,13,14\}$,\quad  $c_4=\{4,5,6,7\}$.
 \end{tabular}
\end{center}
\quad Hence, by applying Theorem~\ref{pepe-vila-2019-dual}(a), we get the
generalized Hamming weights of $C^\perp$:
 \begin{center}{\rm
 \begin{tabular}{|c|c|c|c|c|}\hline
 $r$&1&2&3&4\\\hline
 $\delta_r(C^\perp)$&3&6&9&13\\\hline
 \end{tabular}}\,.
 \end{center}
\quad Concretely, one has $\delta_r(C^\perp)=|c_1\cup\cdots\cup c_r|$
for $1\leq r\leq 4$. Let $R=K[x_1,\ldots,x_{14}]$ be a polynomial ring over the
field $K$. The ideal of circuits of $M$ is the squarefree
monomial ideal $I=I(\mathcal{C}_M)$ of $R$ generated
by all monomials $\prod_{j\in c_i}x_j$ with $i=1,\ldots,4$. Using
\textit{Macaulay}$2$ \cite{mac2}, we obtain that the minimal free
resolution of $R/I$ is:
\[0 \rightarrow R(-13) \rightarrow R(-9)\oplus R^3(-10) \rightarrow
R^3(-6)\oplus R^3(-7)
\rightarrow R^3(-3)\oplus R(-4)
\rightarrow R\rightarrow R/I\rightarrow 0\] 
\quad One can verify the values of the $\delta_r(C^\perp)$'s by applying 
Corollary~\ref{dec30-18} to this resolution. By Wei's duality
(Theorem~\ref{wei-duality}), one has
 \begin{center}{\rm
 \begin{tabular}{|c|c|c|c|c|c|c|c|c|c|c|}\hline
 $r$&1&2&3&4&5&6&7&8&9&10\\\hline
 $\delta_r(C)$&1&3&4&5&7&8&10&11&13&14\\\hline
 \end{tabular}}\,.
 \end{center}
\quad According to Theorem~\ref{pepe-vila-2019}, 
$\delta_r(C)=\lambda_r(C)$ for $r=1,\ldots,10$. Removing edge $8$ from
$G$, we get two connected
components. Thus $\delta_1(C)=1$. To illustrate the equality 
$\delta_7(C)=10$, note that removing 
the ten edges that are not in the square of the 
graph $G$ results in a subgraph with eight connected
components, and $\lambda_7(C)=10$. The edge biparticity of $G$ is
$\varphi(G_-)=3$. 
\end{example}

\begin{example} Let $G$ be the graph of
Figure~\ref{complete-intersection}, let $K$ be a field 
of ${\rm char}(K)\neq 2$, and let $C$ be the incidence matrix code of
$G_-$. By Corollary~\ref{circuits-signed-graphic}(c), the circuits of the negative signed
graph $G_-$, that is, the circuits of the signed-graphic matroid
$M(G_-)$, are the even cycles and the bowties of $G$:
 \begin{center}{\rm
 \begin{tabular}{l}
$c_1=\{4, 5, 6, 7\}$,\ $c_2=\{9, 10, 11, 12, 13, 14\}$,\\ $c_3=\{1,
2, 3, 4, 5,
8, 9, 10, 11\}$,\ $c_4=\{1, 2,
    3, 4, 5, 8, 12, 13, 14\}$,\\ $c_5=\{1, 2, 3, 6, 7, 8, 9, 10,
    11\}$,\ 
    $c_6=\{1, 2, 3, 6, 7, 8, 12, 13, 14\}$.\end{tabular}}
\end{center}
\quad Hence, by Theorem~\ref{pepe-vila-2019-dual}(b), it follows that
$\delta_r(C^\perp)=|c_1\cup\cdots\cup c_r|$ for $1\leq
r\leq 3$, and we obtain the generalized Hamming weights of $C^\perp$:
 \begin{center}{\rm
 \begin{tabular}{|c|c|c|c|}\hline
 $r$&1&2&3\\\hline
 $\delta_r(C^\perp)$&4&10&14\\\hline
 \end{tabular}}\,.
 \end{center}
\quad Let $R=K[x_1,\ldots,x_{14}]$ be a polynomial ring over the
field $K$ and let $I=I(\mathcal{C}_M)\subset R$ be the ideal of
circuits of the signed-graphic matroid $M(G_-)$. Using
\textit{Macaulay}$2$ \cite{mac2}, we obtain that the minimal free
resolution of $R/I$ is:
\[0 \rightarrow R^4(-14) \rightarrow R(-10)\oplus R^4(-11)\oplus
R^4(-12) 
\rightarrow R(-4)\oplus R(-6)\oplus R^4(-9)
\rightarrow  R\rightarrow R/I\rightarrow 0.\] 
\quad One can verify the values of the $\delta_r(C^\perp)$'s by applying 
Corollary~\ref{dec30-18} to this resolution. By Wei's duality (Theorem~\ref{wei-duality}), 
we obtain the generalized Hamming weights of $C$:
 \begin{center}{\rm
 \begin{tabular}{|c|c|c|c|c|c|c|c|c|c|c|c|}\hline
 $r$&1&2&3&4&5&6&7&8&9&10&11\\\hline
 $\delta_r(C)$&2&3&4&6&7&8&9&10&12&13&14\\\hline
 \end{tabular}}\,.
 \end{center} 
\quad According to Theorem~\ref{pepe-vila-2019},  
$\delta_r(C)=\upsilon_r(G_-)$ for $r=1,\ldots,11$. Next we verify these
values. Removing edges $2$
and $3$ from 
$G$, we get a graph with a bipartite component. 
Therefore, by Theorem~\ref{pepe-vila-2019}, $\delta_1(C)=2$. 
To check the other values of $\delta_r(C)$ using by
Theorem~\ref{pepe-vila-2019},  
 note that successively removing from 
the graph $G$ the edges
$$\{1,2\},\, 3,\, 8,\, \{10,13\},\, 9,\, 11,\, 12,\, 14,\, \{4,5\},\,
6,\, 7,$$ 
we obtain a subgraph
with $r$ bipartite connected
components at the $r$-th step. By Theorem~\ref{regularity-ideal}, the
regularity of $R/I$ is $11$. The frustration index of $G_-$ is $3$
which is the edge biparticity of $G$.
\end{example}

\begin{example}\label{example-graph1} Let $G_\sigma$ be the signed graph of
Figure~\ref{signed-graph-example}, let $C$ be the incidence matrix
code of $G_\sigma$ over a finite field of ${\rm char}(K)=p\neq 2$, and 
let $M=M[A]$ be the vector matroid of $C$, where $A$
is the incidence matrix of $G_\sigma$.
\begin{figure}[ht]
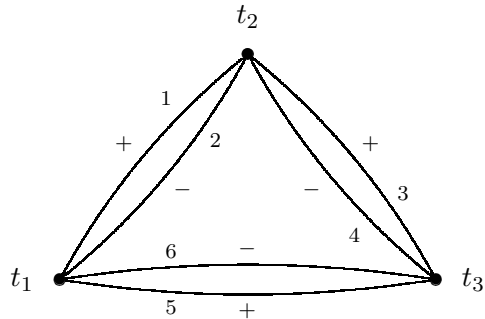
 
$$
 \xy
 \POS (-25,0)*{\bullet}*\cir<2pt>{}="a"
 \POS (25,0)*{\bullet}*\cir<2pt>{}="b"
 \POS (0,30)*{\bullet}*\cir<2pt>{}="c"
 \POS "a"\ar@{-}@/^1.2ex/"b"^(0.3)*{_6}^*{_-}
 \POS "b"\ar@{-}@/^1.2ex/"a"^*{_+}^(0.7)*{_5}
 \POS "b"\ar@{-}@/^1.2ex/"c"^(0.3)*{_4}^*{_-}
 \POS "c"\ar@{-}@/^1.2ex/"b"^*{_+}^(0.7)*{_3}
 \POS "c"\ar@{-}@/^1.2ex/"a"^(0.3)*{_2}^*{_-}
 \POS "a"\ar@{-}@/^1.2ex/"c"^*{_+}^(0.7)*{_1}
 \POS (-30,0)*{t_1}
 \POS (30,0)*{t_3}
 \POS (0,35)*{t_2}
 \endxy
$$
\caption{Signed graph with $3$ vertices and $6$ edges}
\label{signed-graph-example}
\end{figure}

\quad The incidence matrix of the signed graph $G_\sigma$ is
$$
A=\left[
\begin{matrix}
1&1&0&0&1&1\\
-1&1&1&1&0&0\\
0&0&-1&1&-1&1
\end{matrix}
\right]\!.
$$
\quad Using Procedure~\ref{procedure-example-graph1}, we obtain the
following information. The ideals of circuits and cocircuits of $M$ are given by 
\begin{eqnarray*}
I&=&(x_1x_2x_3x_4,\, x_1x_3x_5,\, x_2x_4x_5,\, x_2x_3x_6,\, x_1x_4x_6,\,
x_1x_2x_5x_6,\, x_3x_4x_5x_6),\\
I^*&=&(x_2x_4x_6,\,  x_1x_3x_6,\,  x_1x_4x_5,\,  x_2x_3x_5,\,
x_3x_4x_5x_6
,\, x_1x_2x_5x_6,\,  x_1x_2x_3x_4),
\end{eqnarray*}
and ${\rm reg}(R/I)={\rm reg}(R/I^*)=3$. The generalized Hamming
weights of $C^\perp$ and $C$ are
\begin{eqnarray*}
 \begin{tabular}{|c|c|c|c|}\hline
 $r$&1&2&3\\\hline
 $\delta_r(C^\perp)$&3&5&6\\\hline
 \end{tabular} &\ \ \ \ \ \ &
 \begin{tabular}{|c|c|c|c|}\hline
 $r$&1&2&3\\\hline
 $\delta_r(C)$&3&5&6\\\hline
 \end{tabular}\,.
\end{eqnarray*}
\quad Thus, by Theorem~\ref{pepe-vila-2019}, the cogirth of the signed graph $G_\sigma$
is $\upsilon_1(G_\sigma)=3$, and one has $\upsilon_2(G_\sigma)=5$,
$\upsilon_3(G_\sigma)=6$. The frustration index of $G_\sigma$ is $3$.
\end{example}

\begin{example}\label{example-graph2} Let $G_+$ be the positive signed
graph of Figure~\ref{positive-signed-graph-example}, let $C$ be the incidence matrix
code of $G_+$ over a finite field $K$, and 
let $M=M[A]$ be the vector matroid of $C$, where $A$ 
is the incidence matrix of $G_+$. By
Corollary~\ref{circuits-signed-graphic}(b), $M$ is the graphic matroid of
the underlying graph $G$, that is, the circuits and cocircuits of $M$
are the cycles and cocycles of $G$. 
\begin{figure}[ht]
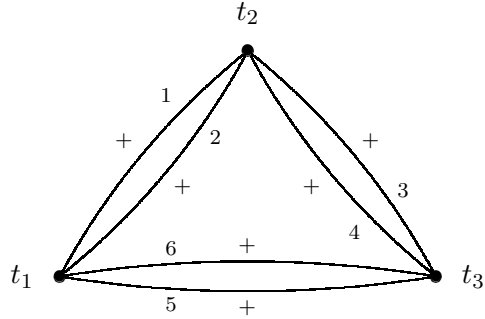
 
$$
 \xy
 \POS (-25,0)*{\bullet}*\cir<2pt>{}="a"
 \POS (25,0)*{\bullet}*\cir<2pt>{}="b"
 \POS (0,30)*{\bullet}*\cir<2pt>{}="c"
 \POS "a"\ar@{-}@/^1.2ex/"b"^(0.3)*{_6}^*{_+}
 \POS "b"\ar@{-}@/^1.2ex/"a"^*{_+}^(0.7)*{_5}
 \POS "b"\ar@{-}@/^1.2ex/"c"^(0.3)*{_4}^*{_+}
 \POS "c"\ar@{-}@/^1.2ex/"b"^*{_+}^(0.7)*{_3}
 \POS "c"\ar@{-}@/^1.2ex/"a"^(0.3)*{_2}^*{_+}
 \POS "a"\ar@{-}@/^1.2ex/"c"^*{_+}^(0.7)*{_1}
 \POS (-30,0)*{t_1}
 \POS (30,0)*{t_3}
 \POS (0,35)*{t_2}
 \endxy
$$
\caption{Positive signed graph with $3$ vertices and $6$ edges}
\label{positive-signed-graph-example}
\end{figure}

\quad The incidence matrix of the positive signed graph $G_+$ is
$$
A=\left[
\begin{array}{rrrrrr}
1 & 1&0&0&1&1\\
-1&-1&1&1&0&0\\
0&0& -1&-1&-1&-1
\end{array}
\right]\!.
$$
\quad Using Procedure~\ref{procedure-example-graph2}, we obtain the
following information. The ideals of circuits and cocircuits of $M$ are given by 
\begin{eqnarray*}
I&=&
(x_5x_6,\,  x_3x_4,\,  x_1x_2,\,  x_2x_4x_6,\,  x_1x_4x_6,\,  x_2x_3x_6,\, 
x_1x_3x_6,\, x_2x_4x_5,\,  x_1x_4x_5,\,  x_2x_3x_5,\,  x_1x_3x_5),\\
I^*&=&(x_3x_4x_5x_6,\, x_1x_2x_5x_6,\, x_1x_2x_3x_4),
\end{eqnarray*}
${\rm reg}(R/I)=2$, and ${\rm reg}(R/I^*)=4$. The generalized Hamming
weights of $C^\perp$ and $C$ are
\begin{eqnarray*}
 \begin{tabular}{|c|c|c|c|c|}\hline
 $r$&1&2&3&4\\\hline
 $\delta_r(C^\perp)$&2&4&5&6\\\hline
 \end{tabular} &\ \ \ \ \ \ &
 \begin{tabular}{|c|c|c|}\hline
 $r$&1&2\\\hline
 $\delta_r(C)$&4&6\\\hline
 \end{tabular}\,.
\end{eqnarray*}
\quad Thus, by Theorem~\ref{pepe-vila-2019}, the edge connectivity 
of $G$ is $\lambda_1(G)=4$, and $\lambda_2(G)=6$.
\end{example}

\begin{example}\label{example-graph3} Let $G_-$ be the negative signed
graph of Figure~\ref{negative-signed-graph-example}, let $C$ be the incidence matrix
code of $G_-$ over a field $K$ of characteristic $p\neq 2$, and 
let $M=M[A]$ be the vector matroid of $C$, where $A$ 
is the incidence matrix of $G_-$. By Corollary~\ref{VectorCircuit}, $M$ is the even cycle matroid of
the underlying graph $G$, that is, the circuits of $M$
are the even cycles and bowties of $G$.
\begin{figure}[ht]
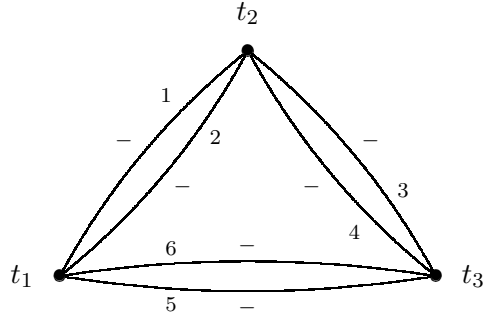
 
$$
 \xy
 \POS (-25,0)*{\bullet}*\cir<2pt>{}="a"
 \POS (25,0)*{\bullet}*\cir<2pt>{}="b"
 \POS (0,30)*{\bullet}*\cir<2pt>{}="c"
 \POS "a"\ar@{-}@/^1.2ex/"b"^(0.3)*{_6}^*{_-}
 \POS "b"\ar@{-}@/^1.2ex/"a"^*{_-}^(0.7)*{_5}
 \POS "b"\ar@{-}@/^1.2ex/"c"^(0.3)*{_4}^*{_-}
 \POS "c"\ar@{-}@/^1.2ex/"b"^*{_-}^(0.7)*{_3}
 \POS "c"\ar@{-}@/^1.2ex/"a"^(0.3)*{_2}^*{_-}
 \POS "a"\ar@{-}@/^1.2ex/"c"^*{_-}^(0.7)*{_1}
 \POS (-30,0)*{t_1}
 \POS (30,0)*{t_3}
 \POS (0,35)*{t_2}
 \endxy
$$
\caption{Negative signed graph with $3$ vertices and $6$ edges}
\label{negative-signed-graph-example}
\end{figure}

\quad The incidence matrix of the negative signed graph $G_-$ is 
the incidence matrix of $G$:
$$
A=\left[
\begin{array}{rrrrrr}
1 & 1&0&0&1&1\\
1&1&1&1&0&0\\
0&0& 1&1&1&1
\end{array}
\right]\!.
$$
\quad Using Procedure~\ref{procedure-example-graph3}, we obtain the
following information. The ideals of circuits and cocircuits of $M$ are given by 
$$
I=
(x_1x_2,\,  x_3x_4,\,  x_5x_6),\ \ I^*=(x_1x_2,\,  x_3x_4,\,  x_5x_6),
$$
${\rm reg}(R/I)={\rm reg}(R/I^*)=3$. The generalized Hamming
weights of $C^\perp$ and $C$ are
\begin{eqnarray*}
 \begin{tabular}{|c|c|c|c|c|}\hline
 $r$&1&2&3\\\hline
 $\delta_r(C^\perp)$&2&4&6\\\hline
 \end{tabular} &\ \ \ \ \ \ &
 \begin{tabular}{|c|c|c|c|}\hline
 $r$&1&2&3\\\hline
 $\delta_r(C)$&2&4&6\\\hline
 \end{tabular}\,.
\end{eqnarray*}
\quad Thus, by Theorem~\ref{pepe-vila-2019}, the cogirth of $G_-$ is 
$\upsilon_1(G_-)=2$, and $\upsilon_2(G_-)=4$, 
$\upsilon_3(G_-)=6$.
\end{example}

\begin{example}\label{example-graph4} Let $G_\sigma$ be the signed graph of
Figure~\ref{signed-graph-example4} and let $G$ be its underlying
graph. 
The incidence matrix of $G_\sigma$ is given in
Procedure~\ref{procedure-example-graph4}. Using this procedure we
obtain that the frustration index $\varphi(G_\sigma)$ of $G_\sigma$ 
is $7$ and the frustration index $\varphi(G_-)$ of the negative signed
graph $G_-$ is $6$. The minimum distance $\delta(C)$ of the incidence matrix code
$C$ of $G_\sigma$ is $4$ if ${\rm char}(K)\neq 2$ and $\delta(C)$ is
$3$ if   ${\rm char}(K)=2$. In this case $\delta(C)=\delta(C^\perp)$ in
any characteristic.
\begin{figure}[ht]
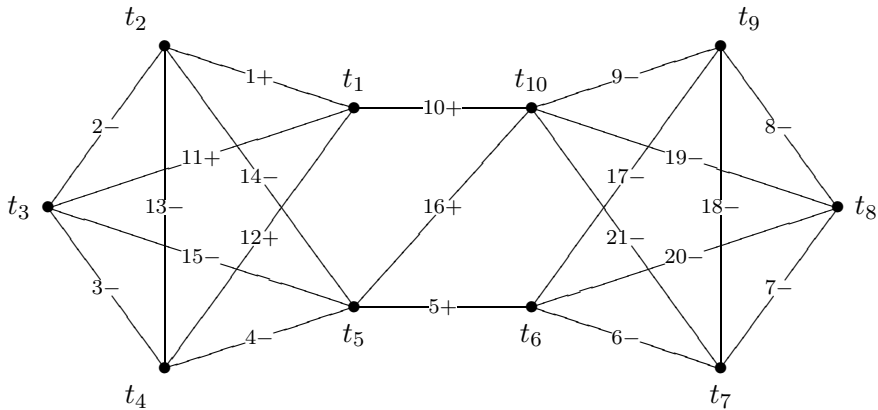
 
$$
 \xygraph{
 !{<0cm,0cm>:<0cm,1.5cm>:<-1.05cm,0cm>}
 !{(0,-4);a(18)**{}?(1.5)}*{\bullet}@\cir{}="a2"
 !{(0,-4);a(90)**{}?(1.5)}*{\bullet}@\cir{}="a3"
 !{(0,-4);a(162)**{}?(1.5)}*{\bullet}@\cir{}="a4"
 !{(0,-4);a(234)**{}?(1.5)}*{\bullet}@\cir{}="a5"
 !{(0,-4);a(306)**{}?(1.5)}*{\bullet}@\cir{}="a1"
 !{(0,4);a(-18)**{}?(1.5)}*{\bullet}@\cir{}="a9"
 !{(0,4);a(54)**{}?(1.5)}*{\bullet}@\cir{}="a10"
 !{(0,4);a(126)**{}?(1.5)}*{\bullet}@\cir{}="a6"
 !{(0,4);a(198)**{}?(1.5)}*{\bullet}@\cir{}="a7"
 !{(0,4);a(270)**{}?(1.5)}*{\bullet}@\cir{}="a8"
 "a1"-|{1+}"a2" "a2"-|{2-}"a3" "a3"-|{3-}"a4" "a4"-|{4-}"a5"
 "a1"-|{11+}"a3" "a1"-|{12+}"a4" "a2"-|{13-}"a4" "a2"-|{14-}"a5" "a3"-|{15-}"a5"
 "a6"-|{6-}"a7" "a7"-|{7-}"a8" "a8"-|{8-}"a9" "a9"-|{9-}"a10"
 "a6"-|{20-}"a8" "a6"-|{17-}"a9" "a7"-|{18-}"a9" "a7"-|{21-}"a10" "a8"-|{19-}"a10"
 "a1"-|{10+}"a10" "a5"-|{5+}"a6" "a5"-|{16+}"a10"
 "a2"!{+U*++!RD{t_2}} "a3"!{+D*++!R{t_3}} "a4"!{+U*++!RU{t_4}}
 "a5"!{+U*++!U{t_5}} "a1"!{+U*++!D{t_1}}
 "a9"!{+U*++!LD{t_9}} "a10"!{+D*++!D{t_{10}}} "a6"!{+U*++!U{t_6}}
 "a7"!{+U*++!U{t_7}} "a8"!{+U*++!L{t_8}}
 }
$$
\caption{Unbalanced signed graph with frustration index $7$}
\label{signed-graph-example4}
\end{figure}
\end{example}

\section*{Acknowledgments} 
We thank Thomas Zaslavsky for suggesting to generalize our work on
incidence matrix codes of graphs to signed graphs, and for pointing
out that the edge biparticity of a graph is a special case of the 
frustration index of a signed graph. 
%We thank the referee for a careful
%reading of the paper and for the improvements suggested.     
Computations with \textit{Macaulay}$2$ \cite{mac2}, Matroids
\cite{matroids}, and SageMath \cite{sage} 
were important to verifying and computing examples given in this 
paper. 

\begin{appendix}

\section{Procedures for Macaulay2 and Matroids}\label{procedures-section}

In this section we give 
procedures for \textit{Macaulay}$2$ \cite{mac2}, using the field 
of rational numbers as the ground field, to compute the generalized
Hamming weights of the incidence matrix code of a signed graph and the 
corresponding graph theoretical invariants ($r$-th cogirth, $r$-th
edge connectivity), as well as the ideals of
circuits, cocircuits, cycles and cocycles of a signed graph, and their
algebraic invariants (Betti numbers, shifts, regularity).  We also
give a procedure to compute the frustration index of a connected
signed simple graph. In all procedures
the input is a rational matrix. The package
\textit{Matroids} \cite{matroids} plays an important role here 
because it computes the circuits and cocircuits of a vector matroid 
over the field $\mathbb{Q}$ of rational numbers. 

\begin{procedure}\label{procedure-example-graph1}
Given the incidence matrix $A$ of a signed graph $G_\sigma$ over a field of ${\rm
char}(K)\neq 2$, the procedure below computes the following:
\begin{itemize}
\item The ideal of circuits and the ideal of cocircuits of $G_\sigma$, and its regularity. 
\item The graded Betti numbers of the ideal of circuits and
the ideal of cocircuits of $G_\sigma$.
\item The weight hierarchies of
the incidence matrix code $C$ of $G_\sigma$ and of its dual code $C^\perp$.
\item The $r$-th cogirth of $G_\sigma$ (Theorem~\ref{pepe-vila-2019}). 
\end{itemize}
\quad The next procedure corresponds to Example~\ref{example-graph1}.
To compute other examples just change the incidence matrix $A$. 
\begin{verbatim}
--Procedure for Macaulay2
loadPackage "Matroids"
loadPackage "BoijSoederberg"
A=transpose matrix{{1,-1,0},{1,1,0},{0,1,-1},{0,1,1},{1,0,-1},{1,0,1}}
MA=matroid(A), I=ideal(MA)
m=matrix{flatten entries gens gb I}
N=coker m, F=res N, B=betti F, regularity N
lowestDegrees B --gives the weight hierarchy of the dual of C 
I=ideal(dual(MA))
m=matrix{flatten entries gens gb I}
N=coker m, F=res N, B=betti F, regularity N
lowestDegrees B --gives the weight hierarchy of C 
\end{verbatim}
\end{procedure}

\begin{procedure}\label{procedure-example-graph2}
Using the incidence matrix $A$ of a positive signed graph $G_+$ over a
field $K$ and the Procedure~\ref{procedure-example-graph1}, we can compute the following:
\begin{itemize}
\item The ideal of cycles and the ideal of cocycles of $G$ and its
regularity. 
\item The graded Betti numbers of the ideals of cycles and
cocycles.
\item The weight hierarchies of
the incidence matrix code of $G_+$ and of its dual code, and the
generalized Hamming weights of the incidence matrix code of a
digraph $\mathcal{D}$.
\item The $r$-th edge connectivity of $G$.  
\end{itemize}
\quad The next incidence matrix corresponds to Example~\ref{example-graph2}.
\begin{verbatim}
--Incidence matrix for Macaulay2
A=transpose matrix{{1,-1,0},{1,-1,0},{0,1,-1},{0,1,-1},{1,0,-1},{1,0,-1}}
\end{verbatim}
\end{procedure}

\begin{procedure}\label{procedure-example-graph3}
Using the incidence matrix $A$ of a negative signed graph $G_-$ over a
field $K$ of characteristic $p\neq 2$ and the 
Procedure~\ref{procedure-example-graph1}, we can compute the following:
\begin{itemize}
\item The ideal $I$ of the even cycles and bowties of $G$ and
the ideal $I^*$ of cocircuits of $G_-$.
\item The graded Betti numbers of $I$ and $I^*$, and its regularity. 
\item The weight hierarchies of 
the incidence matrix code of $G_-$ and of its dual code.
\item The $r$-th cogirth of $G_-$.  
\end{itemize}
\quad The next incidence matrix corresponds to Example~\ref{example-graph3}.
\begin{verbatim}
--Incidence matrix for Macaulay2
A=transpose matrix{{1,1,0},{1,1,0},{0,1,1},{0,1,1},{1,0,1},{1,0,1}}
\end{verbatim}
\end{procedure}

\begin{procedure}\label{procedure-example-graph4}
One can use Theorem~\ref{vila-zaslavsky} and \textit{Macaulay}$2$
\cite{mac2} to compute the frustration index of a connected unbalanced
signed simple graph $G_\sigma$. The incidence matrix of the following 
procedure corresponds to the graph of Figure~\ref{signed-graph-example4} given in 
Example~\ref{example-graph4}. 
\begin{verbatim}
--Procedure for Macaulay2
input "points.m2"
R = QQ[t1,t2,t3,t4,t5,t6,t7,t8,t9,t10]
A = transpose matrix{{1,-1,0,0,0,0,0,0,0,0},{0,1,1,0,0,0,0,0,0,0},
{0,0,1,1,0,0,0,0,0,0},{0,0,0,1,1,0,0,0,0,0},{0,0,0,0,1,-1,0,0,0,0},
{0,0,0,0,0,1,1,0,0,0},{0,0,0,0,0,0,1,1,0,0},{0,0,0,0,0,0,0,1,1,0},
{0,0,0,0,0,0,0,0,1,1},{1,0,0,0,0,0,0,0,0,-1},{1,0,-1,0,0,0,0,0,0,0},
{1,0,0,-1,0,0,0,0,0,0},{0,1,0,1,0,0,0,0,0,0},{0,1,0,0,1,0,0,0,0,0},
{0,0,1,0,1,0,0,0,0,0},{0,0,0,0,1,0,0,0,0,-1},{0,0,0,0,0,1,0,0,1,0},
{0,0,0,0,0,0,1,0,1,0},{0,0,0,0,0,0,0,1,0,1},{0,0,0,0,0,1,0,1,0,0},
{0,0,0,0,0,0,1,0,0,1}}
I=ideal(projectivePointsByIntersection(A,R)) 
M=coker gens gb I, G=gb I
frustration=degree M-max apply(apply(subsets(apply(apply(apply
(toList ((set{1}**(set(1,-1))^**(hilbertFunction(1,M)-1))/splice)-
(set{0})^**(hilbertFunction(1,M)),toList),x->basis(1,M)*vector x),
z->ideal(flatten entries z)),1),ideal),x-> if #set flatten entries 
mingens ideal(leadTerm gens x)==1 and not quotient(I,x)==I 
then degree(I+x) else 0)--This gives the frustration index  
\end{verbatim}
\end{procedure}
\end{appendix}

\bibliographystyle{plain}

\begin{thebibliography}{10}

\bibitem{tohaneanu-matroid} B. Anzis, M. Garrousian and S. Toh\v{a}neanu, 
Generalized star configurations and the Tutte polynomial, 
J. Algebraic Combin. {\bf 46} (2017), no. 1, 165--187. 

\bibitem{bermejo-gimenez-simis} I. Bermejo, P. Gimenez and A. Simis, 
Polar syzygies in characteristic zero: the monomial case, 
J. Pure Appl. Algebra {\bf 213} (2009), 1--21. 

\bibitem{Britz} T. Britz, MacWilliams identities and matroid polynomials, 
Electron. J. Combin. {\bf 9} (2002), no. 1, Paper 19, 16 pp.

\bibitem{Britz-bond-cycle} T. Britz, Higher support matroids, 
Discrete Math. {\bf 307} (2007), no. 17-18, 2300--2308.

\bibitem{Britz-Britz} D. Britz, T. Britz, K. Shiromoto and H. K. S{\o}rensen, 
The higher weight enumerators of the doubly-even, self-dual
[48,24,12] code, IEEE Trans. Inform. Theory {\bf 53} (2007), no. 7,
2567--2571. 

\bibitem{BJMS} T. Britz, T. Johnsen, D. Mayhew and K. Shiromoto, Wei-type duality
theorems for matroids, Des. Codes
Cryptogr. {\bf 62} (2012), no. 3, 331--341.

\bibitem{Brylawski} T. Brylawski, Appendix of matroid cryptomorphisms, Theory of
matroids, pp. 298--316, Encyclopedia Math. Appl., {\bf 26}, Cambridge
Univ. Press, Cambridge, 1986. 

\bibitem{matroids} J. Chen, \textit{Matroids}, A package for
computations with matroids, version 0.9.6, 2018. 

\bibitem{Dankelmann-Key-Rodrigues} P. Dankelmann, J. D. Key and B. G.
Rodrigues,  Codes from incidence matrices of graphs,  Des. Codes
Cryptogr. {\bf 68}  (2013), no. 1-3, 373--393.

\bibitem{eisenbud-syzygies} D. Eisenbud, {\it The geometry of syzygies: A
second course in commutative algebra and algebraic geometry},
Graduate Texts in Mathematics {\bf 229}, Springer-Verlag, New York,
2005.

\bibitem{graphs} I. Gitler and R. H. Villarreal, {\it Graphs, Rings and
Polyhedra\/}, Aportaciones Mat. Textos, {\bf 35}, Soc. Mat. Mexicana,
M\'exico, 2011.  

\bibitem{rth-footprint}
 M. Gonz\'alez-Sarabia, J. Mart\'\i nez-Bernal, R. H. Villarreal and
 C. E. Vivares, Generalized minimum distance functions, J. Algebraic Combin., to appear.
%Preprint 2017, arXiv:1707.03285.

\bibitem{GRT} M. Gonz\'alez-Sarabia, C. Renter\'\i a and H.
Tapia-Recillas, Reed--Muller-type codes over the Segre variety,  
Finite Fields Appl. {\bf 8}  (2002),  no. 4, 511--518. 

\bibitem{mac2} D. Grayson and M. Stillman, 
{\em Macaulay\/}$2$, 1996. 
Available via anonymous ftp from {\tt math.uiuc.edu}.

\bibitem{Harary} F. Harary, On the notion of balance of a 
signed graph, Michigan Math. J. {\bf 2} (1953--1954), 143--146. 

\bibitem{Pellikaan} P. Heijnen and R. Pellikaan, Generalized Hamming
weights of $q$--ary Reed--Muller codes, IEEE Trans. Inform. Theory {\bf
44} (1998), no. 1, 181--196.

\bibitem{HKM} T. Helleseth, T. Kl{\o}ve and J. Mykkeltveit, The weight distribution
of irreducible cyclic codes with block length $n_1((q^l-1)/N)$, Discrete Math.
{\bf 18} (1977), no. 2, 179--211.

\bibitem{Johnsen-Roksvold-Verdure} T. Johnsen, J. Roksvold and H. Verdure, 
A generalization of weight polynomials to matroids, Discrete Math.
{\bf 339} (2016), no. 2, 632--645. 

\bibitem{JohVer} T. Johnsen and H. Verdure, 
Hamming weights and Betti numbers of Stanley--Reisner rings associated
to matroids, Appl. Algebra Engrg. Comm. Comput. {\bf 24} (2013), no.
1,  73--93. 

\bibitem{Johnsen} T. Johnsen and H. Verdure, Generalized Hamming
weights for almost affine codes, IEEE Trans. Inform. Theory {\bf 63}
(2017), no. 4, 1941--1953.

\bibitem{complexity} L. Khachiyan, E. Boros, K. Elbassioni, V. Gurvich and K. Makino, 
On the complexity of some enumeration problems for matroids, 
SIAM J. Discrete Math. {\bf 19} (2005), no. 4, 966--984. 

\bibitem{Klove-1992} T. Kl{\o}ve, Support weight distribution of linear 
codes, Discrete Math. {\bf 106/107} (1992), 311--316.


\bibitem{MacWilliams-Sloane} F. J. MacWilliams and N. J. A. Sloane, 
The Theory of Error-correcting Codes, North-Holland, 1977. 

\bibitem{hilbert-min-dis} J. Mart\'\i nez-Bernal, Y. Pitones 
and R. H. Villarreal, Minimum
distance functions of graded ideals   
and Reed--Muller-type codes, 
J. Pure Appl. Algebra {\bf 221} (2017), 251--275. 

\bibitem{circuitos}  J. Mart\'\i nez-Bernal and R. H. Villarreal, 
Toric ideals generated by circuits,  Algebra Colloq. {\bf 19} (2012), 
no. 4,  665--672. 

\bibitem{linear-codes}  J. Mart\'\i nez-Bernal, M. A. Valencia-Bucio
and R. H. Villarreal, 
Generalized Hamming weights of projective Reed--Muller-type codes
over graphs, Discrete Math., to appear.

\bibitem{olaya} W. Olaya--Le\'on and C. Granados--Pinz\'on, The
second generalized Hamming weight of certain Castle codes, Des. Codes
Cryptogr. {\bf 76} (2015), no. 1, 81--87.

\bibitem{oxley}{J. Oxley, {\it Matroid Theory\/}, Oxford 
University Press, Oxford, 1992.} 

\bibitem{Rock}{R. T. Rockafellar, The elementary vectors of a 
subspace of $R^N$, in {\it Combinatorial Mathematics 
and its Applications\/}, 
Proc. Chapel Hill Conf., Univ. North Carolina Press, 1969, 
pp. 104--127.}

\bibitem{sage} SageMath, the Sage Mathematics Software System
(Version 8.4), The Sage Developers, 2018, available from http://www.sagemath.org.

\bibitem{schaathun-willems} H. G. Schaathun and W. Willems, A lower bound on the weight
hierarchies of product codes, Discrete Appl. Math. {\bf 128} (2003),
no. 1, 251--261.

\bibitem{bowtie}{A. Simis, W. V. Vasconcelos and R. H. Villarreal, The
integral closure of subrings associated to graphs, 
J. Algebra {\bf 199} (1998), 281--289.}  

\bibitem{Simoes-Pereira} J. M. S. Sim\~oes-Pereira, On matroids on
edge sets of graphs with connected subgraphs as circuits II, 
{\it Discrete Math.} {\bf 12}  (1975), 55--78. 

\bibitem{sole-zaslavsky} P. Sol\'e and T. Zaslavsky, 
A coding approach to signed graphs, SIAM J. Discrete Math. {\bf 7}
(1994), no. 4, 544--553.


\bibitem{Sta1}{R. Stanley, Hilbert functions of graded 
algebras, Adv.
Math. {\bf 28} (1978), 57--83.}

\bibitem{Sta2}{R. Stanley, {\it Combinatorics and Commutative
Algebra\/}, Birkh{\"a}user Boston, 2nd ed., 1996.}

\bibitem{tsfasman} M. Tsfasman, S. Vladut and D. Nogin, {\it
Algebraic 
geometric codes{\rm:} basic notions}, Mathematical Surveys and
Monographs {\bf 139}, American Mathematical Society, 
Providence, RI, 2007. 

\bibitem{Vardy} A. Vardy, 
Algorithmic complexity in coding theory and the minimum distance
problem, STOC '97 (El Paso, TX), 92–-109, ACM, New York, 1999.

\bibitem{Vi3}{R. H. Villarreal, Rees algebras of edge ideals, Comm. 
Algebra {\bf 23} (1995), 3513--3524.}

\bibitem{handbook} R. H. Villarreal, Monomial algebras and polyhedral
geometry, 
in {\it Handbook of Algebra, Vol. 3\/} (M. Hazewinkel Ed.), Elsevier,
Amsterdam, 2003, pp. 257--314.

\bibitem{monalg-rev} R. H. Villarreal, {\it Monomial Algebras, Second Edition\/}, 
Monographs and Research Notes in Mathematics, Chapman and Hall/CRC, 2015.

\bibitem{Wei} V. K. Wei, Generalized Hamming 
weights for linear codes, 
IEEE Trans. Inform. Theory {\bf 37}  (1991),  no. 5, 1412--1418. 

\bibitem{wei-yang} V. K. Wei and K. Yang, 
On the generalized Hamming weights of product codes, 
IEEE Trans. Inform. Theory {\bf 39}  (1993),  no. 5, 1709--1713. 

\bibitem{welsh} D. J. A. Welsh, \textit{Matroid Theory}, L. M. S.
Monographs, no. 8, Academic Press, London-New York, 1976. 

\bibitem{Yang} M. Yang, J. Lin, K. Feng and D. Lin, 
Generalized Hamming weights of irreducible cyclic codes, IEEE Trans.
Inform. Theory {\bf 61} (2015), no. 9, 4905--4913. 

\bibitem{Zaslavsky-signed-graphs} T. Zaslavsky, Signed graphs, Discrete Appl. Math. {\bf 4}
(1982), no. 1, 47--74. Erratum, Discrete Appl. Math. {\bf 5} (1983), no. 2, 248. 


\bibitem{biased-graphs-I} T. Zaslavsky, Biased graphs. I. Bias,
balance, and gains, J. Combin. Theory Ser. B {\bf 47} (1989), no. 1,
32--52. 

\bibitem{Zaslavsky-biased-graphs-II} T. Zaslavsky, Biased graphs.
II. The three matroids, J. Combin. Theory Ser. B {\bf 51} (1991), no.
1, 46--72. 

\bibitem{Zaslavsky} T. Zaslavsky, Glossary of signed and gain graphs and allied areas, 
Electron. J. Combin.  {\bf 5} (1998), Dynamic Surveys 9, 41 pp. 

\end{thebibliography}

\end{document}